\documentclass[11pt,  oneside, a4paper]{article}
\usepackage[utf8]{inputenc}
\usepackage[a4paper, total={6.5in,9in}]{geometry}

\usepackage{graphicx} 
\usepackage{bbm}
\usepackage{amsfonts}
\usepackage{enumitem}
\usepackage{fullpage}
\usepackage{amsmath}
\usepackage[hidelinks]{hyperref}
\usepackage{mathtools}
\usepackage{lipsum}
\usepackage{blindtext}
\usepackage{titlesec}
\usepackage{amssymb}
\usepackage{amsthm}
\usepackage{rotating}
\usepackage{subcaption}
\usepackage{caption}
\usepackage{svg}
\usepackage{xcolor}
\usepackage{verbatim}

\usepackage[numbers,square,sort&compress]{natbib}

\usepackage{float}

\usepackage{balance}
\usepackage{breqn}
\usepackage{empheq}

\usepackage{thmtools}
\usepackage{threeparttable}
\usepackage{booktabs}
\usepackage{subcaption}
\usepackage{multirow}

\newtheorem{corollary}{Corollary}

\DeclareMathOperator*{\argmin}{arg\,min}
\DeclareMathOperator*{\argmax}{arg\,max}

\DeclareMathOperator{\erf}{\operatorname{erf}}
\DeclareMathOperator{\erfc}{\operatorname{erfc}}

\newcommand{\D}{\Delta}

\newcommand{\Th}{\mathcal{T}}

\newcommand{\Upol}{\textrm{U}^{\textrm{pol}}}
\newcommand{\Umax}{\textrm{U}^{\textrm{max}}}
\newcommand{\mupol}{\mu_{\textrm{pol}}}
\newcommand{\mumax}{\mu_{\textrm{max}}}

\newcommand{\iid}{\stackrel{\mathrm{iid}}{\sim}}

\newcommand{\expect}[1]{\mathbb{E}\!\left[#1\right]}

\newcommand{\paren}[1]{\left(#1\right)}
\newcommand{\braces}[1]{\left\{#1\right\}}
\newcommand{\brackets}[1]{\left[#1\right]}

\newcommand{\plign}{\Pr \left(\mathsf{A}_t\right)}
\newcommand{\Align}{\mathsf{A}_t}
\newcommand{\Mlign}{\mathsf{A}_t^c}
\newcommand{\MlignT}[1]{\mathsf{A}_{#1}^c}
\newcommand{\AlignT}[1]{\mathsf{A}_{#1}}
\newcommand{\PrDP}{\Pr^{\pi^*}}
\newcommand{\PrOracle}{\Pr^{\tilde{\pi}}}

\newcommand{\E}{\mathbb{E}}

\newcommand{\footremember}[2]{%
    \footnote{#2}
    \newcounter{#1}
    \setcounter{#1}{\value{footnote}}%
}

\title{Budgeting Discretion:\\ Theory and Evidence on Street-Level Decision-Making}

\author{Gaurab Pokharel\footremember{alley}{Virginia Tech. Email: gaurab@vt.edu} 
\and Sanmay Das\footremember{something}{Virginia Tech. Email: sanmay@vt.edu}
\and Patrick J. Flowler \footremember{somethingelse}{Washington University in Saint Louis. Email: pjfowler@wustl.edu}
}
\date{\today}

\begin{document}

\maketitle

\begin{abstract}
Street-level bureaucrats, such as caseworkers, school truancy officers, and border guards, routinely face the dilemma of whether to follow a rigid policy or exercise discretion based on professional judgment. However, frequent overrides threaten procedural consistency and introduce potential biases that may undermine service delivery, which explains why street-level bureaucracies often ration judgment-based overrides that must be managed as a finite resource over time. While prior work has modeled discretionary engagement as a static cost–benefit tradeoff, we lack a principled model of how finite discretion should be rationed over time, and therefore lack testable predictions for when overrides should rise or fall under real operational constraints.


We formalize discretion as a dynamic allocation problem in which an agent receives stochastic opportunities to improve upon a default policy and must choose when to spend a limited override budget $K$ over a finite horizon $T$. We show that decision overrides follow a simple dynamic threshold rule: use discretion only when the current opportunity exceeds a time- and budget-dependent cutoff. Our main theoretical contribution extends beyond this structural result by identifying a behavioral invariance. For location-scale families of improvement distributions (the distribution of incremental welfare gains realized by an override), the \emph{rate} at which an optimal agent exercises discretion is independent of the scale of potential gains and depends only on the distribution’s shape. This yields a unit-free prediction that discretionary spending rates vary systematically with the shape (in particular, the tail heaviness) of potential gains. 

The invariance result implies systematic differences in discretionary ``policy personality.'' When the distribution of gains is fat-tailed -- so rare, high-stakes cases contribute disproportionately to welfare -- optimal agents are patient, conserving discretion for outliers. Conversely, when gains are thin-tailed and more uniform, optimal agents spend earlier and more routinely. We illustrate these implications using operational data from a homelessness services system. We find that discretionary overrides track operational constraints: they are higher at the start of the workweek than on other weekdays, are sharply suppressed on weekends when intake is effectively offline, and shift with short-run capacity openings in scarce housing, consistent with strategic management of a finite discretionary budget under real-world constraints. These results suggest that discretion can be both procedurally constrained and welfare-improving when treated as an explicitly budgeted resource, providing a foundation for auditing override patterns and for designing decision-support systems that preserve beneficial judgment without forgoing oversight.


\end{abstract}


\section{Introduction}
\label{sec:introduction}

Street-level bureaucrats, from homelessness caseworkers and triage nurses to child welfare investigators and judges, constantly navigate the friction between rigid policy rules and the messy realities of individual cases \cite{lipsky1980street}. In response, governing organizations increasingly rely on rule-based or algorithmic policies to ensure consistency and procedural fairness \cite{starke2022fairnessperceptions, sutton2020cdss, pokharel2024discretionary, saxena2024algorithmic}. Yet, no policy is omniscient. Frontline decision-makers routinely encounter cases where the default policy misses contextual nuance, private information, and local constraints that a centralized rule cannot fully encode \cite{angelova2023algorithmic, cheng2022childwelfare, deartega2020humaninloop}. For these reasons, real allocation systems often include an explicit mechanism for \textit{discretion}: the professional authority to override a default recommendation in favor of a judgment call \cite{dietvorst2015algorithm,lipsky1980street}.

Discretion, however, does not come free. Overrides can be limited by downstream capacity (e.g., ICU beds, housing slots), managerial oversight, or administrative scrutiny \cite{green2022oversight,dietvorst2015algorithm,kesavan2020field}. A caseworker cannot upgrade every household to the most intensive program; a clinician cannot admit every borderline patient to the ICU. As a result, judgment is rationed—shaped by resource constraints, stress, and perceived need \cite{chang2021formal}—and bounded by physical capacity, organizational oversight, and operational bandwidth \cite{lipsky1980street,butler2021operational,kesavan2020field}. While prior work has modeled discretion as a largely static tradeoff between personal costs and public-service benefits \cite{chang2021formal}, the central friction in many street-level settings is inherently dynamic: using discretion today reduces the ability to use it tomorrow. This raises a core question central to street-level bureaucracy but typically left informal: \emph{when} should scarce discretion be exercised as opportunities to override arrive over time? We use homelessness services as our motivating example where a caseworker can override a default recommendation to move a household from Emergency Shelter (ES) to Transitional Housing (TH) -- a more capacity- and resource-constrained but stabilizing option \cite{kube2023fair, pokharel2024discretionary}.

We formalize this dilemma as \emph{Budgeted Discretion}. An agent has a budget of $K$ overrides over a finite horizon of $T$ periods. In each period $t$, the default policy recommends an action (e.g., assigning ES rather than TH), and the agent observes whether overriding that recommendation would improve welfare by an incremental amount $\Delta_t$ (the \emph{gain from discretion}). The gain $\Delta_t$ can either be positive (the agent can strictly improve the outcome by overriding) or effectively zero (the default recommendation already matches judgment).\footnote{In the main model, $\Delta_t$ is induced by a baseline welfare draw and an independent ``improvement'' draw; Section~\ref{sec:two-period-model} makes this construction explicit.} If the agent overrides, they spend one unit of budget and realize the improvement; if not, they conserve budget for future opportunities.  This framing connects sociological accounts of discretion under scarcity to a tractable finite-horizon control problem closely related to online selection with capacity constraints \cite{kleywegt1998dynamic,arlotto2020logarithmic,jiang2025tight}.

We first characterize optimal behavior and show that the policy is a state-dependent threshold rule: in period $t$, the agent overrides if and only if the realized gain $\Delta_t$ exceeds the continuation value of conserving budget. While the threshold structure is familiar in related dynamic allocation problems \cite{mccall1970economics,kleywegt1998dynamic}, our main focus is on the behavioral implications -- the frequency with which discretion is exercised. We study the \emph{rate} of overrides induced by the optimal thresholds and ask what features of the gain distribution determine it. 


We answer this with a behavioral invariance theorem. We show that, in additive-improvement models where gains $\Delta_t$ come from a location--scale family \cite{casella2002location}, the override rate induced by the optimal policy is invariant to affine rescaling of payoffs. We use ``override rate'' to mean the probability with which an optimal agent spends discretion when facing an opportunity (i.e., $\Pr(\Delta_t \ge \Th_{\tau,k})$ in state $(\tau,k)$). Intuitively, changing the units of measurement (e.g., dollars vs.\ thousands of dollars) shifts threshold \emph{levels} but does not change the chance that $\Delta_t$ clears the threshold at a given state. Once scale drops out, the natural next question is: what features of the gain distribution move behavior? Our theorem points to a clean answer: \emph{shape}.

Shape can vary in many ways, but in a budgeted forward-looking problem, one dimension dominates: whether the agent should expect ``once-in-a-while jackpots'' from overriding \cite{rothschild1970increasing, kennedy1991limit}. That is a question about the tail heaviness of the distribution of incremental gains.  With a thin-tailed gain distribution, improvements from discretion are relatively uniform: most misalignments yield moderate, comparable gains. With a fat-tailed gain distribution, improvements are highly heterogeneous: most opportunities yield small gains, but rare cases yield extremely large gains. A concrete example is housing allocation: many overrides may provide modest benefit (e.g., incremental improvements in safety or stability), while a small number avert catastrophic harm (e.g., imminent violence or acute medical risk), producing very large gains. Our results lead to a simple behavioral dichotomy. In thin-tailed settings, optimal agents spend discretion more routinely because waiting is unlikely to reveal a dramatically better opportunity. In fat-tailed settings, optimal agents are patient, conserving discretion for rare outliers because the option value of waiting is high.

We illustrate operational relevance using administrative data from the St.\ Louis Homeless Management Information System (HMIS). We operationalize the default policy by training a short, interpretable decision tree that captures the dominant assignment heuristics in the historical system \cite{SERDT, pokharel2024discretionary}, and we define discretion as a caseworker's deviation from this recovered baseline. We then aggregate decisions to the daily level and estimate how override probabilities vary with (i) short-run ``openings'' in capacity-intensive housing (proxied by recent exits), (ii) workload measures (recent assignments), and (iii) administrative timing.

Two empirical patterns stand out. First, the \emph{rate} of conservative discretion (downgrading from TH to ES relative to the recovered baseline) closely tracks short-run scarcity. Specifically, the conditional probability of downgrading a TH recommendation rises when recent Shelter exits create openings (facilitating substitution) and falls when recent TH exits expand availability (alleviating the need to ration). This suggests that caseworkers dynamically adjust their effective rationing standard as inventory conditions change, even as their baseline assignment heuristics remain fixed. Second, override activity tracks operational bandwidth: relative to typical weekdays, discretion peaks on Mondays (reflecting start-of-week batching) and drops on weekends (consistent with constrained intake capacity). Together, these patterns validate the model’s central mechanism: the rate at which discretion is exercised behaves like a finite resource whose use shifts with the short-run shadow value of scarce capacity and the timing constraints of street-level operations.


\subsection{Summary of contributions}
This paper makes three main contributions:
\begin{enumerate}
    \item \textbf{A model of scarcity-constrained discretion.} We formalize street-level overrides as a finite-horizon allocation problem in which discretion is a scarce resource, and each override yields an incremental welfare gain relative to a standing default policy.
    
    \item \textbf{A scale-free behavioral characterization.} We formalize a behavioral invariance result showing that, for location-scale families of gains, the optimal \emph{override rate} is invariant to payoff units and determined only by the distributional shape of the gain distribution. This yields a sharp behavioral prediction: override policies are more patient and outlier-targeting when gains are fat-tailed, and more routine when gains are thin-tailed.

    \item \textbf{Empirical grounding in homeless-service allocation.} Using the HMIS data, we recover a baseline heuristic policy using decision trees and document that overrides shift with short-run capacity openings and administrative timing. We show that capacity ``openings'' modulate the short-run opportunity cost of high-intensity placements, shifting the continuation value of conserving the budget, while timing (Monday vs weekend) variables capture predictable variation in the system's ability to process overrides. These patterns are consistent with the theoretical prediction that scarcity raises the shadow price of intervention, forcing agents to strategically ration overrides. Together, the evidence supports the paper's central premise: discretion behaves like a scarce, dynamically managed resource whose use adjusts to short-run operational constraints.
    

\end{enumerate}


\section{Related Work}
\label{sec:background}

Our model of budgeted discretion is connected to a broad literature on sequential decision-making under scarcity. Across these settings, a decision-maker observes noisy, per-period opportunities and must decide when to convert them into a small number of irreversible actions. A recurring structural feature is a state-dependent threshold: a scarce action is taken only when the realized opportunity exceeds a continuation-value benchmark. While threshold policies are a familiar structural feature of these problems, our contribution is a behavioral invariance result that characterizes how often an optimal policy acts as the scale of payoffs changes. The related work below traces the lineage of this viewpoint, moving from budgeted online allocation benchmarks to the threshold structure of stopping and search, then to comparative statics driven by tail behavior, and finally to invariance arguments that produce scale-free behavioral predictions.

\paragraph{Budgeted online choice and clairvoyant benchmarks.}

Discretion can be viewed as an online allocation problem with a limited inventory: each period reveals a stochastic \emph{opportunity} (the incremental gain from overriding a default), and the agent decides whether to spend one unit of a finite budget or save it. This abstraction closely parallels dynamic and stochastic knapsack models, where optimal policies compare realized rewards to the shadow value of remaining capacity \cite{kleywegt1998dynamic}. Related work in stochastic packing emphasizes the effectiveness of simple adaptive policies under uncertainty \cite{dean2008benefit,arlotto2020logarithmic}. A complementary benchmark tradition compares online decisions to an offline ``prophet'' who observes all realizations in advance \cite{hill1992survey,kleinberg2019matroid,jiang2025tight}. In contrast to this value-focused literature, we use the finite-horizon optimum to characterize \emph{the rate} at which discretionary capacity is consumed. This behavioral lens naturally leads to the optimal-stopping and search tradition, where thresholds are interpreted as reservation values balancing current gains against the option value of waiting.


\paragraph{Threshold structure with limited selections.}
The tension between acting now and waiting for future opportunities is central to optimal stopping and search. In canonical job search models, a decision-maker accepts an offer only if it exceeds a reservation value that equates the marginal benefit of stopping to the option value of continued search \cite{mccall1970economics,weitzman1979optimal}. Closely related formulations appear in best-choice and secretary problems, where a sequence of realizations is observed and decisions are irrevocable; in full-information variants, optimal behavior is governed by time-varying thresholds that become increasingly selective as the horizon shortens \cite{gilbert1966recognizing}. Multi-choice extensions allow multiple acceptances and yield thresholds indexed by remaining selections, mirroring the $(\tau,k)$ state of our model \cite{tamaki1979double,preater1994multiplechoice}. While these problems are often framed in terms of selecting the best realization or maximizing expected selected value, we use the same threshold structure to study how frequently an optimal agent exercises a scarce action when decisions are coupled through a finite budget. This framing also highlights that the shape of the opportunity distribution matters: the value of waiting depends on how much upside remains in the tail.


\paragraph{Tail geometry and selective action.}
A central comparative-statics insight from search and stopping is that greater dispersion can increase selectivity by amplifying the upside of waiting while leaving downside avoidable \cite{rothschild1970increasing}. In heavy-tailed environments, threshold-based stopping rules can effectively target rare extreme realizations, a phenomenon highlighted in asymptotic analyses of threshold-stopped processes \cite{kennedy1991limit}. Our work brings this intuition into a finite-budget setting: patience depends jointly on time and remaining discretionary capacity, and tail heaviness induces systematically more outlier-targeting use of discretion. 

\paragraph{Invariance and scale-free behavior.}
Robust sequential decision rules often arise from exploiting invariances of the underlying observation model. In best-choice problems with partial information, rank-based formulations yield stopping behavior that is insensitive to distributional parameters \cite{petruccelli1980best}. Related approaches, such as the odds algorithm, provide compact index rules with transparent comparative statics \cite{bruss2000odds}. Whereas these classic formulations often emphasize probability-of-success objectives (e.g., selecting the overall best realization or the last success), our objective is additive: after accounting for the baseline policy, the agent maximizes the expected sum of incremental gains $\Delta_t$ obtained from at most $K$ overrides. Building on this perspective, our main theoretical contribution is a behavioral invariance: for location--scale families of improvement distributions, the \emph{probability} that an optimal agent spends an override at a given state is invariant to affine rescaling of payoffs. While numerical thresholds necessarily depend on units, the induced \emph{action rate} is a unit-free object determined solely by distributional shape, yielding predictions about discretionary behavior across environments. These scale-free predictions are particularly useful in empirical settings, where payoffs are measured imperfectly and discretion is shaped by institutional constraints.

\paragraph{Discretion under organizational scarcity.}
Empirically, discretionary overrides are pervasive in high-stakes services. Behavioral studies document under-use of algorithmic recommendations following observed errors \cite{dietvorst2015algorithm} and strategic overrides to repair perceived biases \cite{saxena2024algorithmic,cheng2022childwelfare, Guo_Wu_Hartline_Hullman_2024, Guo_Wu_Hartline_Hullman_2025}. Economic field evidence similarly treats discretion as a channel for private information and local judgment \cite{angelova2023algorithmic,kesavan2020field}, but often models overriding as a static choice. Complementary work in public administration has proposed formal models of street-level discretion in which frontline agents choose a level of engagement by trading off public-service benefits against personal costs such as stress, time, and information acquisition, under constraints on resources and authority \cite{chang2021formal}. These models capture why discretion is costly and institutionally bounded, but do not address the intertemporal rationing problem that arises when overrides are treated as a finite resource. Classic accounts of street-level bureaucracy instead emphasize that discretion is fundamentally shaped by caseload pressure and downstream scarcity \cite{lipsky1980street,butler2021operational}. Our model formalizes this scarcity mechanism and aligns with service-operations work in which limited capacity induces protection levels and threshold-like upgrading rules \cite{Shumsky_Zhang_2009}. We use homeless-service allocation as an empirical setting to connect these ideas, showing that observed override behavior tracks capacity saturation and temporal constraints in a manner consistent with optimal management of discretion under scarcity \cite{kube2023fair}, and aligning with recent arguments that discretion and institutional context are central to the fairness and ethics of AI-driven societal decisions beyond purely automation-centric framings \cite{Pokharel_2025, Pokharel_Farabi_Fowler_Das_2025}.

Taken together, this literature scaffolds our analysis. We bring these threads together 
in a setting motivated by discretionary overrides, where the primitive is not choosing the best item but choosing when to deviate from a standing default under a hard budget. Accordingly, our threshold is not calibrated to maximize a success probability, but emerges as the shadow price of discretionary capacity in a welfare-maximizing DP. By focusing on the induced spending probabilities -- a behaviorally observable object -- we obtain a scale-free characterization of discretion that links the geometry of the gain distribution to systematic differences in how aggressively or conservatively agents consume an override budget.

\section{The Model and Decision Problem}
\label{sec:two-period-model}

To isolate the fundamental trade-off of budgeted discretion, we first analyze the minimal non-trivial case: a single decision-maker, a single override budget ($K=1$), and a short horizon of two periods ($T=2$). This captures the core tension: the conflict between a needs-based default policy and an impact-maximizing professional judgment.

\subsection{Setup and Allocative Goal}

In each period $t \in \braces{1, 2}$, two households arrive.  Each household $i \in \braces{1, 2}$ has a \textbf{baseline welfare}, $w_i(t)$, drawn i.i.d. from a distribution $\mathcal{D}_w$. Additionally, each household has a potential for \textbf{welfare improvement}, $I_i(t)$, if they receive the resource. These improvements are drawn i.i.d. from a distribution $I_\theta$ and are independent of the baseline welfares. The final welfare of a household receiving the unit is $w'_i(t) = w_i(t) + I_i(t)$. For any period, we use $m$ and $M$ to denote the indices of the households with the \textbf{m}inimum and \textbf{M}aximum baseline welfare, respectively:
    \begin{equation*}
        m = \argmin_{i \in \{1,2\}} \{w_i(t)\}, \qquad M = \argmax_{i \in \{1,2\}} \{w_i(t)\}.
    \end{equation*}

\subsection{The Gain from Discretion}\
\label{subsec:discretion_gain}
We operationalize the ``default policy'' as a strictly \textit{needs-based rule}: the algorithm always recommends assigning the resource to the most vulnerable household ($m$). This reflects the standard prioritization logic in services like homelessness or organ allocation \cite{hud2017_cpd1701, hrsa2025_liverallocation}. The total welfare under this policy is:
    \begin{equation}
        \label{eq:policy_utility}
        \Upol = (w_m + I_m) + w_M = w_m + w_M + I_m.
    \end{equation}

\noindent In contrast, a human decision-maker must balance competing priorities. For example, in homelessness, while the administrative rule is to prioritize the most vulnerable, evaluation is often in terms of success in keeping households stably housed in the future \cite{kube2023fair}, which corresponds to prioritizing the household with a higher increase in underlying welfare level. We thus model the decision-maker as being able to use discretion in order to increase total welfare. The welfare-maximizing assignment gives the resource to whichever household has the higher potential improvement. The maximum possible welfare in this case is:
    \begin{equation}
        \label{eq:discretion_utility}
        \Umax = w_m + w_M + \max\{I_m, I_M\}.
    \end{equation}

\noindent The value of a discretionary override is the difference between these two outcomes. We define the realized \textbf{Gain from Discretion} ($\Delta$) as:
    \begin{equation}
        \label{eq:delta}
        \D = \Umax - \Upol = \big( \max\{I_m, I_M\} - I_m \big) = \big(I_M - I_m\big)^+.
    \end{equation}
Note that the baseline needs $(w_m, w_M)$ cancel out, and the gain $\D$ is non-negative. $\D$ is strictly positive only when $I_M > I_m$ i.e., when the less vulnerable household ($M$) would benefit more from the resource than the neediest one ($m$). We call this event \textit{Misalignment}, denoted by $\Mlign$ (and its complement by $\Align$). If improvements are i.i.d. continuous variables, $\Pr(\Mlign) = \Pr(\Align) = 1/2$ by symmetry. Let $p$ be a short-hand notation for $\plign$.

The caseworker has available a single, indivisible unit of discretionary action ($K=1$). In the first period $(t=1)$, the caseworker observes $\D_1$ and must decide whether to spend the single budget unit. The opportunity cost of spending now is the \textit{expected} gain from having that option available in the future. This implies that the optimal policy must be a threshold rule. We now turn to characterizing this threshold.

\section{The Optimal Policy} 
\label{sec:optimal_policy}
We formalize the decision problem from Section~\ref{sec:two-period-model} as a finite-horizon dynamic program (DP). The state is $(\tau, k)$, where $\tau \in \{1, 2\}$ is the number of periods remaining and $k \in \{0, 1\}$ is the budget available. Let $V(\tau, k)$ denote the maximum expected total welfare from state $(\tau, k)$.

At each period, the system is either in \emph{alignment} (with probability $p$) where the gain from discretion is zero, or \emph{misalignment} (probability $1-p$) where deviating from the policy yields a positive improvement. The realized gain $\Delta_t$ is distributed as: 
\begin{equation}
    \label{eq:delta}
    \Delta_t =
    \begin{cases}
        0, & \text{if } \AlignT{t} \text{ occurs (probability } p),  \\
        \iid f_{\D}, & \text{if } \MlignT{t} \text{ occurs (probability } 1-p),
    \end{cases}
\end{equation}
where $f_{\D}$ is a strictly positive PDF supported on $(0, \infty)$ and $F_{\D}$ is its corresponding CDF. Let  $\Upol_t$ and $\Umax_t$ respectively be the period-$t$ utility under the policy-compliant and the welfare-maximizing allocation as defined in Section ~\ref{subsec:discretion_gain}, by construction, we have that $\Umax_t = \Upol_t + \D_t$. Take expectations and define $\mupol = \E[U^{\text{pol}}_t],$ and $\mumax = \E[U^{\max}_t] = \mupol + \E[\Delta_t].$ Because $\D_t = 0$ with probability $p$ and is strictly positive with probability $1-p$, we can write $\expect{\D_t} = (1-p) \expect{\D_t \mid \Mlign}$


\subsection{Threshold Structure}
Now we are ready to define the decision threshold $(\Th_{\tau, k})$. The optimal policy $\pi^*$ exercises discretion in the first period if and only if the total utility of spending is greater than the utility of saving Concretely, $\text{Override if } \Delta_1 \ge V(1,1) - V(1,0).$ Since $V(1,0) = \mupol$ (no budget remains) and $V(1,1) = \mumax$ (budget used optimally in final period), the optimal threshold $\Th_{2,1}$ is exactly the expected gain from discretion in a single period:
\begin{equation}
    \label{eq:optimal_threshold}
    \Th_{2,1} = \E[\Delta] = (1-p) \cdot \E[\Delta \mid \Mlign].
\end{equation}
Here, we explicitly decompose the expectation to highlight the role of misalignment. The threshold is the product of the frequency of opportunities ($1-p$) and the magnitude of those opportunities ($\E[\Delta \mid \Mlign]$). Rare misalignment lowers the threshold; frequent misalignment raises it. At $t=2$, if any budget remains, there is no future period, so the agent spends whenever a misalignment occurs (i.e., whenever $\Delta_2 > 0$).

\subsection{Spending Behavior and the Shape Parameter $\psi$}
While the numerical value of $\Th_{2,1}$ depends on the specific units of welfare, the agent's observable behavior is defined by the frequency with which they exercise discretion. Let $\psi$ denote the probability that the agent preserves their budget (does not spend) in the first period. The agent chooses to save the budget in two distinct scenarios:
\begin{enumerate}
    \item Alignment: The policy is already optimal ($\Delta_1 = 0$), so there is no gain to capture. This occurs with probability $p$.
    \item Misalignment: A misalignment occurs ($\Delta_1 > 0$), but the realized gain is insufficient to justify the opportunity cost ($\Delta_1 < \Th_{2,1}$). This occurs with probability $(1-p) \cdot F_{\Delta|\Mlign}(\Th_{2,1})$.
\end{enumerate}

Combining these disjoint events, we can express $\psi$ as $\psi = p + (1-p) F_{\Delta|\Mlign}(\Th_{2,1}).$ By the Law of Total Probability, this weighted sum is exactly the CDF of the \textit{unconditional} distribution of gains, evaluated at the threshold. Thus, we can simply define $\psi$ as:
\begin{equation}
    \label{eq:psi-def}
    \psi = F_{\Delta}(\Th_{2,1}) = \Pr(\Delta_1 \le \Th_{2,1}).
\end{equation}
This scalar $\psi$ serves as a sufficient statistic for the agent's behavior: the probability that an optimal agent exercises discretion in the first period is exactly $1 - \psi$.

\section{The Invariance and Shape-Dependence Theorem}
\label{sec:main_theorem}

In this section, we study what drives the rate of discretionary action under the optimal policy, summarized by the scalar $ \psi = F_{\Delta}(\mathcal{T}_{2,1})$, which captures the probability of \textbf{not} spending the budget at time-step 1. Assuming that per-household improvements come from a location-scale family \cite{casella2002location} with nonnegative support and finite mean  i.e., $I = a + sX$, where $a \in \mathbb{R}$ is a location parameter, $s>0$ is a scale parameter, and $X \sim \mathcal{D}$ is a ``base'' distribution capturing the shape, with CDF $F_{\mathcal{D}}$ and PDF $f_{\mathcal{D}}$, we prove that $\psi$ is (1) invariant to the units of the problem (location and scale) and (2) depends only on the shape $\mathcal{D}$ of the improvement distribution. Let $X, X' \iid \mathcal{D}$ be independent draws from the base shape, then define:


\begin{itemize}
    \item \textbf{Base threshold ($c$):} The expected positive difference between two draws,
    \begin{equation}
        c(\mathcal{D}) = \E[(X' - X)^+].
        \label{eq:c-def}
    \end{equation}

    \item \textbf{Difference CDF ($G$):} The CDF of the base difference $\Delta_0 = X' - X$,
    \begin{equation}
        G_{\mathcal{D}}(x) = \Pr(\Delta_0 \le x)
        = \int f_{\mathcal{D}}(u)\, F_{\mathcal{D}}(u + x)\,du.
        \label{eq:G-def}
    \end{equation}
\end{itemize}

\noindent We first provide a closed-form expression for the base threshold that depends only on the CDF of the base shape, then state the main shape-dependence theorem. 

\begin{restatable}{proposition}{PropositionOne}
\label{prop:c-closed-form}
Suppose $X \sim \mathcal{D}$ has non-negative support and finite mean. Then the base threshold is:
\begin{equation}
    c(\mathcal{D})
    = \int_{0}^{\infty} F_{\mathcal{D}}(x)\big(1 - F_{\mathcal{D}}(x)\big)\,dx.
    \label{eq:c-closed-form}
\end{equation}
\end{restatable}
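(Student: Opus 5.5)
The plan is to reduce $c(\mathcal{D})$ to an expression in the maximum of the two draws and then use the tail-integral formula for expectations of nonnegative variables. The key identity is
\[
(X'-X)^+ = \max\{X,X'\} - X ,
\]
which gives $c(\mathcal{D}) = \E[\max\{X,X'\}] - \E[X]$. Both expectations are finite because $\max\{X,X'\}\le X+X'$ and $\mathcal{D}$ has finite mean, so the subtraction is legitimate.

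Next, since $X$ and $\max\{X,X'\}$ are nonnegative, I will apply the tail formula $\E[Y]=\int_0^\infty \Pr(Y>x)\,dx$ to each. By independence, $\Pr(\max\{X,X'\}\le x) = F_{\mathcal{D}}(x)^2$. Hence
\[
c(\mathcal{D}) = \int_0^\infty \big(1-F_{\mathcal{D}}(x)^2\big)\,dx - \int_0^\infty \big(1-F_{\mathcal{D}}(x)\big)\,dx .
\]
Both integrals are finite, since the first is bounded by $2\E[X]$. Combining the integrands gives
\[
\big(1-F_{\mathcal{D}}^2\big)-\big(1-F_{\mathcal{D}}\big) = F_{\mathcal{D}}\big(1-F_{\mathcal{D}}\big),
\]
which is \eqref{eq:c-closed-form}.

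An alternative route would work directly from $c(\mathcal{D}) = \int_0^\infty \Pr(X'-X>y)\,dy$ and Fubini. That approach leads to a double integral that is messier than the maximum identity, so I prefer the latter.

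The main point needing care is justifying the split into two separately finite integrals, rather than manipulating a difference of possibly infinite quantities. Finite mean plus nonnegativity handles this. Ties $X=X'$ do not matter, since the positive part vanishes there, and no density assumption is needed.
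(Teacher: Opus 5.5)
Your proof is correct, but it takes a different route from the paper's.

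The paper conditions on $X=x$ and integrates by parts against the density $f_{\mathcal{D}}$ to obtain $\E[(X'-x)^+]=\int_x^\infty\big(1-F_{\mathcal{D}}(y)\big)\,dy$. The boundary term vanishes because $y(1-F_{\mathcal{D}}(y))\to 0$, which is where finite mean enters. It then swaps the order of integration over $\{x\le y\}$ to collapse the double integral to $\int F_{\mathcal{D}}(1-F_{\mathcal{D}})$.

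You replace all of this with three ingredients: the pointwise identity $(X'-X)^+=\max\{X,X'\}-X$, the tail formula, and $\Pr(\max\{X,X'\}\le x)=F_{\mathcal{D}}(x)^2$. There is no integration by parts, no boundary term, and no interchange of integrals. Nothing in your argument uses a density either, so it covers distributions with atoms, whereas the paper's written proof relies on $f_{\mathcal{D}}$. Your check that both integrals are separately finite is exactly the right technical point. Your decomposition $c(\mathcal{D})=\E[\max\{X,X'\}]-\E[X]$ also reads naturally in the paper's terms: it is the unit-free version of $\mathcal{T}_{2,1}=\mu_{\textrm{max}}-\mu_{\textrm{pol}}$.

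What the paper's route buys in return is generality in sign. It yields $c(\mathcal{D})=\int_{-\infty}^{\infty}F_{\mathcal{D}}(1-F_{\mathcal{D}})\,dx$ with no sign assumption on $X$, using nonnegativity only to trim the range of integration. Your argument uses nonnegativity to invoke the one-sided tail formula. The two-sided version $\E[Y]=\int_0^\infty\Pr(Y>x)\,dx-\int_{-\infty}^0\Pr(Y\le x)\,dx$ would remove that dependence just as easily.
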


\begin{proof}[Sketch of proof]
A full derivation is in Appendix~\ref{app:proof-c-closed-form}. 
Starting from $c(\mathcal{D}) = \E[(X' - X)^+]$ with $X,X' \iid \mathcal{D}$, we condition on $X=x$ and write
\begin{equation*}
    \E[(X' - X)^+ \mid X=x]
    = \int_{x}^{\infty} (y-x)\, f_{\mathcal{D}}(y)\,dy.
\end{equation*}
Integration by parts rewrites this in terms of the tail CDF $1 - F_{\mathcal{D}}(y)$, and we swap the order of integration. The resulting double integral collapses to the single CDF-only expression in \eqref{eq:c-closed-form}. 
The restriction to $[0,\infty)$ uses the non-negative support of improvements.
\end{proof}

\begin{restatable}[Shape dependence of optimal policy]{theorem}{TheoremOne}
\label{thm:shape_dependence}
Let the improvement $I$ follow a location–scale family $I = a + s \cdot X$ with $s>0$, where the base shape $\mathcal{D}$ has non-negative support and finite mean. Then:

\begin{enumerate}
    \item \textbf{Optimal threshold scales linearly with $s$.}  
    The two-period threshold $\mathcal{T}_{2,1} = \E[\D]$ is invariant to location $a$ and scales linearly with $s$:
    \begin{equation}
        \mathcal{T}_{2,1} = s \cdot c(\mathcal{D}),
    \end{equation}
    where the \emph{base threshold} $c(\mathcal{D})$ depends only on the shape $\mathcal{D}$ and is given by \eqref{eq:c-closed-form}.

    \item \textbf{Policy behavior is summarized by a scalar.}  
    All optimal spending probabilities are functions of a single scalar
    \begin{equation}
        \psi(\mathcal{D}) = F_{\D}(\mathcal{T}_{2,1}),
    \end{equation}
    which is the probability that a realized gain $\D$ falls below the threshold. This scalar is invariant to $a$ and $s$ and can be written purely in terms of the base shape as
    \begin{equation}
        \psi(\mathcal{D}) = G_{\mathcal{D}}(c(\mathcal{D})),
    \end{equation}
    where $G_{\mathcal{D}}$ is the CDF of the base difference $\D_0 = X' - X$.
\end{enumerate}
\end{restatable}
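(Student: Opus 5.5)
The plan is to push everything through the representation $\D = (I_M - I_m)^+$ with $I = a + sX$. Since the improvements $I_m, I_M$ are i.i.d. and independent of the baseline welfares that determine which index is $m$ and which is $M$, the pair $(I_m, I_M)$ is distributed as two independent draws from $I_\theta$.

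For the threshold, I would write $I_M - I_m = s(X' - X)$ with $X, X' \iid \mathcal{D}$, so the location $a$ cancels. Because $s>0$, the positive part commutes with scaling: $(s(X'-X))^+ = s(X'-X)^+$. Hence $\D \stackrel{d}{=} s\,\D_0^+$ with $\D_0 = X'-X$. Taking expectations and using \eqref{eq:optimal_threshold}, $\Th_{2,1} = \E[\D] = s\,\E[(X'-X)^+] = s\,c(\mathcal{D})$. Finiteness of this expectation follows from the finite mean of $\mathcal{D}$, and Proposition~\ref{prop:c-closed-form} supplies the closed form \eqref{eq:c-closed-form}. The same representation also shows that $p = \Pr(\D = 0) = \Pr(X' \le X)$ depends only on $\mathcal{D}$, and equals $1/2$ when $\mathcal{D}$ is continuous.

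For the second part, I compute $\psi = F_\D(\Th_{2,1}) = \Pr\big(s\,\D_0^+ \le s\,c(\mathcal{D})\big) = \Pr\big(\D_0^+ \le c(\mathcal{D})\big)$, dividing by $s>0$. Since $c(\mathcal{D}) \ge 0$, the event $\{\D_0^+ \le c\}$ coincides with $\{\D_0 \le c\}$: either $\D_0 \le 0 \le c$, or $\D_0 > 0$ and $\D_0^+ = \D_0$. Therefore $\psi = G_{\mathcal{D}}(c(\mathcal{D}))$, which is free of $a$ and $s$. The integral form \eqref{eq:G-def} then follows by conditioning on $X = u$ and using independence.

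To justify the claim that \emph{all} optimal spending probabilities are functions of $\psi$, I would note what the optimal policy does in each state. In state $(2,1)$ it spends with probability $1-\psi$. In state $(1,1)$, reached with probability $\psi$, it spends iff $\D_2 > 0$, which has probability $1-p$, and $p$ is shape-only by the previous paragraph. So the probability of using the budget over the horizon, $1 - \psi p$, depends only on $\mathcal{D}$.

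The main obstacle is the mild subtlety that $\D$ has an atom at $0$, so $F_\D$ is a mixed CDF. I would handle it with the positive-part argument above rather than through densities, and check that $c \ge 0$, so that the atom is counted inside $\{\D \le \Th_{2,1}\}$.
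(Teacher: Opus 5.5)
Your proposal is correct and follows the paper's proof essentially step for step: you cancel $a$ and factor $s$ out of $I_M - I_m$, take expectations to get $\Th_{2,1} = s\,c(\mathcal{D})$, and then rescale the CDF so that $s$ cancels at the threshold. Your check that $\{\D_0^+ \le c\} = \{\D_0 \le c\}$ because $c \ge 0$ makes rigorous a step the paper only asserts ``by definition'' (identifying $F_{\D}$ with the difference CDF). Likewise, your observation that $p$ is itself shape-only (equal to $1/2$ for continuous $\mathcal{D}$) justifies the ``all spending probabilities'' clause, which the paper's proof never addresses.
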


\begin{proof}[Sketch of proof]
    The full proof appears in Appendix~\ref{app:proof-shape-dependence}. The intuition is as follows. For \textbf{Part (1)}, note that under a location–scale family,
        $
            I = a + s \cdot X \Rightarrow
            \D = (I_M - I_m)^+ = \big(a + s \cdot X_M - (a + s \cdot X_m)\big)^+ = s  \cdot (X_M - X_m)^+.
        $
    The location parameter $a$ cancels from the gain difference, and the scale parameter $s$ factors out, so $\D = s \cdot (X_M - X_m)^+$. Taking expectations, the threshold $\mathcal{T}_{2,1} = \E[\D]$ therefore scales linearly with $s$ i.e., $\mathcal{T}_{2,1} = s \cdot \expect{(X_M - X_m)^+} = s \cdot c(\mathcal{D}),$ with $c(\mathcal{D})$ given by Proposition~\ref{prop:c-closed-form}. For \textbf{Part (2)}, we compare the \emph{scaled} gain $\D$ to the \emph{scaled} threshold $\mathcal{T}_{2,1}$. The CDF of $\D$ at $x$ is the CDF of $s \cdot \D_0^+$, where $\D_0$ is the base difference. Evaluating at $x = \mathcal{T}_{2,1} = s \cdot c(\mathcal{D})$ gives 
        \( \Pr(\D \le \mathcal{T}_{2,1})
            = \Pr\big(s  \cdot \D_0^+ \le s \cdot  c(\mathcal{D})\big)
            = \Pr\big(\D_0^+ \le c(\mathcal{D})\big)
            = G_{\mathcal{D}}(c(\mathcal{D})).\)
    The scale $s$ cancels from both sides of the inequality, and $a$ never enters. Thus $\psi(\mathcal{D}) = F_{\D}(\mathcal{T}_{2,1})$ is entirely shape-driven.
\end{proof}

\subsection{Characterization of Optimal Policy}
\label{subsec:policy_implications}

Theorem~\ref{thm:shape_dependence} shows that all policy behavior in the two-period, one-budget setting is driven by the ``shape scalar'' $\psi(\mathcal{D}) = F_{\Delta}(\mathcal{T}_{2,1}) = G_{\mathcal{D}}(c(\mathcal{D}))$. The following lemmas show how this single number determines observable behavior, both relative to an oracle benchmark and internally across periods.

\begin{restatable}[DP vs.\ Oracle Benchmark]{lemma}{LemmaOne}
\label{lem:dp_v_oracle}
Let $\pi^{*}$ be the optimal DP policy and $\tilde{\pi}$ the oracle policy with perfect foresight of $(\Delta_1,\Delta_2)$.
\begin{enumerate}
    \item At $t=1$, the DP policy is more conservative (less likely to spend) than the oracle if and only if $F_{\Delta}(\mathcal{T}_{2,1}) > \tfrac{1-p}{2}.$

    \item At $t=2$, the DP policy is more aggressive (more likely to spend) than the oracle if and only if $F_{\Delta}(\mathcal{T}_{2,1}) > \tfrac{1}{2}.$

\end{enumerate}
\end{restatable}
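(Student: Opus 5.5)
The approach is to compute, for each period, the unconditional probability that each policy spends the single unit, and then compare the two probabilities. By Theorem~\ref{thm:shape_dependence}, everything about $\pi^*$ is summarized by $\psi = F_{\Delta}(\mathcal{T}_{2,1})$. The optimal DP policy spends at $t=1$ exactly when $\Delta_1 > \mathcal{T}_{2,1}$, so $\PrDP(\text{spend at }1) = 1-\psi$. It spends at $t=2$ exactly when it saved at $t=1$ and a misalignment occurs at $t=2$. These two events are independent, since $\Delta_1,\Delta_2$ are i.i.d. Hence
\[
\PrDP(\text{spend at }2) = \psi\,(1-p).
\]

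The oracle $\tilde{\pi}$ sees both gains and places its unit on the larger one. I would organize its computation by the misalignment status of period $2$, because the oracle only has reason to wait when period $2$ offers a positive gain. If $\AlignT{2}$ occurs (probability $p$), then $\Delta_2=0$ and waiting is worthless, so the oracle spends at $t=1$. If $\MlignT{2}$ occurs (probability $1-p$), the oracle defers exactly when $\Delta_2$ beats $\Delta_1$. Here I would invoke exchangeability of the i.i.d. pair $(\Delta_1,\Delta_2)$ and symmetry of the improvement draws, exactly as in Section~\ref{subsec:discretion_gain} where $\Pr(\Mlign)=1/2$. This assigns probability $1/2$ to ``period $2$ is the better opportunity'' within this branch. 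Thus
\[
\PrOracle(\text{spend at }2)=\tfrac{1-p}{2}, \qquad \PrOracle(\text{spend at }1)=1-\tfrac{1-p}{2}=\tfrac{1+p}{2}.
\]

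The two parts then follow by direct comparison.
\begin{enumerate}
\item The DP is more conservative at $t=1$ iff $1-\psi < \tfrac{1+p}{2}$, which is equivalent to $\psi > \tfrac{1-p}{2}$.
\item The DP is more aggressive at $t=2$ iff $\psi(1-p) > \tfrac{1-p}{2}$. Dividing by $1-p>0$ (the nondegenerate case $p<1$), this is equivalent to $\psi>\tfrac12$.
\end{enumerate}
Both conditions are stated in terms of $\psi=F_\Delta(\mathcal{T}_{2,1})$ alone, consistent with Theorem~\ref{thm:shape_dependence}.

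The main obstacle is the oracle's tie-breaking, and specifically the atom of $\Delta$ at $0$. A naive computation of $\Pr(\Delta_2>\Delta_1)$ that keeps the atom gives $(1-p)(1+p)/2$ rather than $(1-p)/2$. The cutoffs $\tfrac{1-p}{2}$ and $\tfrac12$ in the statement hold only under the convention above: the oracle's comparison is made through the symmetric comparison of improvement draws conditional on period-$2$ misalignment, and ties or zero-gains are resolved by spending early. I would therefore state this convention explicitly at the outset. I would then verify carefully that the exchangeability step yields exactly the $1/2$ split within the $\MlignT{2}$ branch, since the rest of the argument is bookkeeping.
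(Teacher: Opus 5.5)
There is a genuine gap, and it is exactly the step you flagged. Once you condition on $\MlignT{2}$, the pair $(\D_1,\D_2)$ is no longer exchangeable: $\D_2>0$ on that branch, while $\D_1$ keeps its atom at $0$. A perfect-foresight oracle waits whenever $\AlignT{1}$ occurs, and this is not a tie, since $\D_1=0<\D_2$. It splits evenly only on $\MlignT{1}\cap\MlignT{2}$. Hence $\PrOracle(\text{spend at }2\mid\MlignT{2})=p+\tfrac{1-p}{2}=\tfrac{1+p}{2}$, not $\tfrac12$. The ``naive'' value $\tfrac{(1-p)(1+p)}{2}$ is in fact the correct unconditional probability.

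Your proposed convention does not rescue the $\tfrac12$. Making the oracle spend early on a zero gain turns it into a suboptimal benchmark, and it gives $\tfrac{1-p}{2}$ on that branch anyway. Your $t=1$ value $\tfrac{1+p}{2}$ has a matching defect. Read unconditionally, it has the oracle ``spend'' even when both periods are aligned, so your oracle spends with total probability $1$. Your DP figures $1-\psi$ and $\psi(1-p)$, by contrast, count only positive-gain overrides.

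This is not a harmless convention. Take the paper's exponential example ($p=\tfrac12$, $\psi\approx 0.697>\tfrac12$). The DP spends at $t=2$ with probability $\psi(1-p)\approx 0.348$, below the true oracle's $0.375$. So with $\psi$ read as the unconditional CDF, Part~2 is false and no argument can establish it.

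The missing idea is that $F_\D$ in the lemma must be the CDF of $f_\D$, i.e.\ the gain law conditional on misalignment introduced in Section~\ref{sec:optimal_policy}. It is not the unconditional $\psi$; the paper uses the same symbol for both, which is easy to trip over. Write $\phi=F_{\D\mid\Mlign}(\Th_{2,1})$, so that $\psi=p+(1-p)\phi$.

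The paper's proof compares spending probabilities conditional on $\MlignT{t}$. The oracle gives $\tfrac{1+p}{2}$ at both periods, which is precisely the atom-inclusive count. The DP gives $1-\phi$ at $t=1$ and $p+(1-p)\phi$ at $t=2$. The two inequalities then reduce to $\phi>\tfrac{1-p}{2}$ and $\phi>\tfrac12$. The paper's expressions $1-F_\D(\Th_{2,1})$ and $p+(1-p)F_\D(\Th_{2,1})$ are these conditional probabilities exactly under this reading.

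Your unconditional bookkeeping also works once the oracle is computed correctly. Comparing $1-\psi$ and $\psi(1-p)$ with $\tfrac{(1-p)(1+p)}{2}$ gives $\psi>\tfrac{1+p^2}{2}$ and $\psi>\tfrac{1+p}{2}$. After substituting $\psi=p+(1-p)\phi$, these become the stated cutoffs in $\phi$. Both policies override only on misaligned periods, so the conditional and unconditional comparisons differ by the common factor $1-p$, and no tie-breaking convention is needed.
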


    \begin{proof}[Sketch of proof]
        Full details are in Appendix~\ref{app:proof-dp-vs-oracle}. The oracle, seeing the future, spends at $t=1$ if $\Delta_1 \ge \Delta_2$. The DP, seeing only the present, spends if $\Delta_1 \ge \mathcal{T}_{2,1}$. Because the two periods are symmetric, the oracle’s first-period spending probability can be written in closed form as a function of $p$ and the distribution of $\Delta$, while the DP’s probability is a function of $F_{\Delta}(\mathcal{T}_{2,1})$. A direct comparison of these two expressions yields threshold conditions that reduce to the stated inequalities in terms of $F_{\Delta}(\mathcal{T}_{2,1})$ and $p$.
    \end{proof}
\begin{restatable}[Internal DP Behavior]{lemma}{LemmaTwo}
\label{lem:internal_dp}
    The optimal DP policy $\pi^{*}$ is more likely to spend its budget at $t=2$ than at $t=1$ (conditional on misalignment in each period) if and only if $F_{\Delta}(\mathcal{T}_{2,1}) > \tfrac{1-p}{2-p}.$

\end{restatable}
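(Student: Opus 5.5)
The plan is to compute the two conditional spending probabilities directly from the threshold structure in Section~\ref{sec:optimal_policy} and compare them. One point of notation first. The stated cutoff $\tfrac{1-p}{2-p}$ comes out exactly when $F_{\Delta}(\mathcal{T}_{2,1})$ is read as the CDF of the gain \emph{conditional on misalignment}, i.e.\ $q := F_{\Delta\mid\Mlign}(\mathcal{T}_{2,1})$. This is the same convention as $F_{\D}$ in \eqref{eq:delta}, where $f_{\D}$ is the density of the positive part. I will work with $q$ and only at the end relate it to the unconditional scalar $\psi = p + (1-p)q$ from \eqref{eq:psi-def}.

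\textbf{Step 1 (period 1).} By the threshold rule, $\pi^*$ spends at $t=1$ iff $\Delta_1 \ge \mathcal{T}_{2,1}$. Conditional on $\MlignT{1}$, $\Delta_1 \sim f_{\D}$, which has a density, so ties have probability zero. Hence
\[
\PrDP(\text{spend at } t=1 \mid \MlignT{1}) = 1 - q .
\]

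\textbf{Step 2 (period 2).} At $t=2$ the agent spends iff budget remains and $\Delta_2>0$. Conditional on $\MlignT{2}$, it therefore spends iff it did not spend at $t=1$. The periods are independent, so conditioning on $\MlignT{2}$ does not affect the event of saving at $t=1$. That event is the disjoint union of alignment at $t=1$ (probability $p$) and misalignment with $\Delta_1<\mathcal{T}_{2,1}$ (probability $(1-p)q$). Thus
\[
\PrDP(\text{spend at } t=2 \mid \MlignT{2}) = p + (1-p)q .
\]
This conditioning is the delicate step. Period 2 must be conditioned on its own misalignment only, not on misalignment in period 1 as well. Otherwise the $p$ term disappears, and one instead obtains the cutoff $q>1/2$.

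\textbf{Step 3 (comparison).} The DP is more likely to spend at $t=2$ iff
\[
p+(1-p)q > 1-q \iff q(2-p) > 1-p \iff q > \tfrac{1-p}{2-p},
\]
which is the claimed condition. Every step is an equivalence, since $2-p>0$. I would close with the remark that, in terms of the unconditional $\psi$, the condition reads $\psi > p + \tfrac{(1-p)^2}{2-p} = \tfrac{1}{2-p}$. By Theorem~\ref{thm:shape_dependence}, the comparison is then governed entirely by the shape $\mathcal{D}$ (given $p$) and is independent of $a$ and $s$.
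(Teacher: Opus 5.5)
Your proof is correct and follows the paper's argument exactly. The appendix computes the period-1 spending probability as $1-F_{\Delta}(\mathcal{T}_{2,1})$ and the period-2 probability as $p+(1-p)F_{\Delta}(\mathcal{T}_{2,1})$, each conditioned only on that period's own misalignment, and then rearranges the inequality using $2-p>0$. Reading $F_{\Delta}$ as the CDF conditional on misalignment is precisely what that computation implicitly uses, so your closing observation that the condition becomes $\psi > \tfrac{1}{2-p}$ for the unconditional scalar $\psi$ is a worthwhile clarification of the paper's overloaded notation.
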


\begin{proof}[Sketch of proof]
See Appendix~\ref{app:proof-internal-dp}. 
This lemma measures the DP’s \emph{internal patience}. At $t=1$, the agent spends whenever a misalignment occurs and $\Delta_1$ clears the threshold $\mathcal{T}_{2,1}$. At $t=2$, the agent spends whenever a misalignment occurs and budget remains, which happens precisely when $\Delta_1$ \emph{failed} to clear the threshold at $t=1$. Both conditional spending probabilities can be written as functions of $F_{\Delta}(\mathcal{T}_{2,1})$ and $p$. Comparing them and simplifying yields the inequality $F_{\Delta}(\mathcal{T}_{2,1}) > \frac{1-p}{2-p}$, which is exactly the condition under which the DP is more willing to spend later than earlier.
\end{proof}

\subsection{Verification of Theorem \ref{thm:shape_dependence}}
\label{sec:two-state-empirics}

We now demonstrate the consequences of Theorem~\ref{thm:shape_dependence}. We first verify the theorem's claims analytically and numerically, confirming that the patience scalar $\psi$ is invariant to scale. We then establish the behavioral link between distributional shape and policy: specifically, how ``fat tails'' compel the optimal agent to be patient.

\subsubsection{Analytical Verification}
\label{subsec:analytical-verification}

Theorem~\ref{thm:shape_dependence} says that $\psi(\mathcal{D})$ is a constant determined solely by the shape class. We confirm this analytically for two common location--scale families.

\paragraph{Exponential Improvements.}
Let the base improvements follow an Exponential distribution, $X \sim \mathrm{Exp}(\lambda)$. This is a pure scale family where the scale parameter is $b = 1/\lambda$. 
We invoke the standard result that the difference of two i.i.d. exponential random variables follows a Laplace distribution with location 0 and scale $b$. Thus, the base difference density is $f_{\Delta_0}(x) = \frac{1}{2b}e^{-|x|/b}$. Using this explicit form, we calculate the base threshold $c$:
\begin{equation*}
    c = \E[\Delta_0^+] = \int_0^\infty x \cdot \frac{1}{2b}e^{-x/b}dx = \frac{b}{2}.
\end{equation*}
We then evaluate the Laplace CDF, $F(x) = 1 - \frac{1}{2}e^{-x/b}$ (for $x \ge 0$), at this threshold:
\begin{equation*}
    \psi_{\text{Exp}} = F_{\text{Laplace}}(b/2) = 1 - \frac{1}{2}e^{-(b/2)/b} = 1 - \frac{1}{2}e^{-1/2} \approx 0.697.
\end{equation*}

\paragraph{Half-Normal Improvements.}
Let $X \sim \mathrm{HalfNormal}(\sigma)$. As derived in Appendix~\ref{app:half-normal-delta}, the difference of two Half-Normals does not follow a named textbook distribution, but its CDF can be derived via convolution. The base threshold is $c = \frac{\sigma}{\sqrt{\pi}}(2 - \sqrt{2})$ and substituting this into the derived CDF:
\begin{equation*}
    \psi_{\text{HN}} = 1 - \frac{1}{2}\operatorname{erfc}\left(\frac{2 - \sqrt{2}}{2\sqrt{\pi}}\right)^2 \approx 0.668.
\end{equation*}
In both cases, the optimal spending probability is not dependent on the parameters of the distribution, confirming Theorem~\ref{thm:shape_dependence}.

\subsubsection{Numerical Verification: Scale vs.\ Shape}
\label{subsec:numerical-verification}
We broaden this verification numerically. Theorem~\ref{thm:shape_dependence} implies that $\psi$ should remain fixed if we stretch the distribution (scale change) but move if we alter its tail profile (shape change). We estimate the threshold $\hat{c}$ and patience scalar $\hat{\psi}$ via Monte Carlo simulation. Remember, $\psi$ is the unconditional probability that the agent does \textbf{not} spend their override budget in period 1 (Equation ~\eqref{eq:psi-def}). For each distribution parameterization, we generate $N=10^6$ pairs of draws to estimate the threshold $\hat{c} = \frac{1}{N}\sum (X'_i - X_i)^+$. We then do an additional $N=10^6$ independent simulations using this threshold to empirically measure $\hat{\psi}$.

\begin{table}[ht]
    \centering
    \caption{Verification of Scale Invariance (Fixed Shape, Varying Scale). Empirical estimates $(\hat c,\hat\psi)$ are derived from $10^6$ Monte Carlo simulations.}
    \label{tab:scale-invariance}
    \resizebox{0.9\textwidth}{!}{%
    \begin{tabular}{llccc}
    \toprule
    \textbf{Distribution} & \textbf{Shape (Fixed)} & \textbf{Scale Param} & \textbf{Threshold $\hat{c}$} & \textbf{Patience $\hat{\psi}$}  \\
    \midrule
    Exponential & $k=1.0$ & $b=0.1 \to 10.0$ & $0.05 \to 5.002$ & \textbf{0.697} \\
    Half-Normal & N/A & $\sigma=0.2 \to 20.0$ & $0.066 \to 6.62$ & \textbf{0.667}  \\
    Gamma & $k=2.0$ & $\theta=0.5 \to 8.0$ & $0.374 \to 6.001$ & \textbf{0.676}  \\
    Lognormal & $\sigma=1.0$ & $e^\mu=1 \to 5$ & $0.856 \to 4.288$ & \textbf{0.736}  \\
    \bottomrule
\end{tabular}%
}
\end{table}

\paragraph{Scale Invariance.}
Table~\ref{tab:scale-invariance} empirically validates the scale-invariance prediction: changing the scale of the improvement distribution rescales the optimal cutoff $\hat c$ but leaves the unconditional patience scalar $\hat\psi$ unchanged. For example, within the Lognormal family with fixed shape $\sigma=1.0$, increasing $e^\mu$ from $1$ to $5$ multiplies the underlying improvements by $5$; correspondingly, the estimated cutoff increases from $\hat c\approx 0.856$ to $\hat c\approx 4.288$ (a $5\times$ change), while $\hat\psi$ remains $0.736$ at both endpoints.

\begin{table}[ht]
\centering
    \caption{Verification of Shape Dependence (Varying Shape). Empirical estimates $(\hat c,\hat\psi)$ are derived from $10^6$ Monte Carlo simulations.}
    \label{tab:shape-dependence}
    \resizebox{0.9\textwidth}{!}{%
    \begin{tabular}{llcccc}
    \toprule
    \textbf{Distribution} & \textbf{Shape Param} & \textbf{Tail Profile} & \textbf{Threshold $\hat{c}$} & \textbf{Patience $\hat{\psi}$} \\
    \midrule
    Gamma & $k=0.5$ & Heavy (L-shaped) & 0.637 & \textbf{0.732} \\
    Gamma & $k=2.0$ & Light (Bell-shaped) & 0.375 & 0.675 \\
    \midrule
    Weibull & $c=0.8$ & Heavy & 0.581 & \textbf{0.721} \\
    Weibull & $c=1.5$ & Light & 0.371 & 0.668 \\
    \midrule
    Lognormal & $\sigma=1.5$ & Very Heavy & 0.711 & \textbf{0.802} \\
    Lognormal & $\sigma=0.5$ & Light & 0.276 & 0.678 \\
    \bottomrule
\end{tabular}%
}
\end{table}

\paragraph{Shape Dependence}
For some families, changing the ``shape'' parameter also changes the overall level of $X$ (e.g., Gamma with fixed $\theta$ changes mean $k\theta$), which would move the cutoff $c(\mathcal D)=\E[(X'-X)^+]$ for unit reasons rather than tail behavior. To ensure Table~\ref{tab:shape-dependence} reflects \emph{shape} rather than scale, we normalize each family to have $\E[X]=1$ for every shape value. Specifically, for Gamma we set $\theta=1/k$; for Weibull we set the scale $\lambda=1/\Gamma(1+1/c)$ (numerically $\lambda\approx 0.883$ for $c=0.8$ and $\lambda\approx 1.108$ for $c=1.5$); and for Lognormal we set $\mu=-\sigma^2/2$ (so $\mu=-1.125$ for $\sigma=1.5$ and $\mu=-0.125$ for $\sigma=0.5$).

With mean fixed, heavier tails increase both the cutoff and the agent's patience. For the Weibull family, moving from a lighter tail ($c=1.5$) to a heavier tail ($c=0.8$) raises the cutoff from $\hat c\approx 0.371$ to $\hat c\approx 0.581$, and raises the unconditional patience scalar from $\hat\psi\approx 0.668$ to $\hat\psi\approx 0.721$. This directly changes how often discretion is exercised early: the probability of spending in period~1 drops from $1-\hat\psi\approx 0.332$ to $1-\hat\psi\approx 0.279$ (about a $16\%$ reduction in early overrides). The behavioral implication is visible at the decision level: a realized gain of $\Delta_1=0.5$ triggers an override under the lighter-tailed case ($0.5>0.371$) but is deferred under the heavier-tailed case ($0.5<0.581$), reflecting the option value of waiting for rarer opportunities.

\subsection{Implications for Caseworker Behavior}
\label{subsec:tails_and_patience}

The numerical results reveal a clear pattern: ``hold your fire'' (high $\psi$ -- the unconditional probability of not exercising discretion in the first time-step) is optimal when the distribution has a \textbf{fat tail}.
The mechanism is rooted in option value. A fat tail (e.g., Gamma $k=0.5$ or Pareto) implies that while most improvements are small, there is a non-negligible probability of a massive outlier. These rare outliers inflate the expected gain $\D$, pulling the threshold $\mathcal{T}_{2,1}$ far above the median. Consequently, the agent views a ``typical'' positive gain as disappointing relative to the potential of the future, leading to frequent inaction.

\begin{corollary}[Tail thickness and patience]
\label{cor:tails_and_patience}
Consider a family of base shapes $\mathcal{D}_\eta$ where the parameter $\eta$ increases tail thickness. If $\psi(\mathcal{D}_\eta)$ is increasing in $\eta$, then for a fixed alignment probability $p$:
\begin{itemize}
    \item As $\psi$ exceeds $\frac{1-p}{2-p}$, the DP becomes \emph{internally patient}, effectively saving budget for $t=2$ (Lemma~\ref{lem:internal_dp}).
    \item As $\psi$ exceeds $\frac{1-p}{2}$, the DP becomes more conservative than a full-information oracle (Lemma~\ref{lem:dp_v_oracle}).
\end{itemize}
\end{corollary}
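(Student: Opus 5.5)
The corollary follows from the two lemmas plus monotonicity, so the plan is to reduce each bullet to the corresponding lemma. First I fix $p$ and the family $\mathcal{D}_\eta$. By Theorem~\ref{thm:shape_dependence}, the scalar $F_{\Delta}(\mathcal{T}_{2,1})$ that appears in Lemmas~\ref{lem:dp_v_oracle} and~\ref{lem:internal_dp} equals $\psi(\mathcal{D}_\eta)=G_{\mathcal{D}_\eta}(c(\mathcal{D}_\eta))$. This quantity depends on $\eta$ only, and not on the location $a$ or the scale $s$. I will make explicit one point about conventions: $\psi$ should be read as the unconditional CDF at the threshold, which includes the atom of mass $p$ at zero, as in \eqref{eq:psi-def}. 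This is the same quantity the lemmas are stated in terms of. It is also why $p$ enters the cutoffs $\tfrac{1-p}{2-p}$ and $\tfrac{1-p}{2}$ as a fixed number that is independent of $\eta$.

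Next comes the monotonicity step. By hypothesis $\eta\mapsto\psi(\mathcal{D}_\eta)$ is increasing, so for any fixed constant $\kappa\in(0,1)$ the set $\{\eta:\psi(\mathcal{D}_\eta)>\kappa\}$ is an upper interval $(\eta_\kappa,\infty)$, possibly empty or everything. There are two cases to apply this to.

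\textbf{First bullet.} I take $\kappa=\tfrac{1-p}{2-p}$ and invoke Lemma~\ref{lem:internal_dp}. For every $\eta>\eta_\kappa$, the DP is strictly more likely to spend at $t=2$ than at $t=1$ conditional on misalignment, which is internal patience. For $\eta<\eta_\kappa$ the inequality reverses, by the ``only if'' direction of the lemma.

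\textbf{Second bullet.} I take $\kappa=\tfrac{1-p}{2}$ and invoke Lemma~\ref{lem:dp_v_oracle}(1). Beyond the crossing point, the DP's first-period spending probability $1-\psi$ falls below the oracle's.

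Both crossings are therefore monotone in tail thickness. Since $\tfrac{1-p}{2}\le\tfrac{1-p}{2-p}$ for $p\in[0,1]$, the oracle-conservativeness cutoff is reached no later than the internal-patience cutoff.

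The main obstacle is not logical, since the statement is conditional on monotonicity of $\psi$. The difficulty is making the phrase ``as $\psi$ exceeds'' precise. I would formalize it as the existence of thresholds $\eta_1\le\eta_2$ in the extended reals such that each property holds exactly for $\eta$ beyond the corresponding threshold. If $\psi$ is also continuous in $\eta$, the intermediate value theorem locates each crossing exactly. As a sanity check that the hypothesis is not vacuous, I would cite the mean-normalized Gamma, Weibull and Lognormal examples in Table~\ref{tab:shape-dependence}. Those examples show $\psi$ rising with tail weight, but I would not attempt a general proof of monotonicity.
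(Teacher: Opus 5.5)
Your argument is correct and is essentially the paper's own: the paper gives no separate proof, and each bullet is just the cited lemma (an if-and-only-if condition comparing $F_{\Delta}(\mathcal{T}_{2,1})$ with a cutoff that depends only on $p$) combined with the assumed monotonicity in $\eta$. One caution: the identification $F_{\Delta}(\mathcal{T}_{2,1})=G_{\mathcal{D}_\eta}(c(\mathcal{D}_\eta))$ from Theorem~\ref{thm:shape_dependence} holds only in the difference construction, where $p=G_{\mathcal{D}}(0)=\tfrac12$ is forced and $\psi\ge\tfrac12$ already clears both cutoffs ($\tfrac13$ and $\tfrac14$) for every $\eta$; so for a general fixed $p$ the monotonicity hypothesis should be placed directly on $p+(1-p)F_{\Delta\mid\Mlign}(\mathcal{T}_{2,1})$, which is what your stated convention effectively does.
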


We illustrate this regime shift in Table~\ref{tab:extreme-heavy}. For extreme heavy-tailed distributions (Pareto, Lognormal $\sigma=5$), $\psi$ approaches 1. The optimal policy converges to an ``Always Hold'' strategy, where the agent almost never spends in the first period, effectively betting everything on finding a black-swan event in period 2.

\begin{table}[H]
\centering
\caption{Extreme Regimes: Empirical estimates ($\hat{\psi}$) are derived from $10^6$ Monte Carlo simulations. We use a finite-sample excess kurtosis statistic as a proxy for tail heaviness (Normal: $\gamma_2=0$; larger values indicate heavier tails when the fourth moment is finite \cite{nistKurtosis})}
\label{tab:extreme-heavy}
\small
\begin{tabular}{lccc}
\toprule
\textbf{Distribution} & \textbf{Tail Behavior} & \textbf{Excess Kurtosis} & \textbf{Patience $\hat{\psi}$} \\
\midrule
Lognormal ($\sigma=5$) & Extreme Heavy & $\approx 8 \times 10^{5}$ & \textbf{0.992} \\
Pareto ($b=1.05$) & Power Law & $\approx 6 \times 10^{5}$ & \textbf{0.919} \\
Gamma ($k=10^4$) & Degenerate (Thin) & $\approx 0$ & \textbf{0.655} \\
\bottomrule
\end{tabular}%
\end{table}

Conversely, for thin-tailed distributions (Gamma $k \to \infty$), patience hits a ``floor'' of $\psi \approx 0.655$. Even with no tail risk, the agent retains a baseline level of patience due to the option value of the second draw, but they are significantly more willing to settle for moderate gains.

\section{The General Dynamic Program ($T \ge 2, K \ge 1$)}
\label{sec:general_dp}

Our goal is to find an optimal policy $\pi^*$ that maximizes total expected welfare over $T$ periods with a budget of $K$ discretionary overrides.
\begin{equation*}
    \max_{(d_1,\dots,d_T)}\sum_{t=1}^{T} \E[U_t(d_t)] \quad\text{s.t.}\quad \sum_{t=1}^{T} d_t \le K.
\end{equation*}
Let the state be $(\tau,k)$, where $\tau\in\{0,1,\dots,T\}$ is the number of periods remaining (including the current one) and $k\in\{0,1,\dots,K\}$ is the remaining discretionary budget, and $d_t$ is an indicator of whether discretion is used at time-step $t$. Let $V(\tau,k)$ denote the maximum expected \emph{total} welfare obtainable from state $(\tau,k)$.

\paragraph{Boundary Conditions:}
The dynamic program is anchored by the following boundary conditions: (i) with no time left $V(0,k)=0$ for all $k$, (ii) with no budget left $V(\tau,0)=\tau \cdot \mupol$ for all $\tau \ge 1$. Without overrides, the agent must accept the default policy utility $\mupol$ in every remaining period. And, (iii) $V(\tau,k)=\tau \cdot \mumax$ for all $k \ge \tau$ i.e. if the budget exceeds the remaining horizon, the agent can override every misalignment.


\paragraph{The Recursion}
\label{subsubsec:the_recursion}
We derive the Bellman equation by conditioning on the state of the system in the current period. Recall that, at any time-step, we can either be in alignment ($\Align$ with probability $p$, and the gain is zero $\Umax = \Upol$) or in misalignment ($\Mlign$ with probability $1-p$, gain is positive $\Umax = \Upol + \D$) with the official policy. The value function satisfies:

{\small\begin{equation*}
    V(\tau, k) = 
    \underbrace{p \cdot \expect{\paren{\Umax + V(\tau -1, k)} \mid \Align }}_{\text{alignment } \paren{\D = 0}} +
    \overbrace{\paren{1-p} \cdot \expect{ \max \braces{\paren{\Upol + V(\tau -1, k)}, \paren{\Umax + V(\tau -1, k -1)}}  \mid \Mlign}}^{\text{misalignment } \paren{\D \ge 0}} 
\end{equation*}}

\noindent Noting that $\mupol$ is constant across all terms, we can extract it (using $\expect{\Upol} = \mupol$) to simplify the recursion (see Appendix ~\ref{app:simplification_value_fxn} for full algebra):
\begin{empheq}[box=\fbox]{align}
    V(\tau,k) = \mupol + p V(\tau-1,k) + (1-p) \E_{\Delta|\Mlign} \Big[ \max \big\{V(\tau-1,k),  \Delta + V(\tau-1,k-1) \big\} \Big]
    \label{eq:recursion_main}
\end{empheq}
This equation has a clear interpretation: the agent always collects the baseline utility $\mupol$. Then, if a misalignment occurs (probability $1-p$), they capture the additional gain $\D$ only if it exceeds the ``cost'' of consuming a budget unit.

\paragraph{The Optimal Threshold Policy}
The maximization in \eqref{eq:recursion_main} implies that the optimal policy is a state-dependent threshold rule. The agent uses discretion ($d_t=1$) if and only if a misalignment occurs and the observed gain $\D(t)$ exceeds the marginal value of saving the budget:
\begin{equation*}
    d_t^*(\tau,k) =
        \begin{cases}
        1 & \text{if } \Mlign \text{ and } \D(t) \ge \Th_{\tau,k}, \\
        0 & \text{otherwise},
        \end{cases}
\end{equation*}
where the threshold $\Th_{\tau,k}$ is defined as $\Th_{\tau,k} = V(\tau-1,k) - V(\tau-1,k-1) \text{ for } \tau \ge 1,  k \ge 1$. This generalizes the $T=2$ result -- the agent spends if the current gain exceeds the marginal value of carrying that unit of budget into the future.

\subsection{Per-Period Spending Probabilities}
\label{subsec:dp_solution}

To analyze the optimal policy's behavior over time, we calculate the probability that the agent uses discretion at each step $t$. This requires two components: the probability of spending \emph{conditional} on being in a specific state, and the probability of \emph{reaching} that state.

\paragraph{Conditional Spending Probability ($q_{\tau,k}$).}
For a fixed state $(\tau, k)$, the agent spends if a misalignment occurs \emph{and} the gain clears the threshold $\mathcal{T}_{\tau,k}$ i.e. $q_{\tau,k} = \Pr(\text{spend} \mid \tau, k) = (1-p) \cdot \Pr(\Delta > \mathcal{T}_{\tau,k} \mid \Mlign).$ We can express this compactly using the unconditional CDF $F_{\Delta}$ using the expression $ q_{\tau,k} = 1 - F_{\Delta}(\mathcal{T}_{\tau,k})$ (See Appendix ~\ref{appen:simpl_six_pnt_one} for the algebra).

\noindent \textbf{Invariance Note:} This probability $q_{\tau,k}$ shares the same invariance property as the scalar $\psi$ in Theorem \ref{thm:shape_dependence}. Since the threshold $\mathcal{T}_{\tau,k}$ scales linearly with the distribution's scale parameter $s$ (e.g., doubling all rewards doubles the threshold), and the CDF $F_{\Delta}$ rescales its argument by $1/s$, the resulting probability remains constant. Thus, for a given base shape $\mathcal{D}$, the spending probability at every state is fixed, regardless of the units of welfare. This is verified numerically in Figure~\ref{fig:scale_invariance}.

\begin{figure}[ht]
    \centering
    \begin{subfigure}[b]{0.45\textwidth}
        \centering
        \includegraphics[width=\linewidth]{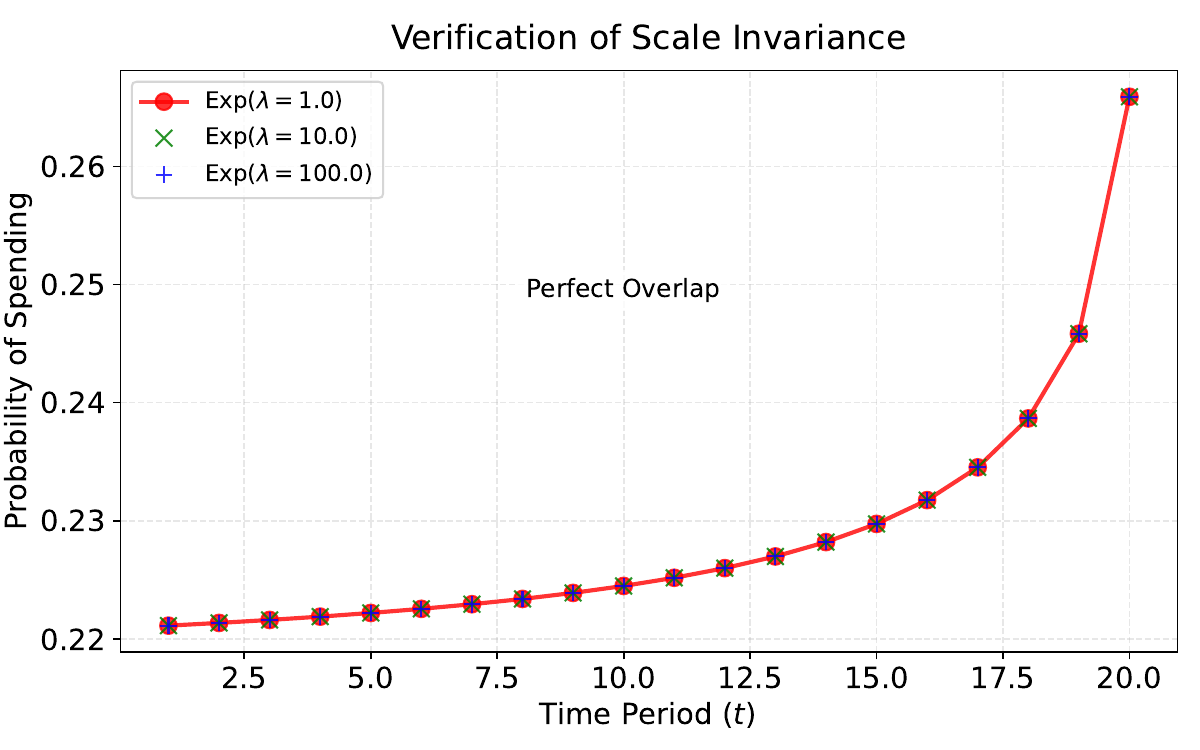}
        \caption{The optimal spending trajectory for an Exponential distribution remains identical despite varying the scale parameter $\lambda$ by two orders of magnitude. This confirms that the unconditional spending probability $q_{\tau,k}$ depends only on the distribution's shape, not its units.}
        \label{fig:scale_invariance}
    \end{subfigure}
    \hfill 
    \begin{subfigure}[b]{0.45\textwidth}
        \centering
        \includegraphics[width=\linewidth]{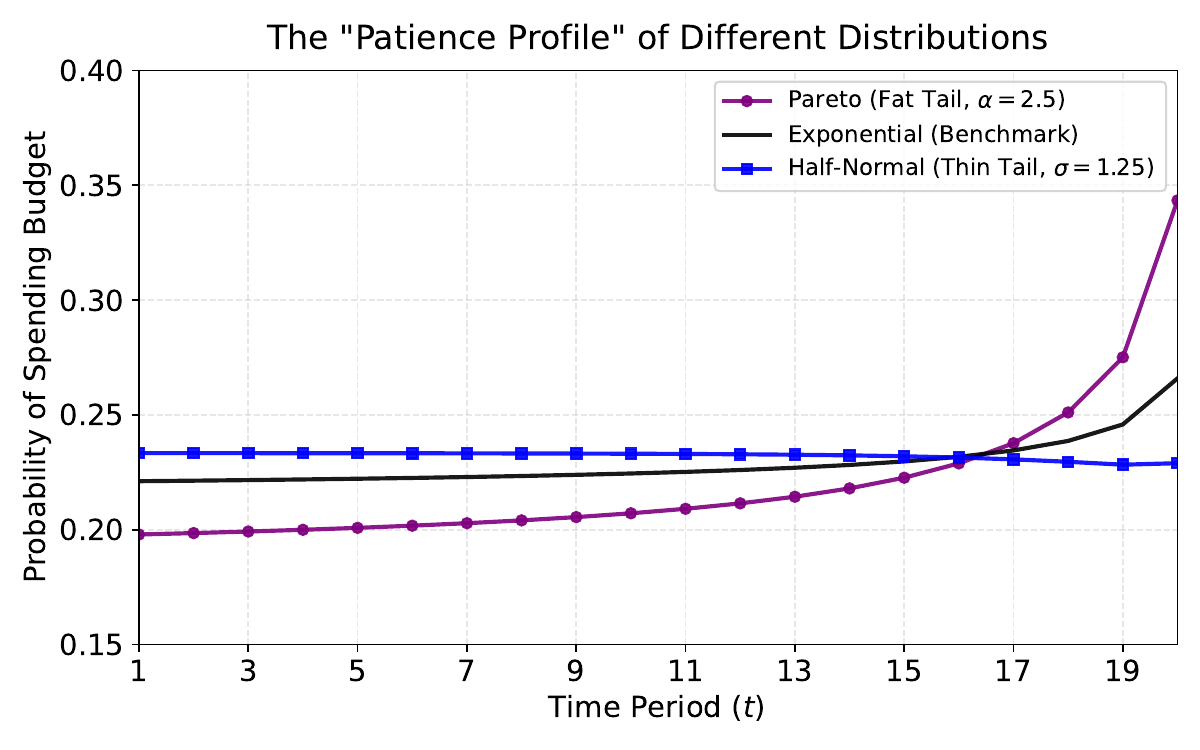}
        \caption{Spending probability over time for distributions with different tail shapes. Heavy-tailed distributions (e.g., Pareto) induce a ``patient'' policy that hoards budget until the final periods, whereas thin-tailed distributions (e.g., Half-Normal) spend relatively aggressively early on.}
        \label{fig:spending_profile_shape}
    \end{subfigure}
    
    \caption{Scale Invariance and Shape Dependence ($T=20, K=5$): We verify that the optimal spending probability is invariant to the scale of rewards (a) and is instead driven by the tail weight of the gain distribution (b).}
    \label{fig:invariance_and_shape}
\end{figure}


\paragraph{Evolution of Budget Distribution ($\pi_t$).}
Since spending is stochastic, the remaining budget $k$ at time $t$ is a random variable. Let $\pi_t(k)$ be the probability that the agent has exactly $k$ budget units remaining at the start of period $t$. We compute this via a forward pass:

\begin{enumerate}
    \item \textbf{Initialize:} $\pi_1(K)=1$, and $\pi_1(k)=0$ for $k < K$.
    \item \textbf{Forward Update:} For $t = 1 \dots T-1$, let $\tau = T - t + 1$. The budget evolves as:
    \begin{equation}
        \label{eq:pi_update_general}
        \pi_{t+1}(k) = \underbrace{\pi_t(k) \big(1 - q_{\tau,k}\big)}_{\text{Had } k \text{, didn't spend}} 
        + \underbrace{\pi_t(k+1) q_{\tau,k+1}}_{\text{Had } k+1 \text{, spent}}.
    \end{equation}
    \item \textbf{Boundary terms:} Define $\pi_t(K+1)=0$, $q_{\tau,0}=0$ (no budget), and $q_{\tau, k \ge \tau} = 1-p$ (ample budget, $\Th_{\tau,k}=0$)
\end{enumerate}

\paragraph{Aggregate Spending Profile.}
The unconditional probability of exercising discretion at time $t$ is the expectation over the budget states:
\begin{equation}
    \Pr(\text{spend at time }t) = \sum_{k=1}^K \pi_t(k) \cdot q_{\tau,k}.
\end{equation}
This yields a vector $\left( \Pr(\text{spend at } 1), \dots, \Pr(\text{spend at } T) \right)$ which describes the observable ``spending curve'' of the optimal policy. As shown in Figure~\ref{fig:spending_profile_shape}, the shape of this curve is determined by the tail of the improvement distribution. Fat-tailed distributions (like Pareto) induce a ``patient'' profile where spending is delayed to preserve option value, while thin-tailed distributions (like Half-Normal) exhibit immediate, relatively aggressive spending. Underlying these probabilities are the optimal thresholds $\Th_{\tau,k}$, which represent the opportunity cost of a budget unit (Figure ~\ref{fig:threshold_heatmap} in Appendix \ref{appen:th_viz} visualizes these costs as a heatmap for an example distribution.)

\section{Empirical Analysis of Operational Data}
\label{sec:empirical}

To complement our theoretical characterization of threshold-based override behavior, we examine placement records from a Homeless Management Information System (HMIS) to test whether discretionary deviations from a recovered baseline policy vary systematically with proxies for (i) scarce capacity, (ii) operational load, and (iii) temporal cycles. Because latent gains $\D$ are not directly observed, our empirical goal is to evaluate whether the correlates of overrides align with the model’s central mechanism, and not to estimate the model's primitives. More precisely, we ask what kind of distributional behavior does real-world discretion mimic within the context of the model? Is it earlier use of discretion (consistent with thin-tailed improvements), or rationing of discretion (consistent with fat-tailed improvements)? How are choices affected by the relative scarcity of resources?  

\subsection{Operational Setting: Agent, Policy, and Environment}
We map our theoretical framework to the context of homeless service allocation. In this setting, the ``Agent'' is a caseworker, and the environment mirrors our $T$-period, $K$-budget problem: Transitional Housing (TH) serves as the scarce, resource-intensive intervention, while Emergency Shelter (ES) serves as the default, capacity-flexible baseline. Caseworkers possess the discretionary authority to override the default assignment to place a client in scarce supportive housing or, alternatively, move clients from scarce housing to the more readily available option (ES in this case).

To measure this discretion, we must first define the ``Policy'' from which the agent deviates. Drawing on recent work by \citet{pokharel2024discretionary}, we operationalize the default policy not as a rigid bureaucratic manual—which did not exist in this historical period—but as the dominant set of heuristic rules governing the system. Street-level bureaucracy theory posits that agents operate under cognitive constraints by developing simplified heuristics to process routine cases quickly \cite{lipsky1980street}. We model these heuristics using the Short Explainable Rules - Decision Tree (SER-DT) algorithm \cite{SERDT}. SER-DT explicitly optimizes for interpretability by penalizing complexity, producing short, depth-constrained decision trees (we set this to max depth of 4) that capture systematic variance in assignments. We treat the prediction $P_i$ of these trees as the observable baseline: it represents the decision a ``typical'' bureaucrat in this time-period would make. Thus, using the same definition of discretion as \citet{pokharel2024discretionary}, a discretionary event is defined as any instance which the caseworker's actual decision differs from this baseline ($A_i \neq P_i$).


\subsection{Data Description and Filtering}
We analyze administrative records from the St.\ Louis Homeless Management Information System (HMIS), covering households assessed between 2008 and 2014. The data includes records of service entry and exit, linked to detailed household characteristics.

\paragraph{Time Context (The Role of Discretion).}
The study period (2008-2014)  represents a unique window for observing natural discretionary behavior. During this time, the homeless service system lacked more explicit, formulaic prioritization mandates (such as the VI-SPDAT \cite{orgcode2015vulnerability} introduced later). Instead, caseworkers relied on professional judgment to balance household vulnerability against expected outcomes. 

\paragraph{Feature Selection.}
To train the SER-DT policy model described above, we utilize 34 household features available upon request for services, such as age, gender, income, disability status, and family composition. These features represent the information set available to the agent when forming the baseline heuristic rules ($P_i$) (the list of all features used is detailed in Appendix ~\ref{appen:data_features}).


\subsection{Operationalization of Discretion}
For each household $i$ in the dataset, we observe two key variables: (i) the predicted assignment generated by the heuristic policy ($P_i$), and (ii) the caseworker's actual final placement decision ($A_i$).

We restrict our attention to the two primary intervention types: ES and TH. We exclude observations from other project types (i.e., rapid rehousing, prevention) to maintain consistency with the finite-budget resource allocation problem defined in Section \ref{sec:two-period-model}, as ES and TH provide a clear hierarchy on the intensity of intervention assignment with TH being the more resource-intensive intervention. We then define a discretionary event as an instance in which the caseworker overrides the default policy recommendation ($A_i \neq P_i$). We classify these overrides into three indicators to capture the direction of resource usage:

\begin{itemize}
    \item \textbf{Overall Discretion $\big( D_{i}^{\text{all}} \big)$:} An indicator that takes value 1 if the caseworker deviates from the policy in any direction.
        \begin{equation*}
            D_{i}^{\text{all}} = \mathbb{I}\{P_i \neq A_i\}.
        \end{equation*}
        
    \item \textbf{Upward Discretion / ``Upgrading'' $\big( D_{i}^{\uparrow}\big)$:} An indicator for instances where the policy recommends the default option (ES), but the caseworker ``upgrades'' the client to the scarce resource (TH).
    \begin{equation*}
        D_{i}^{\uparrow} = \mathbb{I}\{P_i = \text{ES} \wedge A_i = \text{TH}\}.
    \end{equation*}
    In the language of our dynamic model, these are instances where the caseworker ``spends'' the discretionary budget to capture a perceived utility gain.
    
    \item \textbf{Conservative Discretion / ``Rationing'' $\big( D_{i}^{\downarrow}\big)$:} An indicator for instances where the policy recommends the scarce resource (TH), but the caseworker picks the `default' option (ES).
        \begin{equation*}
            D_{i}^{\downarrow} = \mathbb{I}\{P_i = \text{TH} \wedge A_i = \text{ES}\}.
        \end{equation*}
    These cases represent a refusal to allocate the more intensive resource even when the policy recommends it, effectively ``saving'' the budget for future periods.
\end{itemize}

\paragraph{From Household Decisions to Daily Binomial Outcomes}

Our theoretical model characterizes optimal overrides through a dynamic threshold rule. Behaviorally, this appears as a time-varying probability of exercising discretion. Although the theory is formulated in abstract decision periods, the operational reality of the homeless services system--staffing schedules, intake workflows, and resource tracking--is organized by \emph{calendar days}. To bridge this gap, we aggregate individual placement decisions to the daily level and treat each day as a binomial experiment, measuring the frequency of discretionary overrides under varying system conditions. Let $t$ index calendar days and let $\mathcal{I}_t$ denote the set of households whose entry date falls on day $t$ and whose realized placement lies in $\{\text{ES},\text{TH}\}$. Define the daily volume $n_t = \lvert \mathcal{I}_t \rvert$, and the daily counts of discretionary overrides as: 

\begin{equation*}
    Y_t^{\text{all}} = \sum_{i \in \mathcal{I}_t} D_i^{\text{all}},
    \qquad
    Y_t^{\uparrow} = \sum_{i \in \mathcal{I}_t} D_i^{\uparrow},
    \qquad
    Y_t^{\downarrow} = \sum_{i \in \mathcal{I}_t} D_i^{\downarrow}.
\end{equation*}

\paragraph{Reference sets for directional discretion.} For directional discretion, the relevant denominator is not total daily volume, but the set of cases for which that direction is \emph{feasible}. We refer to these as eligible \textit{trial sets}. Concretely, Upgrading (ES$\rightarrow$TH) is only possible for households whose baseline recommendation is ES, while Rationing (TH$\rightarrow$ES) is only possible for households recommended for TH. Accordingly, we define daily trial-set sizes:

\begin{equation*}
    n_t^{\text{ES}}=\sum_{i\in\mathcal{I}_t}\mathbb{I}\{P_i=\text{ES}\},
    \qquad
    n_t^{\text{TH}}=\sum_{i\in\mathcal{I}_t}\mathbb{I}\{P_i=\text{TH}\}.
\end{equation*}
and, we model daily override counts as binomial draws: 
\begin{equation*}
    Y_t^{\text{all}} \sim \text{Binom}\paren{n_t, p_t^{\text{all}}},
    \qquad
    Y_t^{\uparrow} \sim \text{Binom}\paren{n_t^{\text{ES}}, p_t^{\uparrow}},
    \qquad
    Y_t^{\downarrow} \sim \text{Binom}\paren{n_t^{\text{TH}}, p_t^{\downarrow}},
\end{equation*}
restricting to days with active caseloads ($n_t>0$) and, for directional models, non-empty trial sets $\paren{n_t^{\text{ES}}>0 \text{ or }n_t^{\text{TH}}>0}$.
\subsection{Feature Engineering: Constraints as State Variables}
\label{subsec:feature_engineering}

To test the predictions of the Budgeted Discretion framework, we reconstruct daily proxies for the operational ``state'' visible to the agent. We engineer features intended to capture (i) workload and throughput, (ii) short-run slack in ES and TH, and (iii) temporal cycles. To preserve temporal ordering (and to avoid mechanical dependence on same-day decisions), all operational features are constructed using only information realized \emph{strictly prior} to day $t$.

\subsubsection{Rolling State (Immediate Operational Conditions)}
Our primary specification summarizes recent system conditions over a rolling 7-day window. For each day $t$, we compute total inflow (assignments) and outflow (exits) over $[t-7,t-1]$:
\begin{equation*}
    L_t^{\text{roll}}
    =
    \Big[
    \text{Assign}^{\text{ES}}_{t-7:t-1}, \qquad
    \text{Assign}^{\text{TH}}_{t-7:t-1},\qquad
    \text{Exit}^{\text{ES}}_{t-7:t-1},\qquad
    \text{Exit}^{\text{TH}}_{t-7:t-1}
    \Big]^\top.
\end{equation*}
Assignments proxy recent throughput and workflow burden. Exits proxy \emph{openings} (short-run slack) for each intervention. When exits are high, recent capacity has been freed; when exits are low, the intervention is tighter. In the language of our model, these exit-based measures provide a reduced-form proxy for tightness and, thus, for the effective threshold at which discretion is exercised.

\subsubsection{Weekly Block State (Administrative Cycle Robustness)}
As a robustness check, we replace rolling totals with totals from the prior full calendar week (lagged by one full week):
\begin{equation*}
    L_t^{\text{block}}
    =
    \Big[
    \text{Assign}^{\text{ES}}_{\text{prev wk}},\qquad
    \text{Assign}^{\text{TH}}_{\text{prev wk}},\qquad
    \text{Exit}^{\text{ES}}_{\text{prev wk}},\qquad
    \text{Exit}^{\text{TH}}_{\text{prev wk}}
    \Big]^\top.
\end{equation*}
This block specification captures coarser batching and reporting cycles. The results are generally robust to this alternative state definition; see Appendix~\ref{appen:glm_emperics}.

\subsubsection{Lagged TH Composition}
We also control for the prevailing ES/TH mix using the lagged (by one day) share of TH placements:
\begin{equation}
    \text{THShare}_{t-1} = \frac{\text{TH}_{t-1}}{n_{t-1}},
    \qquad
    \text{TH}_{t-1}=\sum_{i\in\mathcal{I}_{t-1}}\mathbb{I}\{A_i=\text{TH}\}.
\end{equation}
Using the lagged value avoids mechanical correlation with same-day placements while absorbing slow-moving shifts in the homeless service system's placement regime.

\subsubsection{Temporal Structure}
We include calendar controls, capturing systematic cycles in operations:
\begin{itemize}
    \item \textbf{Day-type indicators:} to separate a start-of-week effect from weekend operations, we use a three-level indicator $\text{DayType}(t)\in\{\text{Mon},\text{Tue--Fri},\text{Weekend}\}$ with Tue--Fri as the reference.
    \item \textbf{Month-of-year fixed effects:} indicators for $\text{month}(t)\in\{1,\dots,12\}$ with October as the reference, since October serves as the start of the federal fiscal year.
    \item \textbf{Federal holidays:} an indicator $H_t$ for U.S.~federal holidays.
\end{itemize}

\subsection{Model Specification}
\label{subsec:glm_spec}

We estimate the drivers of daily discretion using a generalized linear model (GLM) with a binomial family and logit link. For each outcome $j\in\braces{\text{all},\uparrow,\downarrow}$ and state specification $L_t\in\braces{L_t^{\text{roll}},L_t^{\text{block}}}$, we estimate:
\begin{equation*}
\label{eq:glm_daily}
    \log \paren{\frac{p_t^{j}}{1-p_t^{j}}}
    =
    \alpha^{j}_{\text{DayType}(t)}
    +
    \gamma^{j}_{\text{month}(t)}
    +
    \beta^{j}_{S}\cdot \text{THShare}_{t-1}
    +
    \beta^{j}_{H}\cdot H_t
    +
    (\beta^{j}_{L})^\top L_t.
\end{equation*}
Observations are weighted by their respective trial counts: $n_t$ for Overall Discretion, $n_t^{\text{ES}}$ for Upgrading, and $n_t^{\text{TH}}$ for Rationing. This ensures that days with higher relevant volumes contribute proportionally more to the likelihood.

\subsection{Results: Summary of Discretionary Drivers}

Table~\ref{tab:discretion_summary_daytype} summarizes selected odds ratios from our main rolling specification. Appendix~\ref{appen:weekly_glm}--~\ref{appen:full_glm} reports the full coefficient tables for rolling (all outcomes), the weekly block robustness specification, and an alternative day-of-week fixed-effects specification. 

\begin{table}[ht]
\centering
\caption{Summary of correlates of daily discretion (rolling specification; day-type baseline Tue--Fri).}
\label{tab:discretion_summary_daytype}
\begin{tabular}{l r@{\hspace{0.4em}}l r@{\hspace{0.4em}}l r@{\hspace{0.4em}}l}
\toprule
& \multicolumn{2}{c}{\textbf{Overall} $\big( p_t^{\text{all}}\big)$}
& \multicolumn{2}{c}{\textbf{Upgrading} $\big( p_t^{\uparrow} \big)$}
& \multicolumn{2}{c}{\textbf{Rationing } $\big( p_t^{\downarrow}\big)$} \\
\textbf{Predictor}
& \multicolumn{1}{c}{\textbf{OR}} & \multicolumn{1}{c}{\textbf{[95\% CI]}}
& \multicolumn{1}{c}{\textbf{OR}} & \multicolumn{1}{c}{\textbf{[95\% CI]}}
& \multicolumn{1}{c}{\textbf{OR}} & \multicolumn{1}{c}{\textbf{[95\% CI]}} \\
\midrule

\multicolumn{7}{l}{\textbf{Operational state (prior 7 days $[t-7,t-1]$)}}\\
THShare$_{t-1}$              & 1.070        & [0.869, 1.312] & 0.931        & [0.683, 1.271] & 1.150        & [0.853, 1.551]\\
ES exits ($[t-7,t-1]$)       & 0.995        & [0.982, 1.007] & 0.971**      & [0.953, 0.989] & 1.026**      & [1.007, 1.045]\\
TH exits ($[t-7,t-1]$)       & 0.994        & [0.978, 1.011] & 1.021$^\dagger$ & [0.998, 1.045] & 0.952***     & [0.928, 0.977]\\
\midrule

\multicolumn{7}{l}{\textbf{Timing (reference: Tue--Fri)}}\\
Mon                          & 1.239**      & [1.072, 1.432] & 1.140        & [0.919, 1.415] & 1.241*       & [1.006, 1.531]\\
Weekend                      & 0.778*       & [0.618, 0.980] & 0.813        & [0.574, 1.153] & 0.639**      & [0.465, 0.879]\\
\midrule

\multicolumn{7}{l}{\textbf{Seasonality (reference: October)}}\\
Nov                          & 0.697*       & [0.507, 0.957] & 0.880        & [0.544, 1.424] & 0.677$^\dagger$ & [0.427, 1.075]\\
Dec                          & 0.575***     & [0.417, 0.793] & 0.730        & [0.448, 1.190] & 0.547*       & [0.345, 0.869]\\
Jan                          & 0.779$^\dagger$ & [0.579, 1.048] & 1.125        & [0.720, 1.758] & 0.612*       & [0.398, 0.943]\\
Jun                          & 0.689*       & [0.509, 0.932] & 1.027        & [0.647, 1.630] & 0.522**      & [0.339, 0.804]\\
\midrule
\bottomrule
\end{tabular}

\vspace{0.35em}
\begin{minipage}{0.95\linewidth}\footnotesize
\emph{Notes.} Entries are odds ratios from binomial-logit GLMs estimated on daily aggregates. Rolling operational features use the prior-week window $[t-7,t-1]$. All models include the full set of month-of-year (October reference) as well as a federal-holiday indicator; the table reports selected coefficients for readability. Full coefficient tables, the weekly block specification, and an alternative day-of-week fixed-effects specification are reported in Appendix ~\ref{appen:weekly_glm} -- ~\ref{appen:full_glm}. Stars indicate significance: $^{***}p<0.001$, $^{**}p<0.01$, $^{*}p<0.05$, $^\dagger p<0.1$.
\end{minipage}
\end{table}

\paragraph{Early-week discretion and weekend operations.}
Relative to typical weekdays (Tue--Fri), overall discretion is higher on Mondays (OR $=1.239$, $p<0.01$) and lower on weekends (OR $=0.778$, $p<0.05$). This separation supports two operational regimes: a start-of-week ``batching'' effect followed by a weekend slowdown. The weekend reduction is especially pronounced for rationing (Weekend vs Tue--Fri: OR $=0.639$, $p<0.01$), consistent with institutional limits on weekend intake operations. The central intake hotline does not operate on weekends 
-- and with the possibility that TH$\rightarrow$ES changes require additional coordination or supervisory access that is less available outside standard business hours. In the full day-of-week specification reported in the Appendix, this weekend pattern is concentrated on Sundays, supporting the interpretation that it reflects an administrative-hours constraint rather than a smooth behavioral gradient. In the language of our model, the Monday spike is also suggestive of more front-loaded use of discretion -- the kind of earlier spending pattern our theory associates with lower option value and relatively thin-tailed improvement distribution.

\paragraph{Scarcity (openings), not raw workload, tracks substitution.}
Directional discretion covaries more with intervention-specific openings than with raw throughput. Rationing exhibits the clearest substitution pattern: recent ES exits are associated with slightly more rationing (OR $=1.026$, $p<0.01$), while recent TH exits are associated with less rationing (OR $=0.952$, $p<0.001$). This sign pattern matches a scarcity-based substitution logic. When TH slack increases (more TH exits), the need to conserve TH falls; when ES slack increases (more ES exits), it becomes easier to default down to ES. 

For upgrading, the evidence points in a complementary direction. Upgrading is significantly less likely when recent ES exits are high (OR $=0.971$, $p<0.01$), while the association with TH exits is positive but only marginally significant (OR $=1.021$, $p<0.1$). Together, these estimates are consistent with upgrades being most feasible (and thus most observed) when TH has recently opened up, and ES is relatively tighter. It is important to note that this is with respect to the use of discretion, which is already measured with availability in the denominator. That is, it is not purely mechanically saying that higher availability of TH, for example, leads to more assignment to TH; instead, higher availability of TH makes workers more likely to \emph{discretionarily assign} to TH.

\paragraph{Seasonality.}
We also see modest seasonality over the federal fiscal year 
In the overall discretion model, most monthly odds ratios fall below October odds -- with the clearest reductions appearing in November-December and again around June, although several effects are only weakly significant. Looking at directional discretion reveals a more structured pattern. Upgrading tends to be more common later in the year (most months have OR $>1$ relative to October, except Aug/Sept/Nov/Dec), while rationing is generally less common throughout the year (most months have OR $<1$). Taken together, these patterns are consistent with discretion being more ``front-loaded'' early in the fiscal year in aggregate, while the composition of discretion shifts over time -- with upgrades becoming relatively more likely as the fiscal year progresses and rationing remaining suppressed relative to October.

Taken together, these results suggest that the most salient correlates of discretionary deviations are operational timing (Monday vs weekend) and short-run slack (exits/openings), with seasonal cycles contributing secondary shifts in the aggregate rate of discretion.

\section{Discussion and Conclusion}
\label{sec:discussion-conclusion}

The theory of street-level bureaucracy has been influential in thinking about social service allocation, but attempts to model it formally are scarce. This paper is the first to model discretion as a \emph{dynamic} allocation problem in which a frontline decision-maker faces a finite budget of overrides and a stream of stochastic opportunities to improve upon a default policy. The optimal policy has a simple threshold structure, and our main theoretical result identifies a behavioral invariance: for location--scale families, the rate of discretionary spending is independent of units and depends only on the shape (tail profile) of the improvement distribution. This provides a unit-free way to compare discretionary patterns across settings where the scale of welfare gains may differ substantially.

Our empirical patterns are consistent with discretion being managed as an operationally constrained resource rather than a frictionless exercise of judgment. Discretion is elevated on Mondays and reduced on weekends, suggesting distinct regimes (start-of-week batching versus reduced on the weekends). Directional discretion aligns more clearly with intervention-specific openings than with raw assignment volumes, though some associations are modest and should be interpreted cautiously. We also observe seasonality over the fiscal year (October as the reference month), with evidence that discretion is relatively more front-loaded early in the year in aggregate and that the composition shifts over time.

Taken together, the theory and evidence point to a practical implication for decision-support systems: if overrides are scarce, then systems should help manage not only \emph{which} cases merit discretion, but also \emph{when} to deploy it given option value and workflow constraints. More broadly, a shape-based characterization clarifies why similar organizations may exhibit different discretionary ``patience'' even under comparable resources, while the operational results underscore that institutional constraints (capacity, staffing, coordination) can be first-order drivers of observed override behavior.



\section*{Acknowledgements}
We are grateful for support from NSF Award 2533162. We also thank the various community partners who helped conceptualize the challenges facing the delivery of homeless services, as well as their ongoing efforts to support local families.


\bibliographystyle{unsrtnat}
\bibliography{references}

\appendix
\newpage

\section{Full Proofs of the Two Period, Single Budget Case}

In this appendix, we provide the proofs for the results stated in Section~\ref{sec:main_theorem}. First let us do a quick recap of the setup. We consider the \textbf{two-period, one-budget} setting ($T=2, K=1$). The outcomes in each period are independent and identically distributed (i.i.d.).

\begin{itemize}
    \item \textbf{States and Gain:} At each period $t \in \{1,2\}$, an event may be in \textit{alignment} ($\Align$) with probability $p$, or \textit{misalignment} ($\Mlign$) with probability $1-p$. The incremental gain from using discretion, $\D = \Umax - \Upol$, is $0$ on alignment and strictly positive on misalignment.

    \item \textbf{Location-Scale Assumption:} We assume household improvements follow a location--scale family $I = a + sX$, where $a \in \mathbb{R}$, $s > 0$, and $X \sim \mathcal{D}$ is a base shape distribution. We denote the CDF of the base shape by $F_{\mathcal{D}}$ and the CDF of the base difference ($\Delta_0$) by $G_{\mathcal{D}}$. The base threshold is defined as $c(\mathcal{D}) = \E[\Delta_0^+]$.
    
    \item \textbf{Gain Distribution:} In each period $t$, the realized gain from discretion is defined as $\Delta_t = (I_M - I_m)^+$. The distribution of $\Delta_t$ is a mixture:
    \begin{itemize}
        \item With probability $p$ (Alignment), $\Delta_t = 0$.
        \item With probability $1-p$ (Misalignment), $\Delta_t$ is drawn from a continuous distribution $f_{\Delta}$ on $(0, \infty)$.
    \end{itemize}
    
    \item \textbf{Decision Policy ($\pi^*$):} The optimal dynamic programming (DP) policy is threshold-based. At $t=1$, the caseworker spends the budget if and only if a misalignment occurs and the realized gain $\D_1$ exceeds a threshold $\Th_{2,1}$. At $t=2$, the threshold is $\Th_{1,1}=0$, meaning the budget is spent on any misalignment, if available.
    
    \item \textbf{Last Period Values:} With one period remaining, the expected utility is $\E[U] = \mumax$ if budget is available ($k=1$) and $\E[U] = \mupol$ if not ($k=0$). The marginal value of the budget is the difference, which defines the optimal threshold for the first period:
    \begin{equation*}
        \Th_{2,1} = V(1,1) - V(1,0) = \mumax - \mupol.
    \end{equation*}

    \item \textbf{Key Theoretical Objects: } The proofs rely on two fundamental quantities derived from the optimal policy
    \begin{itemize}
        \item \textbf{Optimal Threshold ($\mathcal{T}_{2,1}$):} The expected gain from a single period $\mathcal{T}_{2,1} = \E[\Delta_t]$

        \item \textbf{Patience Scalar ($\psi$):} The unconditional probability that a realized gain falls below the optimal threshold,$\psi = F_{\Delta}(\mathcal{T}_{2,1}) = \Pr(\Delta_t \le \mathcal{T}_{2,1}).$
    \end{itemize}
\end{itemize}

\subsection*{Conditional Spending Probabilities}

To prove  some of the lemmas, we first establish the conditional spending probabilities for the optimal DP policy ($\pi^*$) and a perfect-foresight Oracle benchmark ($\tilde{\pi}$).

\begin{itemize}
    \item \textbf{Oracle Policy ($\tilde{\pi}$):} An oracle sees both $\D_1$ and $\D_2$ in advance. It spends at time $t$ if either (i) only period $t$ misaligns, or (ii) both periods misalign and $\D_t > \D_s$ (where $s=3-t$). Because $\D$ is a continuous i.i.d. variable, the Probability Integral Transform implies $\Pr(\D_t > \D_s) = \frac{1}{2}$.
    \begin{equation}
        \PrOracle \paren{\text{spend at }t \mid \MlignT{t}} = p \cdot 1 + (1-p) \cdot \frac{1}{2} = \frac{1+p}{2}.
        \label{eq:oracle_prob}
    \end{equation}
    
    \item \textbf{DP Policy ($\pi^*$):} The DP only sees the current state.
    \begin{itemize}
        \item At $t=1$, it spends if $\D_1$ clears the threshold:
        \begin{equation}
            \PrDP \paren{\text{spend at }1 \mid \MlignT{1}} = \Pr(\D_1 \ge \Th_{2,1}) = 1 - F_\D(\Th_{2,1}).
            \label{eq:dp_prob1}
        \end{equation}
        \item At $t=2$, it spends upon any misalignment \textit{if the budget is available}. The budget is available if $t=1$ was an alignment (prob. $p$) or if it was a misalignment but the gain was too low to spend (prob. $(1-p)F_\D(\Th_{2,1})$).
        \begin{equation}
            \PrDP \paren{\text{spend at }2 \mid \MlignT{2}} = p + (1-p)F_\D(\Th_{2,1}).
            \label{eq:dp_prob2}
        \end{equation}
    \end{itemize}
\end{itemize}

\subsection{Proof of Proposition \ref{prop:c-closed-form}}
\label{app:proof-c-closed-form}
\PropositionOne*

We can derive a compact expression for the base threshold $c(\mathcal{D})$.
Let $X, X' \iid \mathcal{D}$.
By the law of total expectation, $c(\mathcal{D}) = \expect{(X'-X)^+} = \expect{\expect{(X'-x)^+ \mid X=x}}$.
For a fixed $x$, the inner conditional expectation is:
\begin{equation*}
    \expect{(X'-x)^+ \mid X=x} = \int_{x}^{\infty} (y-x) f_{\mathcal{D}}(y)dy.
\end{equation*}
We use integration by parts. Let $g(y) = 1 - F_{\mathcal{D}}(y)$, so $g'(y) = -f_{\mathcal{D}}(y)$. Then:
\begin{align*}
    \int_{x}^{\infty} (y-x) f_{\mathcal{D}}(y)dy 
    &= -\int_{x}^{\infty} (y-x) g'(y)dy \\
    &= -\Big[(y-x)g(y)\Big]_{x}^{\infty} + \int_{x}^{\infty} g(y)dy \\
    &= 0 + \int_{x}^{\infty} (1 - F_{\mathcal{D}}(y))dy. \tag{since $\lim_{y\to\infty} y(1-F_{\mathcal{D}}(y))=0$}
\end{align*}
This equality holds assuming the mean is finite.
Now, we take the outer expectation over $X=x$:
\begin{equation*}
    c(\mathcal{D}) = \expect{\int_{x}^{\infty} (1 - F_{\mathcal{D}}(y))dy} = \int_{-\infty}^{\infty} f_{\mathcal{D}}(x) \left( \int_{x}^{\infty} (1 - F_{\mathcal{D}}(y))dy \right) dx.
\end{equation*}
The domain of integration is $\braces{(x,y): -\infty < x \le y < \infty}$. We swap the order of integration:
\begin{align*}
    c(\mathcal{D}) &= \int_{-\infty}^{\infty} (1 - F_{\mathcal{D}}(y)) \left( \int_{-\infty}^{y} f_{\mathcal{D}}(x)dx \right) dy \\
    &= \int_{-\infty}^{\infty} (1 - F_{\mathcal{D}}(y)) \cdot F_{\mathcal{D}}(y) dy.
\end{align*}
This gives the compact expression for $c$ in terms of the CDF alone:
\begin{equation*}
    c(\mathcal{D}) = \int_{-\infty}^\infty F_{\mathcal{D}}(x)(1 - F_{\mathcal{D}}(x)) dx.
\end{equation*}
Because improvements cannot be negative, $X$ has non-negative support, the integral is from $0$ to $\infty$.
\begin{equation*}
    \boxed{c(\mathcal{D}) = \int_{0}^\infty F_{\mathcal{D}}(x)(1 - F_{\mathcal{D}}(x)) dx.}
\end{equation*}

\subsection{Full Proof of Theorem \ref{thm:shape_dependence}}
\label{app:proof-shape-dependence}
\TheoremOne*

\begin{proof}

We prove the theorem by showing that for any continuous base distribution $\mathcal{D}$ with finite mean, and for any location-scale transform defined by parameters $(a, s)$ with $s>0$, the key probability $\psi$ is invariant. Let $I_1, I_2$ be two independent draws from the location-scale family $I = a + sX$, where $X_1, X_2 \iid \mathcal{D}$ are drawn from the base shape.
We proceed in four steps.

\noindent \emph{Step 1: Differences cancel location and scale linearly.}
The signed difference between two realizations under the location-scale transform is:
\begin{equation*}
    I_2 - I_1 = (a+sX_2)-(a+sX_1) = s(X_2-X_1) = s\Delta_0.
\end{equation*}
The location parameter $a$ cancels out entirely, and the scale $s$ factors out. Consequently, the realized gain $\Delta = (I_M - I_m)^+$ scales as $\Delta = s \Delta_0^+$.

\noindent \emph{Step 2: The optimal threshold scales by $s$.}
The optimal threshold $\mathcal{T}_{2,1}$ is defined as the expectation of the realized gain $\Delta$. Since $s>0$:
\begin{equation*}
    \mathcal{T}_{2,1} = \expect{\Delta} = \expect{s \Delta_0^+}
    = s \cdot \expect{\Delta_0^+} = s \cdot c(\mathcal{D}).
\end{equation*}
This establishes Part 1 of the theorem: the threshold is the base threshold $c(\mathcal{D})$ scaled linearly by $s$.

\noindent \emph{Step 3: The CDF of the difference rescales its argument.}
Let $G_{I}$ be the CDF of the difference $I_2 - I_1$. Using the result from Step 1:
\begin{equation*}
    G_{I}(x) = \Pr(I_2 - I_1 \le x) = \Pr(s\Delta_0 \le x)
    = \Pr\Big(\Delta_0 \le \frac{x}{s}\Big)
    = G_{\mathcal{D}}\Big(\frac{x}{s}\Big).
\end{equation*}

\noindent \emph{Step 4: Evaluate at the threshold to obtain invariance.}
We now combine these results to find the spending probability $\psi$. By definition, $\psi$ is the probability that the difference falls below the threshold:
\begin{align*}
    \psi &= G_{I}(\mathcal{T}_{2,1}) & \text{(by definition)} \\
    &= G_{I}\big(s \cdot c(\mathcal{D})\big) & \text{(from Step 2)} \\
    &= G_{\mathcal{D}}\Big(\frac{s \cdot c(\mathcal{D})}{s}\Big) & \text{(from Step 3)} \\
    &= G_{\mathcal{D}}\big(c(\mathcal{D})\big). &
\end{align*}
This establishes Part 2 of the theorem: $\psi$ depends strictly on the base shape objects $G_{\mathcal{D}}$ and $c(\mathcal{D})$, and is invariant to the location $a$ and scale $s$. 

\end{proof}

\subsection{Full Proof of Lemma \ref{lem:dp_v_oracle}}
\label{app:proof-dp-vs-oracle}

\LemmaOne*

\begin{proof}[Proof of Lemma \ref{lem:dp_v_oracle}]
The proof proceeds by direct comparison of the probabilities derived in Equations \eqref{eq:oracle_prob}, \eqref{eq:dp_prob1}, and \eqref{eq:dp_prob2}.

\paragraph{Part 1 (Comparison at $t=1$):}
The DP policy is more conservative than the oracle at $t=1$ when:
\begin{align*}
    \PrDP \paren{\text{spend at }1 \mid \MlignT{1}} &< \PrOracle \paren{\text{spend at }1 \mid \MlignT{1}} \\
    1 - F_\D(\Th_{2,1}) &< \frac{1+p}{2} \\
    \frac{1-p}{2} &< F_\D(\Th_{2,1}).
\end{align*}
The condition is reversed if the DP policy is more aggressive.

\paragraph{Part 2 (Comparison at $t=2$):}
The DP policy is more aggressive than the oracle at $t=2$ when:
\begin{align*}
    \PrDP \paren{\text{spend at }2 \mid \MlignT{2}} &> \PrOracle \paren{\text{spend at }2 \mid \MlignT{2}} \\
    p + (1-p)F_\D(\Th_{2,1}) &> \frac{1+p}{2} \\
    (1-p)F_\D(\Th_{2,1}) &> \frac{1+p}{2} - p \\
    (1-p)F_\D(\Th_{2,1}) &> \frac{1-p}{2} \\
    F_\D(\Th_{2,1}) &> \frac{1}{2} \quad (\text{since } 1-p > 0).
\end{align*}
The condition is reversed if the DP policy is more conservative.
\end{proof}
\subsection{Full Proof of Lemma \ref{lem:internal_dp}}
\label{app:proof-internal-dp}

\LemmaTwo*

\begin{proof}[Proof of Lemma \ref{lem:internal_dp}]
We compare the conditional probability of spending at $t=2$ with that at $t=1$. The DP is more likely to spend at $t=2$ if:
\begin{align*}
    \PrDP \paren{\text{spend at }2 \mid \MlignT{2}} &> \PrDP \paren{\text{spend at }1 \mid \MlignT{1}} \\
    p + (1-p)F_\D(\Th_{2,1}) &> 1 - F_\D(\Th_{2,1}) \\
    F_\D(\Th_{2,1}) + (1-p)F_\D(\Th_{2,1}) &> 1 - p \\
    (1 + 1 - p) F_\D(\Th_{2,1}) &> 1 - p \\
    (2-p)F_\D(\Th_{2,1}) &> 1-p \\
    F_\D(\Th_{2,1}) &> \frac{1-p}{2-p} \quad (\text{since } 2-p > 0).
\end{align*}
This shows that the DP defers spending to the second period when the cumulative mass of ``small'' gains below the threshold $\Th_{2,1}$ is sufficiently large. The inequality is reversed if the DP is more likely to spend early.
\end{proof}
\section{Analytical Example Derivation from Section ~\ref{subsec:analytical-verification}}
\label{app:half-normal-delta}

In this appendix, we derive the CDF $G_{\mathcal{D}}$ of the difference $ \D_0 = X_2 - X_1,$ when the base improvements are half-normal, $ X_1, X_2 \iid \mathrm{HalfNormal}(\sigma),$ and we compute the base threshold $c(\mathcal{D}) = \expect{(\D_0)^+}$ in closed form. Let us first calculate the pdf of the difference between two half normals. 

\subsubsection*{Difference distribution for half-normal improvements}

The half-normal PDF and CDF are
    \begin{equation*}
        f_X(x)
        = \begin{cases}
            \frac{\sqrt{2}}{\sigma \sqrt{\pi}}
            \exp   \paren{-\frac{x^2}{2\sigma^2}}, & x \ge 0,\\[6pt]
            0, & x < 0,
          \end{cases}
        \qquad
        F_{\mathcal{D}}(x)
        = \Pr(X \le x)
        = \erf  \paren{\frac{x}{\sqrt{2}\sigma}},
        \ x \ge 0,
    \end{equation*}
where $\erf$ is the standard error function. The density of the difference $\D_0 = X_2 - X_1$ is given by the usual convolution:
    \begin{equation*}
        f_{\D_0}(d)
        = \int_{-\infty}^{\infty} f_X(x) f_X(x + d)\,dx
        = \int_{\max\{0,-d\}}^{\infty} f_X(x) f_X(x + d)\,dx.
    \end{equation*}
For $d \ge 0$ we have
    \begin{align*}
        f_{\D_0}(d)
        &= \int_0^{\infty}
            \frac{\sqrt{2}}{\sigma \sqrt{\pi}}
            \exp \paren{-\frac{x^2}{2\sigma^2}}
            \cdot
            \frac{\sqrt{2}}{\sigma \sqrt{\pi}}
            \exp \paren{-\frac{(x + d)^2}{2\sigma^2}}
            dx \\
        &= \frac{2}{\pi \sigma^2}
           \int_0^{\infty}
               \exp \paren{-\frac{x^2 + (x + d)^2}{2\sigma^2}} dx \\
        &= \frac{2}{\pi \sigma^2}
           \exp \paren{-\frac{d^2}{4\sigma^2}}
           \int_0^{\infty}
               \exp \paren{-\frac{(x + d/2)^2}{\sigma^2}} dx.
    \end{align*}
With the change of variables $u = (x + d/2)/\sigma$, $dx = \sigma\,du$ and lower limit $u = d/(2\sigma)$, this becomes
    \begin{equation*}
        f_{\D_0}(d)
        = \frac{2}{\pi \sigma^2}
          \exp \paren{-\frac{d^2}{4\sigma^2}}
          \sigma \int_{d/(2\sigma)}^{\infty} e^{-u^2}\,du.
    \end{equation*}
Using the complementary error function
    \begin{equation*}
        \erfc(z)
        = \frac{2}{\sqrt{\pi}} \int_z^{\infty} e^{-u^2}\,du,
    \end{equation*}
we obtain, for $d \ge 0$,
\begin{equation}
    f_{\D_0}(d)
    = \frac{1}{\sqrt{\pi} \sigma}
      \exp \paren{-\frac{d^2}{4\sigma^2}}
      \erfc\paren{\frac{d}{2\sigma}}.
    \label{eq:half-normal-delta-density}
\end{equation}
Since $X_1$ and $X_2$ are i.i.d., the distribution of $\D_0$ is symmetric around zero, so $f_{\D_0}(-d) = f_{\D_0}(d)$ and $G_{\mathcal{D}}(0) = 1/2$. So, for $z \ge 0$, the CDF can be written as
    \begin{equation*}
        G_{\mathcal{D}}(z)
        = \Pr(\D_0 \le z)
        = \frac{1}{2} + \int_0^{z} f_{\D_0}(d)\,dd.
    \end{equation*}

\noindent Next consider, 
\begin{equation*}
    H(z) = 1 - \frac{1}{2} \erfc \paren{\frac{z}{2\sigma}}^2,
    \quad z \ge 0.
\end{equation*}

\noindent We can use the identity $\frac{d}{dz} \erfc(z) = \frac{-2}{\sqrt{\pi}} \exp(-z^2)$ and the chain rule to differentiate $H(z)$ with respect to $z$ to get:  
\begin{equation*}
    H'(z)
    = \frac{1}{\sqrt{\pi} \sigma} \exp \paren{-\frac{z^2}{4\sigma^2}} \erfc \paren{\frac{z}{2\sigma}}
    = f_{\D_0}(z), \quad z \ge 0,
\end{equation*}
and $H(0) = 1/2 = G_{\mathcal{D}}(0)$. Thus $G_{\mathcal{D}}(z) = H(z)$ for all $z \ge 0$, yielding
\begin{equation*}
    G_{\mathcal{D}}(z)
    = 1 - \frac{1}{2} \erfc \paren{\frac{z}{2\sigma}}
    \quad z \ge 0.
\end{equation*}
By symmetry, $G_{\mathcal{D}}(-z) = 1 - G_{\mathcal{D}}(z)$ for $z > 0$, which is the expression used in Section~\ref{subsec:analytical-verification}.

\subsection{Closed form for $c(\mathcal{D})$}

Recall that
\begin{equation*}
    c(\mathcal{D})
    = \expect{(\D_0)^+}
    = \int_0^{\infty} d \, f_{\D_0}(d)\,dd.
\end{equation*}
Substituting \eqref{eq:half-normal-delta-density} and changing variables $u = d/(2\sigma)$, so $d = 2\sigma u$ and $dd = 2\sigma\,du$, gives
\begin{align}
    c(\mathcal{D})
    &= \int_0^{\infty}
        d \cdot \frac{1}{\sqrt{\pi}\sigma}
        \exp \paren{-\frac{d^2}{4\sigma^2}}
        \erfc \paren{-\frac{d^2}{4\sigma^2}}
      \,dd \nonumber\\
    &= \frac{4\sigma}{\sqrt{\pi}}
       \int_0^{\infty} u e^{-u^2}\erfc(u)\,du.
    \label{eq:half-normal-c-integral}
\end{align}
To evaluate the integral, write
\begin{equation*}
    \erfc(u)
    = \frac{2}{\sqrt{\pi}} \int_u^{\infty} e^{-t^2}\,dt,
\end{equation*}
so that
\begin{align*}
    \int_0^{\infty} u e^{-u^2} \erfc(u)\,du
    &= \frac{2}{\sqrt{\pi}}
       \int_0^{\infty} \int_u^{\infty}
           u e^{-u^2} e^{-t^2}\,dt\,du \\
    &= \frac{2}{\sqrt{\pi}}
       \int_0^{\infty} e^{-t^2}
           \left[\int_0^{t} u e^{-u^2}\,du\right] dt,
\end{align*}
where we have interchanged the order of integration. The inner integral is
\begin{equation*}
    \int_0^{t} u e^{-u^2}\,du
    = \frac{1}{2} \paren{1 - e^{-t^2}},
\end{equation*}
so
\begin{align*}
    \int_0^{\infty} u e^{-u^2}\erfc (u)\,du
    &= \frac{1}{\sqrt{\pi}}
       \int_0^{\infty} \paren{e^{-t^2} - e^{-2t^2}} \,dt\\
    &= \frac{1}{\sqrt{\pi}}
       \left[
         \frac{1}{2}\sqrt{\pi}
         - \frac{1}{2}\sqrt{\frac{\pi}{2}}
       \right]
     = \frac{2 - \sqrt{2}}{4}.
\end{align*}
Substituting back into \eqref{eq:half-normal-c-integral} yields
\begin{equation*}
    c(\mathcal{D})
    = \frac{4\sigma}{\sqrt{\pi}} \cdot \frac{2 - \sqrt{2}}{4}
    = \frac{\sigma}{\sqrt{\pi}} \paren{2 - \sqrt{2}},
\end{equation*}
which is the expression reported in Subsection ~\ref{subsec:analytical-verification}. As a check, the same value is obtained by applying Proposition~\ref{prop:c-closed-form} directly with $F_{\mathcal{D}}(x) = \erf \paren{x/(\sqrt{2}\sigma)}$.

\section{Algebraic Simplifications}

\subsection{Simplification of the General Recursion from Section ~\ref{sec:general_dp}}
\label{app:simplification_value_fxn}

We start from the conditional recursion, splitting on whether we have alignment ($\Align$ with probability $p$) or not ($\Mlign$ with probability $(1-p)$). 
{\small\begin{equation*}
    V(\tau, k) = 
    \underbrace{p \cdot \expect{\paren{\Umax + V(\tau -1, k)} \mid \Align }}_{\text{alignment } \paren{\D = 0}} +
    \overbrace{\paren{1-p} \cdot \expect{ \max \braces{\paren{\Upol + V(\tau -1, k)}, \paren{\Umax + V(\tau -1, k -1)}}  \mid \Mlign}}^{\text{misalignment } \paren{\D \ge 0}} 
\end{equation*}}
\noindent Let us analyze the two portions in more detail: 
\begin{itemize}
    \item \textbf{On Alignment, } $\Umax = \Upol$, we can re-write the portion conditioned on $\Align$ as: 
    \begin{equation*}
       \expect{\paren{\Umax + V(\tau -1, k)} \mid \Align} = \expect{\Upol \mid \Align} + V(\tau-1,k)
    \end{equation*}

    \item \textbf{On Misalignment,} on the other hand, we can write $\Umax = \Upol + \D$ with $\D > 0$. Then, looking at the $\max$ expression within the second half, we can re-write it as: 
    {\small \begin{equation*}
        \max \braces{\Upol + V(\tau-1,k), \Umax +V(\tau-1,k-1)} = \Upol + \max \braces{V(\tau-1,k), \D + V(\tau-1,k-1)}
    \end{equation*}}
    Taking the conditional on $\mathsf A_t^c$ then gives us the second half as: 
     \begin{align*}
         &= \expect{\Upol + \max \braces{V(\tau-1,k),  \D + V(\tau-1,k-1)} \mid \Mlign} \\ 
         &= \expect{\Upol \mid \Mlign} + \mathbb{E}_{\D \mid \Mlign } \brackets{\max \braces{V(\tau-1,k), \Delta + V(\tau-1,k-1)}}
     \end{align*}
\end{itemize}

Note that, by law of total expectation: 
    \begin{equation}
        p \cdot \expect{\Upol \mid \Align} + (1-p) \cdot \expect{\Upol \mid \Mlign} = \expect{\Upol}.
        \label{equ:simplification}
    \end{equation}

Going back to the original recursion: 
{\small \begin{align*}
    V(\tau,k) & = p \cdot \expect{\Umax + V(\tau-1,k) \mid \Align} + \\
               & (1-p) \cdot \expect{ \max \braces{\paren{\Upol + V(\tau-1,k)}, \paren{\Umax+V(\tau-1,k-1)}} \mid \Mlign }\\[6pt]
              &= p \cdot \expect{\Upol \mid \Align} + p \cdot V(\tau-1,k) + \\ 
              &(1-p) \cdot \expect{\Upol \mid \Mlign} + (1-p) \cdot \mathbb{E}_{\D \mid \Mlign} \left[ \max \braces{V(\tau-1,k), \Delta + V(\tau-1,k-1)} \right]
\end{align*}}

Now using the result from Equation \eqref{equ:simplification}, we get our final recursion as: 
\begin{empheq}[box=\fbox]{align}
    V(\tau,k) = \expect{\Upol} + p \cdot V(\tau-1,k) + \paren{1-p} \cdot \mathbb{E}_{\Delta \mid \Mlign} \left[ \max \braces{V(\tau-1,k), \Delta + V(\tau-1,k-1)} \right]
    \label{eq:recursion}
\end{empheq}

\subsection{Algebraic Simplification of Conditional Spending Probabilities from Section ~\ref{subsec:dp_solution}}
\label{appen:simpl_six_pnt_one}

For a fixed state $(\tau, k)$, the agent spends if a misalignment occurs \emph{and} the gain clears the threshold $\mathcal{T}_{\tau,k}$.
\begin{equation*}
    q_{\tau,k} = \Pr(\text{spend} \mid \tau, k) = (1-p) \cdot \Pr(\Delta > \mathcal{T}_{\tau,k} \mid \Mlign).
\end{equation*}
We can express this compactly using the unconditional CDF $F_{\Delta}$. Recall that $F_{\Delta}(\mathcal{T}_{\tau,k}) = p + (1-p)\Pr(\Delta \le \mathcal{T}_{\tau,k} \mid \Mlign)$. Thus:
\begin{align*}
    1 - F_{\Delta}(\mathcal{T}_{\tau,k}) 
    &= 1 - \Big( p + (1-p)\Pr(\Delta \le \mathcal{T}_{\tau,k} \mid \Mlign) \Big) \\
    &= (1 - p) - (1-p)\Pr(\Delta \le \mathcal{T}_{\tau,k} \mid \Mlign) \\
    &= (1-p) \Big( 1 - \Pr(\Delta \le \mathcal{T}_{\tau,k} \mid \Mlign) \Big) \\
    &= (1-p) \Pr(\Delta > \mathcal{T}_{\tau,k} \mid \Mlign) \\
    &= q_{\tau,k}.
\end{align*}
Therefore, the spending probability is simply the complementary cumulative distribution function of the unconditional gain:
\begin{equation*}
    q_{\tau,k} = 1 - F_{\Delta}(\mathcal{T}_{\tau,k}).
\end{equation*}
\subsection{Visualization of Thresholds as Referenced in Section ~\ref{subsec:dp_solution}}
\label{appen:th_viz}

\begin{figure}[ht]
    \centering
    \includegraphics[width=\linewidth]{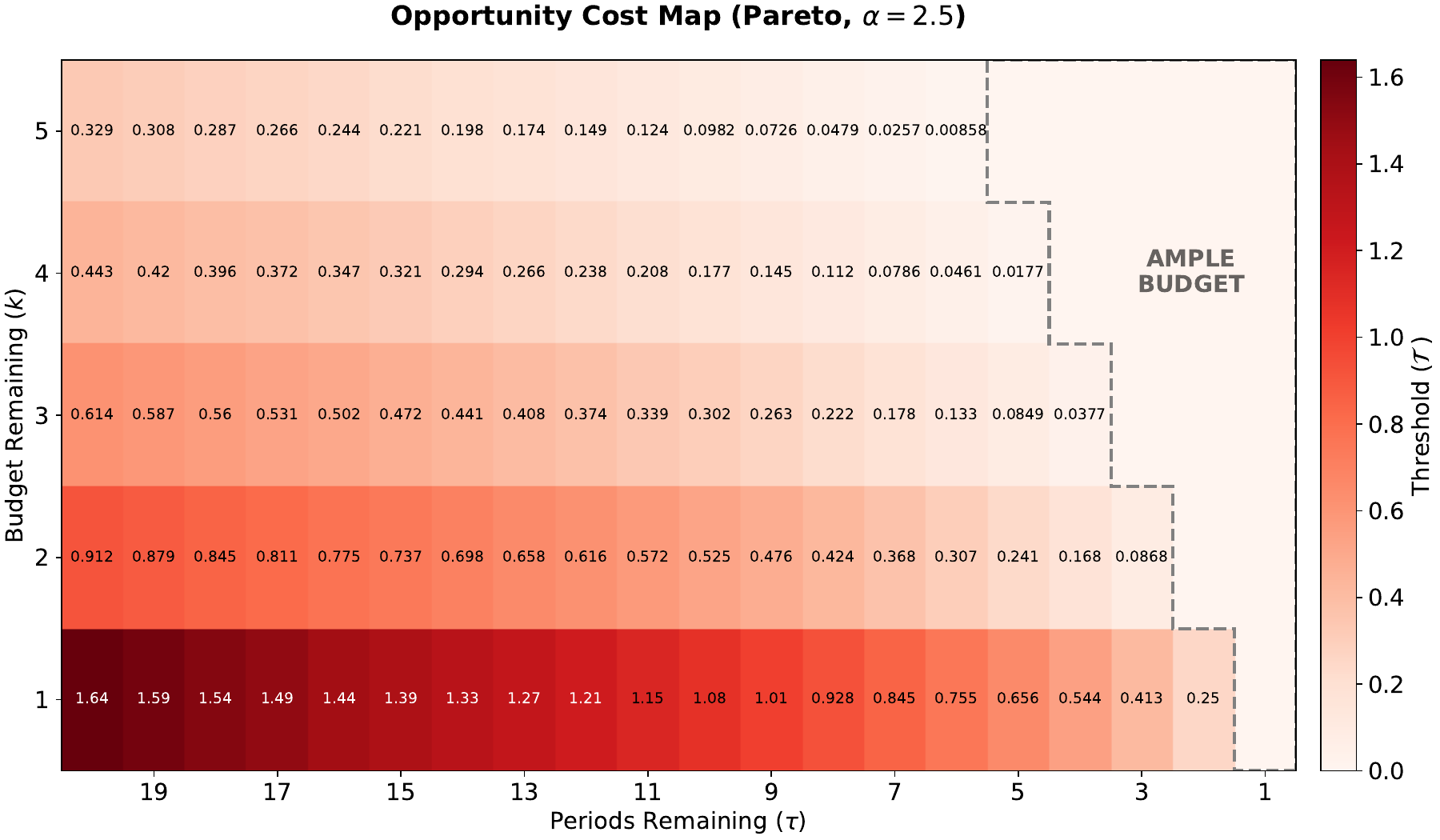}
    \caption{The Opportunity Cost Map: Heatmap of optimal thresholds $\mathcal{T}_{\tau,k}$ for a Pareto distribution ($T=20, K=5$). Darker regions indicate states where the agent requires a very high gain to justify spending. The thresholds naturally decay as time runs out ($\tau \to 0$) or budget becomes abundant relative to time ($k \to \tau$).}
    \label{fig:threshold_heatmap}
\end{figure}

\newpage
\section{Generalized Linear Model Results}
\label{appen:glm_emperics}

\subsection{Weekly Cycles}
\label{appen:weekly_glm}

\begin{table}[ht]
\resizebox{\columnwidth}{!}{%
\begin{tabular}{|lccc|c|lccc|}
\cline{1-4} \cline{6-9}
\multicolumn{4}{|c|}{\textbf{Block Features}} & \textbf{} & \multicolumn{4}{c|}{\textbf{Rolling Features}} \\ \cline{1-4} \cline{6-9} 
\multicolumn{1}{|c|}{\textbf{Variable}} & \multicolumn{1}{c|}{\textbf{Odds Ratio}} & \multicolumn{1}{c|}{\textbf{95\% CI}} & \textbf{p-value} & \textbf{} & \multicolumn{1}{c|}{\textbf{Variable}} & \multicolumn{1}{c|}{\textbf{Odds Ratio}} & \multicolumn{1}{c|}{\textbf{95\% CI}} & \textbf{p-value} \\ \cline{1-4} \cline{6-9} 
\multicolumn{4}{|c|}{\textit{Intercept and Resource Availability}} &  & \multicolumn{4}{c|}{\textit{Intercept and Resource Availability}} \\ \cline{1-4} \cline{6-9} 
\multicolumn{1}{|l|}{intercept} & \multicolumn{1}{c|}{0.40006} & \multicolumn{1}{c|}{[0.2899, 0.55206]} & 2.47E-08 & \multicolumn{1}{l|}{} & \multicolumn{1}{l|}{intercept} & \multicolumn{1}{c|}{0.34605} & \multicolumn{1}{c|}{[0.25077, 0.47753]} & 1.06E-10 \\
\multicolumn{1}{|l|}{THShare\_lag1} & \multicolumn{1}{c|}{1.1016} & \multicolumn{1}{c|}{[0.90421, 1.3421]} & 0.33678 &  & \multicolumn{1}{l|}{THShare\_lag1} & \multicolumn{1}{c|}{1.0701} & \multicolumn{1}{c|}{[0.86851, 1.3185]} & 0.52466 \\
\multicolumn{1}{|l|}{is\_holiday} & \multicolumn{1}{c|}{0.88415} & \multicolumn{1}{c|}{[0.55832, 1.4001]} & 0.59959 &  & \multicolumn{1}{l|}{is\_holiday} & \multicolumn{1}{c|}{0.87315} & \multicolumn{1}{c|}{[0.5514, 1.3826]} & 0.56298 \\
\multicolumn{1}{|l|}{ES\_assignments\_last\_week} & \multicolumn{1}{c|}{0.99463} & \multicolumn{1}{c|}{[0.98203, 1.0074]} & 0.40724 &  & \multicolumn{1}{l|}{ES\_assignments\_last\_7days} & \multicolumn{1}{c|}{0.99594} & \multicolumn{1}{c|}{[0.98336, 1.0087]} & 0.53006 \\
\multicolumn{1}{|l|}{TH\_assignments\_last\_week} & \multicolumn{1}{c|}{0.98859} & \multicolumn{1}{c|}{[0.97278, 1.0046]} & 0.16278 &  & \multicolumn{1}{l|}{TH\_assignments\_last\_7days} & \multicolumn{1}{c|}{1.0068} & \multicolumn{1}{c|}{[0.98998, 1.0238]} & 0.43196 \\
\multicolumn{1}{|l|}{ES\_exits\_last\_week} & \multicolumn{1}{c|}{0.98829} & \multicolumn{1}{c|}{[0.9754, 1.0014]} & 0.078828 &  & \multicolumn{1}{l|}{ES\_exits\_last\_7Days} & \multicolumn{1}{c|}{0.99464} & \multicolumn{1}{c|}{[0.98212, 1.0073]} & 0.40534 \\
\multicolumn{1}{|l|}{TH\_exits\_last\_week} & \multicolumn{1}{c|}{1.0083} & \multicolumn{1}{c|}{[0.99176, 1.0252]} & 0.32652 &  & \multicolumn{1}{l|}{TH\_exits\_last\_7Days} & \multicolumn{1}{c|}{0.99411} & \multicolumn{1}{c|}{[0.97763, 1.0109]} & 0.48887 \\ \cline{1-4} \cline{6-9} 
\multicolumn{4}{|c|}{\textit{Weekly Cycles (Treatment=`Tuesday-Friday')}} &  & \multicolumn{4}{c|}{\textit{Weekly Cycles (Treatment=`Tuesday-Friday')}} \\ \cline{1-4} \cline{6-9} 
\multicolumn{1}{|l|}{C(day\_type)[T.Mon]} & \multicolumn{1}{c|}{1.2358} & \multicolumn{1}{c|}{[1.0691, 1.4285]} & \textbf{0.0041822} &  & \multicolumn{1}{l|}{C(day\_type)[T.Mon]} & \multicolumn{1}{c|}{1.239} & \multicolumn{1}{c|}{[1.0717, 1.4325]} & \textbf{0.0037923} \\
\multicolumn{1}{|l|}{C(day\_type)[T.Weekend]} & \multicolumn{1}{c|}{0.77511} & \multicolumn{1}{c|}{[0.61536, 0.97633]} & \textbf{0.030508} &  & \multicolumn{1}{l|}{C(day\_type)[T.Weekend]} & \multicolumn{1}{c|}{0.77847} & \multicolumn{1}{c|}{[0.61818, 0.98032]} & \textbf{0.03326} \\ \cline{1-4} \cline{6-9} 
\multicolumn{4}{|c|}{\textit{Month (Treatment=October)}} &  & \multicolumn{4}{c|}{\textit{Month (Treatment=October)}} \\ \cline{1-4} \cline{6-9} 
\multicolumn{1}{|l|}{C(month)[T.11]} & \multicolumn{1}{c|}{0.70459} & \multicolumn{1}{c|}{[0.51257, 0.96855]} & \textbf{0.031023} &  & \multicolumn{1}{l|}{C(month)[T.11]} & \multicolumn{1}{c|}{0.69702} & \multicolumn{1}{c|}{[0.50749, 0.95731]} & \textbf{0.025786} \\
\multicolumn{1}{|l|}{C(month)[T.12]} & \multicolumn{1}{c|}{0.57524} & \multicolumn{1}{c|}{[0.41724, 0.79307]} & \textbf{0.00073813} &  & \multicolumn{1}{l|}{C(month)[T.12]} & \multicolumn{1}{c|}{0.57488} & \multicolumn{1}{c|}{[0.41684, 0.79284]} & \textbf{0.00073748} \\
\multicolumn{1}{|l|}{C(month)[T.1]} & \multicolumn{1}{c|}{0.7917} & \multicolumn{1}{c|}{[0.58954, 1.0632]} & 0.12049 &  & \multicolumn{1}{l|}{C(month)[T.1]} & \multicolumn{1}{c|}{0.77917} & \multicolumn{1}{c|}{[0.57906, 1.0484]} & 0.099416 \\
\multicolumn{1}{|l|}{C(month)[T.2]} & \multicolumn{1}{c|}{0.93902} & \multicolumn{1}{c|}{[0.69954, 1.2605]} & 0.67534 &  & \multicolumn{1}{l|}{C(month)[T.2]} & \multicolumn{1}{c|}{0.91704} & \multicolumn{1}{c|}{[0.68379, 1.2299]} & 0.56306 \\
\multicolumn{1}{|l|}{C(month)[T.3]} & \multicolumn{1}{c|}{0.81619} & \multicolumn{1}{c|}{[0.60827, 1.0952]} & 0.17579 &  & \multicolumn{1}{l|}{C(month)[T.3]} & \multicolumn{1}{c|}{0.79258} & \multicolumn{1}{c|}{[0.59078, 1.0633]} & 0.12103 \\
\multicolumn{1}{|l|}{C(month)[T.4]} & \multicolumn{1}{c|}{0.85696} & \multicolumn{1}{c|}{[0.6369, 1.1531]} & 0.30801 &  & \multicolumn{1}{l|}{C(month)[T.4]} & \multicolumn{1}{c|}{0.85176} & \multicolumn{1}{c|}{[0.63299, 1.1461]} & 0.28943 \\
\multicolumn{1}{|l|}{C(month)[T.5]} & \multicolumn{1}{c|}{0.88228} & \multicolumn{1}{c|}{[0.65843, 1.1822]} & 0.40155 &  & \multicolumn{1}{l|}{C(month)[T.5]} & \multicolumn{1}{c|}{0.87956} & \multicolumn{1}{c|}{[0.6568, 1.1779]} & 0.38911 \\
\multicolumn{1}{|l|}{C(month)[T.6]} & \multicolumn{1}{c|}{0.73097} & \multicolumn{1}{c|}{[0.5397, 0.99003]} & \textbf{0.042895} &  & \multicolumn{1}{l|}{C(month)[T.6]} & \multicolumn{1}{c|}{0.68915} & \multicolumn{1}{c|}{[0.50944, 0.93224]} & \textbf{0.015729} \\
\multicolumn{1}{|l|}{C(month)[T.7]} & \multicolumn{1}{c|}{0.87758} & \multicolumn{1}{c|}{[0.65502, 1.1758]} & 0.38155 &  & \multicolumn{1}{l|}{C(month)[T.7]} & \multicolumn{1}{c|}{0.87444} & \multicolumn{1}{c|}{[0.65333, 1.1704]} & 0.367 \\
\multicolumn{1}{|l|}{C(month)[T.8]} & \multicolumn{1}{c|}{0.81881} & \multicolumn{1}{c|}{[0.60654, 1.1054]} & 0.19168 &  & \multicolumn{1}{l|}{C(month)[T.8]} & \multicolumn{1}{c|}{0.80436} & \multicolumn{1}{c|}{[0.59654, 1.0846]} & 0.15343 \\
\multicolumn{1}{|l|}{C(month)[T.9]} & \multicolumn{1}{c|}{0.87518} & \multicolumn{1}{c|}{[0.64931, 1.1796]} & 0.38137 &  & \multicolumn{1}{l|}{C(month)[T.9]} & \multicolumn{1}{c|}{0.88117} & \multicolumn{1}{c|}{[0.65419, 1.1869]} & 0.40516 \\ \cline{1-4} \cline{6-9} 
\end{tabular}%
}
\caption{Results of fitting GLM, isolating weekly effects. }
\label{tab:discretion_mon}
\end{table}
\begin{table}[ht]
\resizebox{\columnwidth}{!}{%
\begin{tabular}{|lccc|c|lccc|}
\cline{1-4} \cline{6-9}
\multicolumn{4}{|c|}{\textbf{Block Features}} & \textbf{} & \multicolumn{4}{c|}{\textbf{Rolling Features}} \\ \cline{1-4} \cline{6-9} 
\multicolumn{1}{|c|}{\textbf{Variable}} & \multicolumn{1}{c|}{\textbf{Odds Ratio}} & \multicolumn{1}{c|}{\textbf{95\% CI}} & \textbf{p-value} & \textbf{} & \multicolumn{1}{c|}{\textbf{Variable}} & \multicolumn{1}{c|}{\textbf{Odds Ratio}} & \multicolumn{1}{c|}{\textbf{95\% CI}} & \textbf{p-value} \\ \cline{1-4} \cline{6-9} 
\multicolumn{4}{|c|}{\textit{Intercept and Resource Availability}} &  & \multicolumn{4}{c|}{\textit{Intercept and Resource Availability}} \\ \cline{1-4} \cline{6-9} 
\multicolumn{1}{|l|}{intercept} & \multicolumn{1}{c|}{0.2536} & \multicolumn{1}{c|}{[0.15612, 0.41192]} & 2.97E-08 & \multicolumn{1}{l|}{} & \multicolumn{1}{l|}{intercept} & \multicolumn{1}{c|}{0.21156} & \multicolumn{1}{c|}{[0.12995, 0.34442]} & 4.20E-10 \\
\multicolumn{1}{|l|}{THShare\_lag1} & \multicolumn{1}{c|}{1.0524} & \multicolumn{1}{c|}{[0.7859, 1.4093]} & 0.73164 &  & \multicolumn{1}{l|}{THShare\_lag1} & \multicolumn{1}{c|}{0.93123} & \multicolumn{1}{c|}{[0.68255, 1.2705]} & 0.65306 \\
\multicolumn{1}{|l|}{is\_holiday} & \multicolumn{1}{c|}{1.0109} & \multicolumn{1}{c|}{[0.52646, 1.9413]} & 0.97393 &  & \multicolumn{1}{l|}{is\_holiday} & \multicolumn{1}{c|}{1.0059} & \multicolumn{1}{c|}{[0.52407, 1.9308]} & 0.98587 \\
\multicolumn{1}{|l|}{ES\_assignments\_last\_week} & \multicolumn{1}{c|}{0.99977} & \multicolumn{1}{c|}{[0.98182, 1.018]} & 0.97977 &  & \multicolumn{1}{l|}{ES\_assignments\_last\_7days} & \multicolumn{1}{c|}{0.99762} & \multicolumn{1}{c|}{[0.97962, 1.016]} & 0.7977 \\
\multicolumn{1}{|l|}{TH\_assignments\_last\_week} & \multicolumn{1}{c|}{0.99885} & \multicolumn{1}{c|}{[0.97627, 1.0219]} & 0.92144 &  & \multicolumn{1}{l|}{TH\_assignments\_last\_7days} & \multicolumn{1}{c|}{1.0208} & \multicolumn{1}{c|}{[0.99684, 1.0453]} & 0.089403 \\
\multicolumn{1}{|l|}{ES\_exits\_last\_week} & \multicolumn{1}{c|}{0.96525} & \multicolumn{1}{c|}{[0.94666, 0.9842]} & \textbf{0.00036287} &  & \multicolumn{1}{l|}{ES\_exits\_last\_7Days} & \multicolumn{1}{c|}{0.97053} & \multicolumn{1}{c|}{[0.95258, 0.98882]} & \textbf{0.0016862} \\
\multicolumn{1}{|l|}{TH\_exits\_last\_week} & \multicolumn{1}{c|}{1.0202} & \multicolumn{1}{c|}{[0.99656, 1.0445]} & 0.094423 &  & \multicolumn{1}{l|}{TH\_exits\_last\_7Days} & \multicolumn{1}{c|}{1.0212} & \multicolumn{1}{c|}{[0.99764, 1.0454]} & 0.078135 \\ \cline{1-4} \cline{6-9} 
\multicolumn{4}{|c|}{\textit{Weekly Cycles (Treatment=`Tuesday-Friday')}} &  & \multicolumn{4}{c|}{\textit{Weekly Cycles (Treatment=`Tuesday-Friday')}} \\ \cline{1-4} \cline{6-9} 
\multicolumn{1}{|l|}{C(day\_type)[T.Mon]} & \multicolumn{1}{c|}{1.1256} & \multicolumn{1}{c|}{[0.90662, 1.3974]} & 0.28381 &  & \multicolumn{1}{l|}{C(day\_type)[T.Mon]} & \multicolumn{1}{c|}{1.1402} & \multicolumn{1}{c|}{[0.91861, 1.4152]} & 0.23406 \\
\multicolumn{1}{|l|}{C(day\_type)[T.Weekend]} & \multicolumn{1}{c|}{0.78578} & \multicolumn{1}{c|}{[0.55433, 1.1139]} & 0.17567 &  & \multicolumn{1}{l|}{C(day\_type)[T.Weekend]} & \multicolumn{1}{c|}{0.81324} & \multicolumn{1}{c|}{[0.57366, 1.1529]} & 0.24566 \\ \cline{1-4} \cline{6-9} 
\multicolumn{4}{|c|}{\textit{Month (Treatment=October)}} &  & \multicolumn{4}{c|}{\textit{Month (Treatment=October)}} \\ \cline{1-4} \cline{6-9} 
\multicolumn{1}{|l|}{C(month)[T.11]} & \multicolumn{1}{c|}{0.91139} & \multicolumn{1}{c|}{[0.56273, 1.4761]} & 0.70607 &  & \multicolumn{1}{l|}{C(month)[T.11]} & \multicolumn{1}{c|}{0.88028} & \multicolumn{1}{c|}{[0.54414, 1.4241]} & 0.60338 \\
\multicolumn{1}{|l|}{C(month)[T.12]} & \multicolumn{1}{c|}{0.74647} & \multicolumn{1}{c|}{[0.45838, 1.2156]} & 0.23992 &  & \multicolumn{1}{l|}{C(month)[T.12]} & \multicolumn{1}{c|}{0.72972} & \multicolumn{1}{c|}{[0.44763, 1.1896]} & 0.20632 \\
\multicolumn{1}{|l|}{C(month)[T.1]} & \multicolumn{1}{c|}{1.2291} & \multicolumn{1}{c|}{[0.7894, 1.9138]} & 0.36115 &  & \multicolumn{1}{l|}{C(month)[T.1]} & \multicolumn{1}{c|}{1.125} & \multicolumn{1}{c|}{[0.71977, 1.7583]} & 0.60524 \\
\multicolumn{1}{|l|}{C(month)[T.2]} & \multicolumn{1}{c|}{1.274} & \multicolumn{1}{c|}{[0.81498, 1.9915]} & 0.28808 &  & \multicolumn{1}{l|}{C(month)[T.2]} & \multicolumn{1}{c|}{1.1875} & \multicolumn{1}{c|}{[0.76002, 1.8554]} & 0.45042 \\
\multicolumn{1}{|l|}{C(month)[T.3]} & \multicolumn{1}{c|}{1.1528} & \multicolumn{1}{c|}{[0.73337, 1.812]} & 0.53781 &  & \multicolumn{1}{l|}{C(month)[T.3]} & \multicolumn{1}{c|}{1.0625} & \multicolumn{1}{c|}{[0.67585, 1.6703]} & 0.79289 \\
\multicolumn{1}{|l|}{C(month)[T.4]} & \multicolumn{1}{c|}{1.1349} & \multicolumn{1}{c|}{[0.72185, 1.7842]} & 0.58362 &  & \multicolumn{1}{l|}{C(month)[T.4]} & \multicolumn{1}{c|}{1.0611} & \multicolumn{1}{c|}{[0.67405, 1.6703]} & 0.79787 \\
\multicolumn{1}{|l|}{C(month)[T.5]} & \multicolumn{1}{c|}{1.1371} & \multicolumn{1}{c|}{[0.72388, 1.7863]} & 0.57705 &  & \multicolumn{1}{l|}{C(month)[T.5]} & \multicolumn{1}{c|}{1.1099} & \multicolumn{1}{c|}{[0.70692, 1.7426]} & 0.65053 \\
\multicolumn{1}{|l|}{C(month)[T.6]} & \multicolumn{1}{c|}{1.1223} & \multicolumn{1}{c|}{[0.70555, 1.7852]} & 0.62611 &  & \multicolumn{1}{l|}{C(month)[T.6]} & \multicolumn{1}{c|}{1.0266} & \multicolumn{1}{c|}{[0.64676, 1.6296]} & 0.91127 \\
\multicolumn{1}{|l|}{C(month)[T.7]} & \multicolumn{1}{c|}{1.3444} & \multicolumn{1}{c|}{[0.86745, 2.0835]} & 0.18556 &  & \multicolumn{1}{l|}{C(month)[T.7]} & \multicolumn{1}{c|}{1.2843} & \multicolumn{1}{c|}{[0.82956, 1.9883]} & 0.26185 \\
\multicolumn{1}{|l|}{C(month)[T.8]} & \multicolumn{1}{c|}{0.97605} & \multicolumn{1}{c|}{[0.61305, 1.554]} & 0.91864 &  & \multicolumn{1}{l|}{C(month)[T.8]} & \multicolumn{1}{c|}{0.95199} & \multicolumn{1}{c|}{[0.59898, 1.513]} & 0.83511 \\
\multicolumn{1}{|l|}{C(month)[T.9]} & \multicolumn{1}{c|}{0.96196} & \multicolumn{1}{c|}{[0.60527, 1.5288]} & 0.86966 &  & \multicolumn{1}{l|}{C(month)[T.9]} & \multicolumn{1}{c|}{0.93435} & \multicolumn{1}{c|}{[0.5879, 1.485]} & 0.77392 \\ \cline{1-4} \cline{6-9} 
\end{tabular}%
}
\caption{Results of fitting GLM for upgrading, isolating weekly effects. }
\label{tab:updrading_discretion_mon}
\end{table}
\begin{table}[ht]
\resizebox{\columnwidth}{!}{%
\begin{tabular}{|lccc|c|lccc|}
\cline{1-4} \cline{6-9}
\multicolumn{4}{|c|}{\textbf{Block Features}} & \textbf{} & \multicolumn{4}{c|}{\textbf{Rolling Features}} \\ \cline{1-4} \cline{6-9} 
\multicolumn{1}{|c|}{\textbf{Variable}} & \multicolumn{1}{c|}{\textbf{Odds Ratio}} & \multicolumn{1}{c|}{\textbf{95\% CI}} & \textbf{p-value} & \textbf{} & \multicolumn{1}{c|}{\textbf{Variable}} & \multicolumn{1}{c|}{\textbf{Odds Ratio}} & \multicolumn{1}{c|}{\textbf{95\% CI}} & \textbf{p-value} \\ \cline{1-4} \cline{6-9} 
\multicolumn{4}{|c|}{\textit{Intercept and Resource Availability}} &  & \multicolumn{4}{c|}{\textit{Intercept and Resource Availability}} \\ \cline{1-4} \cline{6-9} 
\multicolumn{1}{|l|}{intercept} & \multicolumn{1}{c|}{0.74218} & \multicolumn{1}{c|}{[0.46469, 1.1854]} & 0.21198 & \multicolumn{1}{l|}{} & \multicolumn{1}{l|}{intercept} & \multicolumn{1}{c|}{0.65816} & \multicolumn{1}{c|}{[0.41327, 1.0482]} & 0.078095 \\
\multicolumn{1}{|l|}{THShare\_lag1} & \multicolumn{1}{c|}{1.0518} & \multicolumn{1}{c|}{[0.79233, 1.3961]} & 0.72698 &  & \multicolumn{1}{l|}{THShare\_lag1} & \multicolumn{1}{c|}{1.1504} & \multicolumn{1}{c|}{[0.85331, 1.551]} & 0.35795 \\
\multicolumn{1}{|l|}{is\_holiday} & \multicolumn{1}{c|}{0.79096} & \multicolumn{1}{c|}{[0.40205, 1.5561]} & 0.49698 &  & \multicolumn{1}{l|}{is\_holiday} & \multicolumn{1}{c|}{0.76989} & \multicolumn{1}{c|}{[0.38982, 1.5205]} & 0.45136 \\
\multicolumn{1}{|l|}{ES\_assignments\_last\_week} & \multicolumn{1}{c|}{0.99698} & \multicolumn{1}{c|}{[0.97782, 1.0165]} & 0.75995 &  & \multicolumn{1}{l|}{ES\_assignments\_last\_7days} & \multicolumn{1}{c|}{1.004} & \multicolumn{1}{c|}{[0.98483, 1.0235]} & 0.68639 \\
\multicolumn{1}{|l|}{TH\_assignments\_last\_week} & \multicolumn{1}{c|}{0.97263} & \multicolumn{1}{c|}{[0.94968, 0.99612]} & \textbf{0.022679} &  & \multicolumn{1}{l|}{TH\_assignments\_last\_7days} & \multicolumn{1}{c|}{0.98264} & \multicolumn{1}{c|}{[0.95812, 1.0078]} & 0.17449 \\
\multicolumn{1}{|l|}{ES\_exits\_last\_week} & \multicolumn{1}{c|}{1.0156} & \multicolumn{1}{c|}{[0.99651, 1.035]} & 0.10985 &  & \multicolumn{1}{l|}{ES\_exits\_last\_7Days} & \multicolumn{1}{c|}{1.0259} & \multicolumn{1}{c|}{[1.0069, 1.0452]} & \textbf{0.0073899} \\
\multicolumn{1}{|l|}{TH\_exits\_last\_week} & \multicolumn{1}{c|}{0.98491} & \multicolumn{1}{c|}{[0.96048, 1.01]} & 0.23544 &  & \multicolumn{1}{l|}{TH\_exits\_last\_7Days} & \multicolumn{1}{c|}{0.95233} & \multicolumn{1}{c|}{[0.92806, 0.97724]} & \textbf{0.00020875} \\ \cline{1-4} \cline{6-9} 
\multicolumn{4}{|c|}{\textit{Weekly Cycles (Treatment=`Tuesday-Friday')}} &  & \multicolumn{4}{c|}{\textit{Weekly Cycles (Treatment=`Tuesday-Friday')}} \\ \cline{1-4} \cline{6-9} 
\multicolumn{1}{|l|}{C(day\_type)[T.Mon]} & \multicolumn{1}{c|}{1.2687} & \multicolumn{1}{c|}{[1.0305, 1.5621]} & \textbf{0.024923} &  & \multicolumn{1}{l|}{C(day\_type)[T.Mon]} & \multicolumn{1}{c|}{1.2414} & \multicolumn{1}{c|}{[1.0064, 1.5314]} & \textbf{0.043472} \\
\multicolumn{1}{|l|}{C(day\_type)[T.Weekend]} & \multicolumn{1}{c|}{0.63895} & \multicolumn{1}{c|}{[0.46477, 0.87839]} & \textbf{0.0058081} &  & \multicolumn{1}{l|}{C(day\_type)[T.Weekend]} & \multicolumn{1}{c|}{0.63915} & \multicolumn{1}{c|}{[0.46466, 0.87916]} & \textbf{0.0059286} \\ \cline{1-4} \cline{6-9} 
\multicolumn{4}{|c|}{\textit{Month (Treatment=October)}} &  & \multicolumn{4}{c|}{\textit{Month (Treatment=October)}} \\ \cline{1-4} \cline{6-9} 
\multicolumn{1}{|l|}{C(month)[T.11]} & \multicolumn{1}{c|}{0.66864} & \multicolumn{1}{c|}{[0.42145, 1.0608]} & 0.087393 &  & \multicolumn{1}{l|}{C(month)[T.11]} & \multicolumn{1}{c|}{0.67713} & \multicolumn{1}{c|}{[0.4265, 1.075]} & 0.098303 \\
\multicolumn{1}{|l|}{C(month)[T.12]} & \multicolumn{1}{c|}{0.52615} & \multicolumn{1}{c|}{[0.33201, 0.8338]} & \textbf{0.0062624} &  & \multicolumn{1}{l|}{C(month)[T.12]} & \multicolumn{1}{c|}{0.54741} & \multicolumn{1}{c|}{[0.34478, 0.86912]} & \textbf{0.010629} \\
\multicolumn{1}{|l|}{C(month)[T.1]} & \multicolumn{1}{c|}{0.55675} & \multicolumn{1}{c|}{[0.36286, 0.85422]} & \textbf{0.0073317} &  & \multicolumn{1}{l|}{C(month)[T.1]} & \multicolumn{1}{c|}{0.6124} & \multicolumn{1}{c|}{[0.3975, 0.94348]} & \textbf{0.026162} \\
\multicolumn{1}{|l|}{C(month)[T.2]} & \multicolumn{1}{c|}{0.81131} & \multicolumn{1}{c|}{[0.52933, 1.2435]} & 0.33719 &  & \multicolumn{1}{l|}{C(month)[T.2]} & \multicolumn{1}{c|}{0.82134} & \multicolumn{1}{c|}{[0.53633, 1.2578]} & 0.36539 \\
\multicolumn{1}{|l|}{C(month)[T.3]} & \multicolumn{1}{c|}{0.66791} & \multicolumn{1}{c|}{[0.43974, 1.0145]} & 0.058415 &  & \multicolumn{1}{l|}{C(month)[T.3]} & \multicolumn{1}{c|}{0.6953} & \multicolumn{1}{c|}{[0.45728, 1.0572]} & 0.089186 \\
\multicolumn{1}{|l|}{C(month)[T.4]} & \multicolumn{1}{c|}{0.77213} & \multicolumn{1}{c|}{[0.50313, 1.1849]} & 0.23665 &  & \multicolumn{1}{l|}{C(month)[T.4]} & \multicolumn{1}{c|}{0.82588} & \multicolumn{1}{c|}{[0.5372, 1.2697]} & 0.3833 \\
\multicolumn{1}{|l|}{C(month)[T.5]} & \multicolumn{1}{c|}{0.82495} & \multicolumn{1}{c|}{[0.54317, 1.2529]} & 0.36677 &  & \multicolumn{1}{l|}{C(month)[T.5]} & \multicolumn{1}{c|}{0.84278} & \multicolumn{1}{c|}{[0.55434, 1.2813]} & 0.42358 \\
\multicolumn{1}{|l|}{C(month)[T.6]} & \multicolumn{1}{c|}{0.52548} & \multicolumn{1}{c|}{[0.34164, 0.80823]} & \textbf{0.0033985} &  & \multicolumn{1}{l|}{C(month)[T.6]} & \multicolumn{1}{c|}{0.52213} & \multicolumn{1}{c|}{[0.33891, 0.80442]} & \textbf{0.0032097} \\
\multicolumn{1}{|l|}{C(month)[T.7]} & \multicolumn{1}{c|}{0.69954} & \multicolumn{1}{c|}{[0.45494, 1.0756]} & 0.10356 &  & \multicolumn{1}{l|}{C(month)[T.7]} & \multicolumn{1}{c|}{0.72898} & \multicolumn{1}{c|}{[0.474, 1.1211]} & 0.15005 \\
\multicolumn{1}{|l|}{C(month)[T.8]} & \multicolumn{1}{c|}{0.72658} & \multicolumn{1}{c|}{[0.47571, 1.1098]} & 0.13939 &  & \multicolumn{1}{l|}{C(month)[T.8]} & \multicolumn{1}{c|}{0.73416} & \multicolumn{1}{c|}{[0.4804, 1.122]} & 0.15325 \\
\multicolumn{1}{|l|}{C(month)[T.9]} & \multicolumn{1}{c|}{0.90581} & \multicolumn{1}{c|}{[0.59132, 1.3876]} & 0.64939 &  & \multicolumn{1}{l|}{C(month)[T.9]} & \multicolumn{1}{c|}{0.96224} & \multicolumn{1}{c|}{[0.62726, 1.4761]} & 0.86004 \\ \cline{1-4} \cline{6-9} 
\end{tabular}%
}
\caption{Results of fitting GLM for rationing, isolating weekly effects. }
\label{tab:rationing_discretion_mon}
\end{table}

\newpage
\subsection{Full DOW Configuration}
\label{appen:full_glm}
\begin{table}[H]
\resizebox{\columnwidth}{!}{%
\begin{tabular}{|lccc|c|lccc|}
\cline{1-4} \cline{6-9}
\multicolumn{4}{|c|}{\textbf{Block Features}} & \textbf{} & \multicolumn{4}{c|}{\textbf{Rolling Features}} \\ \cline{1-4} \cline{6-9} 
\multicolumn{1}{|c|}{\textbf{Variable}} & \multicolumn{1}{c|}{\textbf{Odds Ratio}} & \multicolumn{1}{c|}{\textbf{95\% CI}} & \textbf{p-value} & \textbf{} & \multicolumn{1}{c|}{\textbf{Variable}} & \multicolumn{1}{c|}{\textbf{Odds Ratio}} & \multicolumn{1}{c|}{\textbf{95\% CI}} & \textbf{p-value} \\ \cline{1-4} \cline{6-9} 
\multicolumn{4}{|c|}{\textit{Intercept and Resource Availability}} &  & \multicolumn{4}{c|}{\textit{Intercept and Resource Availability}} \\ \cline{1-4} \cline{6-9} 
\multicolumn{1}{|l|}{intercept} & \multicolumn{1}{c|}{0.49288} & \multicolumn{1}{c|}{[0.34955, 0.69498]} & 5.45E-05 & \multicolumn{1}{l|}{} & \multicolumn{1}{l|}{intercept} & \multicolumn{1}{c|}{0.42847} & \multicolumn{1}{c|}{[0.30393, 0.60405]} & 1.32E-06 \\
\multicolumn{1}{|l|}{THShare\_lag1} & \multicolumn{1}{c|}{1.0818} & \multicolumn{1}{c|}{[0.88742, 1.3188]} & 0.4365 &  & \multicolumn{1}{l|}{THShare\_lag1} & \multicolumn{1}{c|}{1.0487} & \multicolumn{1}{c|}{[0.85054, 1.2931]} & 0.65622 \\
\multicolumn{1}{|l|}{is\_holiday} & \multicolumn{1}{c|}{0.8777} & \multicolumn{1}{c|}{[0.55428, 1.3898]} & 0.57803 &  & \multicolumn{1}{l|}{is\_holiday} & \multicolumn{1}{c|}{0.86773} & \multicolumn{1}{c|}{[0.548, 1.374]} & 0.54515 \\
\multicolumn{1}{|l|}{ES\_assignments\_last\_week} & \multicolumn{1}{c|}{0.9942} & \multicolumn{1}{c|}{[0.9816, 1.007]} & 0.371 &  & \multicolumn{1}{l|}{ES\_assignments\_last\_7days} & \multicolumn{1}{c|}{0.99619} & \multicolumn{1}{c|}{[0.98357, 1.009]} & 0.55734 \\
\multicolumn{1}{|l|}{TH\_assignments\_last\_week} & \multicolumn{1}{c|}{0.98994} & \multicolumn{1}{c|}{[0.97407, 1.0061]} & 0.21986 &  & \multicolumn{1}{l|}{TH\_assignments\_last\_7days} & \multicolumn{1}{c|}{1.0077} & \multicolumn{1}{c|}{[0.99089, 1.0248]} & 0.37112 \\
\multicolumn{1}{|l|}{ES\_exits\_last\_week} & \multicolumn{1}{c|}{0.98911} & \multicolumn{1}{c|}{[0.97618, 1.0022]} & 0.10274 &  & \multicolumn{1}{l|}{ES\_exits\_last\_7Days} & \multicolumn{1}{c|}{0.99468} & \multicolumn{1}{c|}{[0.98214, 1.0074]} & 0.41011 \\
\multicolumn{1}{|l|}{TH\_exits\_last\_week} & \multicolumn{1}{c|}{1.008} & \multicolumn{1}{c|}{[0.99135, 1.0249]} & 0.34958 &  & \multicolumn{1}{l|}{TH\_exits\_last\_7Days} & \multicolumn{1}{c|}{0.99424} & \multicolumn{1}{c|}{[0.97772, 1.011]} & 0.49923 \\ \cline{1-4} \cline{6-9} 
\multicolumn{4}{|c|}{\textit{Day of the week (Treatment=Monday)}} &  & \multicolumn{4}{c|}{\textit{Day of the week (Treatment=Monday)}} \\ \cline{1-4} \cline{6-9} 
\multicolumn{1}{|l|}{C(dow)[Tuesday]} & \multicolumn{1}{c|}{0.91588} & \multicolumn{1}{c|}{[0.76209, 1.1007]} & 0.34885 &  & \multicolumn{1}{l|}{C(dow)[Tuesday]} & \multicolumn{1}{c|}{0.91558} & \multicolumn{1}{c|}{[0.7618, 1.1004]} & 0.34719 \\
\multicolumn{1}{|l|}{C(dow)[Wednesday]} & \multicolumn{1}{c|}{0.7612} & \multicolumn{1}{c|}{[0.62791, 0.92278]} & \textbf{0.0054661} &  & \multicolumn{1}{l|}{C(dow)[Wednesday]} & \multicolumn{1}{c|}{0.75969} & \multicolumn{1}{c|}{[0.62659, 0.92107]} & \textbf{0.0051646} \\
\multicolumn{1}{|l|}{C(dow)[Thursday]} & \multicolumn{1}{c|}{0.67564} & \multicolumn{1}{c|}{[0.55326, 0.8251]} & \textbf{0.00012032} &  & \multicolumn{1}{l|}{C(dow)[Thursday]} & \multicolumn{1}{c|}{0.67068} & \multicolumn{1}{c|}{[0.54909, 0.81919]} & \textbf{9.07E-05} \\
\multicolumn{1}{|l|}{C(dow)[Friday]} & \multicolumn{1}{c|}{0.88146} & \multicolumn{1}{c|}{[0.72684, 1.069]} & 0.19974 &  & \multicolumn{1}{l|}{C(dow)[Friday]} & \multicolumn{1}{c|}{0.87969} & \multicolumn{1}{c|}{[0.72527, 1.067]} & 0.19304 \\
\multicolumn{1}{|l|}{C(dow)[Saturday]} & \multicolumn{1}{c|}{0.71876} & \multicolumn{1}{c|}{[0.5519, 0.93607]} & \textbf{0.01428} &  & \multicolumn{1}{l|}{C(dow)[Saturday]} & \multicolumn{1}{c|}{0.72228} & \multicolumn{1}{c|}{[0.55468, 0.94052]} & \textbf{0.015725} \\
\multicolumn{1}{|l|}{C(dow)[Sunday]} & \multicolumn{1}{c|}{0.27845} & \multicolumn{1}{c|}{[0.13922, 0.55694]} & \textbf{0.00030059} &  & \multicolumn{1}{l|}{C(dow)[Sunday]} & \multicolumn{1}{c|}{0.27494} & \multicolumn{1}{c|}{[0.13752, 0.54964]} & \textbf{0.00025892} \\ \cline{1-4} \cline{6-9} 
\multicolumn{4}{|c|}{\textit{Month (Treatment=October)}} &  & \multicolumn{4}{c|}{\textit{Month (Treatment=October)}} \\ \cline{1-4} \cline{6-9} 
\multicolumn{1}{|l|}{C(month)[T.11]} & \multicolumn{1}{c|}{0.69936} & \multicolumn{1}{c|}{[0.50841, 0.96204]} & \textbf{0.02796} &  & \multicolumn{1}{l|}{C(month)[T.11]} & \multicolumn{1}{c|}{0.69233} & \multicolumn{1}{c|}{[0.50374, 0.95153]} & \textbf{0.023441} \\
\multicolumn{1}{|l|}{C(month)[T.12]} & \multicolumn{1}{c|}{0.57642} & \multicolumn{1}{c|}{[0.41786, 0.79514]} & \textbf{0.00078876} &  & \multicolumn{1}{l|}{C(month)[T.12]} & \multicolumn{1}{c|}{0.57458} & \multicolumn{1}{c|}{[0.41641, 0.79285]} & \textbf{0.00074363} \\
\multicolumn{1}{|l|}{C(month)[T.1]} & \multicolumn{1}{c|}{0.78438} & \multicolumn{1}{c|}{[0.58386, 1.0538]} & 0.10692 &  & \multicolumn{1}{l|}{C(month)[T.1]} & \multicolumn{1}{c|}{0.77007} & \multicolumn{1}{c|}{[0.57201, 1.0367]} & 0.085032 \\
\multicolumn{1}{|l|}{C(month)[T.2]} & \multicolumn{1}{c|}{0.93636} & \multicolumn{1}{c|}{[0.69723, 1.2575]} & 0.66209 &  & \multicolumn{1}{l|}{C(month)[T.2]} & \multicolumn{1}{c|}{0.91339} & \multicolumn{1}{c|}{[0.68077, 1.2255]} & 0.54582 \\
\multicolumn{1}{|l|}{C(month)[T.3]} & \multicolumn{1}{c|}{0.81471} & \multicolumn{1}{c|}{[0.60686, 1.0938]} & 0.17268 &  & \multicolumn{1}{l|}{C(month)[T.3]} & \multicolumn{1}{c|}{0.79055} & \multicolumn{1}{c|}{[0.58893, 1.0612]} & 0.1177 \\
\multicolumn{1}{|l|}{C(month)[T.4]} & \multicolumn{1}{c|}{0.85566} & \multicolumn{1}{c|}{[0.63565, 1.1518]} & 0.30397 &  & \multicolumn{1}{l|}{C(month)[T.4]} & \multicolumn{1}{c|}{0.84989} & \multicolumn{1}{c|}{[0.6313, 1.1442]} & 0.28364 \\
\multicolumn{1}{|l|}{C(month)[T.5]} & \multicolumn{1}{c|}{0.87962} & \multicolumn{1}{c|}{[0.65612, 1.1793]} & 0.39114 &  & \multicolumn{1}{l|}{C(month)[T.5]} & \multicolumn{1}{c|}{0.8747} & \multicolumn{1}{c|}{[0.65284, 1.172]} & 0.36977 \\
\multicolumn{1}{|l|}{C(month)[T.6]} & \multicolumn{1}{c|}{0.72761} & \multicolumn{1}{c|}{[0.53707, 0.98576]} & \textbf{0.040117} &  & \multicolumn{1}{l|}{C(month)[T.6]} & \multicolumn{1}{c|}{0.68746} & \multicolumn{1}{c|}{[0.50806, 0.93021]} & \textbf{0.015147} \\
\multicolumn{1}{|l|}{C(month)[T.7]} & \multicolumn{1}{c|}{0.86908} & \multicolumn{1}{c|}{[0.64825, 1.1651]} & 0.34815 &  & \multicolumn{1}{l|}{C(month)[T.7]} & \multicolumn{1}{c|}{0.86536} & \multicolumn{1}{c|}{[0.6461, 1.159]} & 0.33204 \\
\multicolumn{1}{|l|}{C(month)[T.8]} & \multicolumn{1}{c|}{0.81264} & \multicolumn{1}{c|}{[0.60164, 1.0976]} & 0.17618 &  & \multicolumn{1}{l|}{C(month)[T.8]} & \multicolumn{1}{c|}{0.79841} & \multicolumn{1}{c|}{[0.59181, 1.0771]} & 0.14059 \\
\multicolumn{1}{|l|}{C(month)[T.9]} & \multicolumn{1}{c|}{0.87036} & \multicolumn{1}{c|}{[0.64523, 1.174]} & 0.36321 &  & \multicolumn{1}{l|}{C(month)[T.9]} & \multicolumn{1}{c|}{0.87408} & \multicolumn{1}{c|}{[0.64843, 1.1783]} & 0.37706 \\ \cline{1-4} \cline{6-9} 
\end{tabular}%
}
\caption{Results of fitting GLM for overall discretion,full DOW configuration}
\label{tab:overall_discretion}
\end{table}
\begin{table}[H]
\resizebox{\columnwidth}{!}{%
\begin{tabular}{|lccc|c|lccc|}
\cline{1-4} \cline{6-9}
\multicolumn{4}{|c|}{\textbf{Block Features}} & \textbf{} & \multicolumn{4}{c|}{\textbf{Rolling Features}} \\ \cline{1-4} \cline{6-9} 
\multicolumn{1}{|c|}{\textbf{Variable}} & \multicolumn{1}{c|}{\textbf{Odds Ratio}} & \multicolumn{1}{c|}{\textbf{95\% CI}} & \textbf{p-value} & \textbf{} & \multicolumn{1}{c|}{\textbf{Variable}} & \multicolumn{1}{c|}{\textbf{Odds Ratio}} & \multicolumn{1}{c|}{\textbf{95\% CI}} & \textbf{p-value} \\ \cline{1-4} \cline{6-9} 
\multicolumn{4}{|c|}{\textit{Intercept and Resource Availability}} &  & \multicolumn{4}{c|}{\textit{Intercept and Resource Availability}} \\ \cline{1-4} \cline{6-9} 
\multicolumn{1}{|l|}{intercept} & \multicolumn{1}{c|}{0.28191} & \multicolumn{1}{c|}{[0.16787, 0.47341]} & 1.69E-06 & \multicolumn{1}{l|}{} & \multicolumn{1}{l|}{intercept} & \multicolumn{1}{c|}{0.23865} & \multicolumn{1}{c|}{[0.14182, 0.40162]} & 6.84E-08 \\
\multicolumn{1}{|l|}{THShare\_lag1} & \multicolumn{1}{c|}{1.0522} & \multicolumn{1}{c|}{[0.7856, 1.4094]} & 0.7327 &  & \multicolumn{1}{l|}{THShare\_lag1} & \multicolumn{1}{c|}{0.92917} & \multicolumn{1}{c|}{[0.68095, 1.2679]} & 0.64314 \\
\multicolumn{1}{|l|}{is\_holiday} & \multicolumn{1}{c|}{1.0059} & \multicolumn{1}{c|}{[0.52395, 1.9312]} & 0.98589 &  & \multicolumn{1}{l|}{is\_holiday} & \multicolumn{1}{c|}{1.0019} & \multicolumn{1}{c|}{[0.52211, 1.9225]} & 0.99553 \\
\multicolumn{1}{|l|}{ES\_assignments\_last\_week} & \multicolumn{1}{c|}{0.99998} & \multicolumn{1}{c|}{[0.98202, 1.0183]} & 0.9986 &  & \multicolumn{1}{l|}{ES\_assignments\_last\_7days} & \multicolumn{1}{c|}{0.99736} & \multicolumn{1}{c|}{[0.97935, 1.0157]} & 0.77631 \\
\multicolumn{1}{|l|}{TH\_assignments\_last\_week} & \multicolumn{1}{c|}{0.99794} & \multicolumn{1}{c|}{[0.97532, 1.0211]} & 0.85997 &  & \multicolumn{1}{l|}{TH\_assignments\_last\_7days} & \multicolumn{1}{c|}{1.0207} & \multicolumn{1}{c|}{[0.99667, 1.0453]} & 0.091831 \\
\multicolumn{1}{|l|}{ES\_exits\_last\_week} & \multicolumn{1}{c|}{0.96527} & \multicolumn{1}{c|}{[0.94666, 0.98424]} & \textbf{0.0003715} &  & \multicolumn{1}{l|}{ES\_exits\_last\_7Days} & \multicolumn{1}{c|}{0.97106} & \multicolumn{1}{c|}{[0.95312, 0.98933]} & \textbf{0.0020198} \\
\multicolumn{1}{|l|}{TH\_exits\_last\_week} & \multicolumn{1}{c|}{1.021} & \multicolumn{1}{c|}{[0.99726, 1.0453]} & \textbf{0.083309} &  & \multicolumn{1}{l|}{TH\_exits\_last\_7Days} & \multicolumn{1}{c|}{1.0213} & \multicolumn{1}{c|}{[0.99764, 1.0455]} & 0.077999 \\ \cline{1-4} \cline{6-9} 
\multicolumn{4}{|c|}{\textit{Day of the week (Treatment=Monday)}} &  & \multicolumn{4}{c|}{\textit{Day of the week (Treatment=Monday)}} \\ \cline{1-4} \cline{6-9} 
\multicolumn{1}{|l|}{C(dow)[Tuesday]} & \multicolumn{1}{c|}{0.96777} & \multicolumn{1}{c|}{[0.73978, 1.266]} & 0.8111 &  & \multicolumn{1}{l|}{C(dow)[Tuesday]} & \multicolumn{1}{c|}{0.96421} & \multicolumn{1}{c|}{[0.7373, 1.261]} & 0.79009 \\
\multicolumn{1}{|l|}{C(dow)[Wednesday]} & \multicolumn{1}{c|}{0.73086} & \multicolumn{1}{c|}{[0.54716, 0.97622]} & 0.033763 &  & \multicolumn{1}{l|}{C(dow)[Wednesday]} & \multicolumn{1}{c|}{0.72047} & \multicolumn{1}{c|}{[0.53954, 0.96207]} & \textbf{0.026283} \\
\multicolumn{1}{|l|}{C(dow)[Thursday]} & \multicolumn{1}{c|}{0.83432} & \multicolumn{1}{c|}{[0.6269, 1.1104]} & 0.21419 &  & \multicolumn{1}{l|}{C(dow)[Thursday]} & \multicolumn{1}{c|}{0.82063} & \multicolumn{1}{c|}{[0.61664, 1.0921]} & 0.1752 \\
\multicolumn{1}{|l|}{C(dow)[Friday]} & \multicolumn{1}{c|}{1.0351} & \multicolumn{1}{c|}{[0.78323, 1.3679]} & 0.80856 &  & \multicolumn{1}{l|}{C(dow)[Friday]} & \multicolumn{1}{c|}{1.0164} & \multicolumn{1}{c|}{[0.76888, 1.3436]} & 0.90906 \\
\multicolumn{1}{|l|}{C(dow)[Saturday]} & \multicolumn{1}{c|}{0.63362} & \multicolumn{1}{c|}{[0.41744, 0.96175]} & \textbf{0.032101} &  & \multicolumn{1}{l|}{C(dow)[Saturday]} & \multicolumn{1}{c|}{0.65366} & \multicolumn{1}{c|}{[0.43083, 0.99175]} & \textbf{0.045619} \\
\multicolumn{1}{|l|}{C(dow)[Sunday]} & \multicolumn{1}{c|}{1.1682} & \multicolumn{1}{c|}{[0.52533, 2.598]} & 0.70296 &  & \multicolumn{1}{l|}{C(dow)[Sunday]} & \multicolumn{1}{c|}{1.1188} & \multicolumn{1}{c|}{[0.5038, 2.4848]} & 0.78266 \\ \cline{1-4} \cline{6-9} 
\multicolumn{4}{|c|}{\textit{Month (Treatment=October)}} &  & \multicolumn{4}{c|}{\textit{Month (Treatment=October)}} \\ \cline{1-4} \cline{6-9} 
\multicolumn{1}{|l|}{C(month)[T.11]} & \multicolumn{1}{c|}{0.92421} & \multicolumn{1}{c|}{[0.56991, 1.4987]} & 0.74931 &  & \multicolumn{1}{l|}{C(month)[T.11]} & \multicolumn{1}{c|}{0.88905} & \multicolumn{1}{c|}{[0.549, 1.4397]} & 0.63255 \\
\multicolumn{1}{|l|}{C(month)[T.12]} & \multicolumn{1}{c|}{0.75408} & \multicolumn{1}{c|}{[0.46276, 1.2288]} & 0.25723 &  & \multicolumn{1}{l|}{C(month)[T.12]} & \multicolumn{1}{c|}{0.73494} & \multicolumn{1}{c|}{[0.45054, 1.1989]} & 0.2174 \\
\multicolumn{1}{|l|}{C(month)[T.1]} & \multicolumn{1}{c|}{1.2419} & \multicolumn{1}{c|}{[0.79685, 1.9354]} & 0.33863 &  & \multicolumn{1}{l|}{C(month)[T.1]} & \multicolumn{1}{c|}{1.1316} & \multicolumn{1}{c|}{[0.72329, 1.7703]} & 0.58827 \\
\multicolumn{1}{|l|}{C(month)[T.2]} & \multicolumn{1}{c|}{1.2952} & \multicolumn{1}{c|}{[0.82775, 2.0267]} & 0.25746 &  & \multicolumn{1}{l|}{C(month)[T.2]} & \multicolumn{1}{c|}{1.2044} & \multicolumn{1}{c|}{[0.77025, 1.8832]} & 0.41485 \\
\multicolumn{1}{|l|}{C(month)[T.3]} & \multicolumn{1}{c|}{1.1812} & \multicolumn{1}{c|}{[0.75061, 1.8588]} & 0.47158 &  & \multicolumn{1}{l|}{C(month)[T.3]} & \multicolumn{1}{c|}{1.086} & \multicolumn{1}{c|}{[0.69014, 1.7089]} & 0.72134 \\
\multicolumn{1}{|l|}{C(month)[T.4]} & \multicolumn{1}{c|}{1.1433} & \multicolumn{1}{c|}{[0.72673, 1.7987]} & 0.5624 &  & \multicolumn{1}{l|}{C(month)[T.4]} & \multicolumn{1}{c|}{1.0667} & \multicolumn{1}{c|}{[0.6772, 1.6802]} & 0.78064 \\
\multicolumn{1}{|l|}{C(month)[T.5]} & \multicolumn{1}{c|}{1.1386} & \multicolumn{1}{c|}{[0.72432, 1.7899]} & 0.57376 &  & \multicolumn{1}{l|}{C(month)[T.5]} & \multicolumn{1}{c|}{1.1082} & \multicolumn{1}{c|}{[0.70536, 1.7411]} & 0.65582 \\
\multicolumn{1}{|l|}{C(month)[T.6]} & \multicolumn{1}{c|}{1.1323} & \multicolumn{1}{c|}{[0.71137, 1.8022]} & 0.60041 &  & \multicolumn{1}{l|}{C(month)[T.6]} & \multicolumn{1}{c|}{1.0328} & \multicolumn{1}{c|}{[0.6505, 1.6399]} & 0.89102 \\
\multicolumn{1}{|l|}{C(month)[T.7]} & \multicolumn{1}{c|}{1.3544} & \multicolumn{1}{c|}{[0.87266, 2.1019]} & 0.17618 &  & \multicolumn{1}{l|}{C(month)[T.7]} & \multicolumn{1}{c|}{1.2901} & \multicolumn{1}{c|}{[0.83226, 1.9997]} & 0.25474 \\
\multicolumn{1}{|l|}{C(month)[T.8]} & \multicolumn{1}{c|}{0.98789} & \multicolumn{1}{c|}{[0.61995, 1.5742]} & 0.95911 &  & \multicolumn{1}{l|}{C(month)[T.8]} & \multicolumn{1}{c|}{0.95976} & \multicolumn{1}{c|}{[0.6034, 1.5266]} & 0.86228 \\
\multicolumn{1}{|l|}{C(month)[T.9]} & \multicolumn{1}{c|}{0.96974} & \multicolumn{1}{c|}{[0.60957, 1.5427]} & 0.89679 &  & \multicolumn{1}{l|}{C(month)[T.9]} & \multicolumn{1}{c|}{0.93984} & \multicolumn{1}{c|}{[0.59078, 1.4951]} & 0.79337 \\ \cline{1-4} \cline{6-9} 
\end{tabular}%
}
\caption{Results of fitting GLM for upgrades, full DOW configuration.}
\label{tab:updrading_discretion}
\end{table}
\begin{table}[H]
\resizebox{\columnwidth}{!}{%
\begin{tabular}{|lccc|c|lccc|}
\cline{1-4} \cline{6-9}
\multicolumn{4}{|c|}{\textbf{Block Features}} & \textbf{} & \multicolumn{4}{c|}{\textbf{Rolling Features}} \\ \cline{1-4} \cline{6-9} 
\multicolumn{1}{|c|}{\textbf{Variable}} & \multicolumn{1}{c|}{\textbf{Odds Ratio}} & \multicolumn{1}{c|}{\textbf{95\% CI}} & \textbf{p-value} & \textbf{} & \multicolumn{1}{c|}{\textbf{Variable}} & \multicolumn{1}{c|}{\textbf{Odds Ratio}} & \multicolumn{1}{c|}{\textbf{95\% CI}} & \textbf{p-value} \\ \cline{1-4} \cline{6-9} 
\multicolumn{4}{|c|}{\textit{Intercept and Resource Availability}} &  & \multicolumn{4}{c|}{\textit{Intercept and Resource Availability}} \\ \cline{1-4} \cline{6-9} 
\multicolumn{1}{|l|}{intercept} & \multicolumn{1}{c|}{0.93726} & \multicolumn{1}{c|}{[0.56892, 1.5441]} & 0.79918 & \multicolumn{1}{l|}{} & \multicolumn{1}{l|}{intercept} & \multicolumn{1}{c|}{0.80186} & \multicolumn{1}{c|}{[0.4874, 1.3192]} & 0.38466 \\
\multicolumn{1}{|l|}{THShare\_lag1} & \multicolumn{1}{c|}{1.0063} & \multicolumn{1}{c|}{[0.75549, 1.3403]} & 0.96593 &  & \multicolumn{1}{l|}{THShare\_lag1} & \multicolumn{1}{c|}{1.0967} & \multicolumn{1}{c|}{[0.81044, 1.484]} & 0.5499 \\
\multicolumn{1}{|l|}{is\_holiday} & \multicolumn{1}{c|}{0.77958} & \multicolumn{1}{c|}{[0.39582, 1.5354]} & 0.4715 &  & \multicolumn{1}{l|}{is\_holiday} & \multicolumn{1}{c|}{0.76008} & \multicolumn{1}{c|}{[0.38439, 1.503]} & 0.43032 \\
\multicolumn{1}{|l|}{ES\_assignments\_last\_week} & \multicolumn{1}{c|}{0.99617} & \multicolumn{1}{c|}{[0.97692, 1.0158]} & 0.69997 &  & \multicolumn{1}{l|}{ES\_assignments\_last\_7days} & \multicolumn{1}{c|}{1.0055} & \multicolumn{1}{c|}{[0.98611, 1.0252]} & 0.5819 \\
\multicolumn{1}{|l|}{TH\_assignments\_last\_week} & \multicolumn{1}{c|}{0.97677} & \multicolumn{1}{c|}{[0.95351, 1.0006]} & 0.055911 &  & \multicolumn{1}{l|}{TH\_assignments\_last\_7days} & \multicolumn{1}{c|}{0.98536} & \multicolumn{1}{c|}{[0.96065, 1.0107]} & 0.25512 \\
\multicolumn{1}{|l|}{ES\_exits\_last\_week} & \multicolumn{1}{c|}{1.0172} & \multicolumn{1}{c|}{[0.99797, 1.0368]} & 0.079868 &  & \multicolumn{1}{l|}{ES\_exits\_last\_7Days} & \multicolumn{1}{c|}{1.0264} & \multicolumn{1}{c|}{[1.0072, 1.046]} & \textbf{0.0068166} \\
\multicolumn{1}{|l|}{TH\_exits\_last\_week} & \multicolumn{1}{c|}{0.98276} & \multicolumn{1}{c|}{[0.95818, 1.008]} & 0.1783 &  & \multicolumn{1}{l|}{TH\_exits\_last\_7Days} & \multicolumn{1}{c|}{0.95115} & \multicolumn{1}{c|}{[0.92662, 0.97632]} & \textbf{0.00017158} \\ \cline{1-4} \cline{6-9} 
\multicolumn{4}{|c|}{\textit{Day of the week (Treatment=Monday)}} &  & \multicolumn{4}{c|}{\textit{Day of the week (Treatment=Monday)}} \\ \cline{1-4} \cline{6-9} 
\multicolumn{1}{|l|}{C(dow)[Tuesday]} & \multicolumn{1}{c|}{0.94873} & \multicolumn{1}{c|}{[0.72376, 1.2436]} & 0.70309 &  & \multicolumn{1}{l|}{C(dow)[Tuesday]} & \multicolumn{1}{c|}{0.97212} & \multicolumn{1}{c|}{[0.74027, 1.2766]} & 0.83881 \\
\multicolumn{1}{|l|}{C(dow)[Wednesday]} & \multicolumn{1}{c|}{0.8095} & \multicolumn{1}{c|}{[0.61511, 1.0653]} & 0.13146 &  & \multicolumn{1}{l|}{C(dow)[Wednesday]} & \multicolumn{1}{c|}{0.81628} & \multicolumn{1}{c|}{[0.61899, 1.0765]} & 0.15042 \\
\multicolumn{1}{|l|}{C(dow)[Thursday]} & \multicolumn{1}{c|}{0.60038} & \multicolumn{1}{c|}{[0.44575, 0.80863]} & \textbf{0.00078508} &  & \multicolumn{1}{l|}{C(dow)[Thursday]} & \multicolumn{1}{c|}{0.61642} & \multicolumn{1}{c|}{[0.45658, 0.83221]} & \textbf{0.0015816} \\
\multicolumn{1}{|l|}{C(dow)[Friday]} & \multicolumn{1}{c|}{0.77587} & \multicolumn{1}{c|}{[0.58381, 1.0311]} & 0.080311 &  & \multicolumn{1}{l|}{C(dow)[Friday]} & \multicolumn{1}{c|}{0.79535} & \multicolumn{1}{c|}{[0.59735, 1.059]} & 0.11698 \\
\multicolumn{1}{|l|}{C(dow)[Saturday]} & \multicolumn{1}{c|}{0.74236} & \multicolumn{1}{c|}{[0.51675, 1.0665]} & 0.10702 &  & \multicolumn{1}{l|}{C(dow)[Saturday]} & \multicolumn{1}{c|}{0.76533} & \multicolumn{1}{c|}{[0.53192, 1.1012]} & 0.14963 \\
\multicolumn{1}{|l|}{C(dow)[Sunday]} & \multicolumn{1}{c|}{0.02654} & \multicolumn{1}{c|}{[0.0036494, 0.19301]} & \textbf{0.00033708} &  & \multicolumn{1}{l|}{C(dow)[Sunday]} & \multicolumn{1}{c|}{0.026335} & \multicolumn{1}{c|}{[0.0036196, 0.19161]} & \textbf{0.00032845} \\ \cline{1-4} \cline{6-9} 
\multicolumn{4}{|c|}{\textit{Month (Treatment=October)}} &  & \multicolumn{4}{c|}{\textit{Month (Treatment=October)}} \\ \cline{1-4} \cline{6-9} 
\multicolumn{1}{|l|}{C(month)[T.11]} & \multicolumn{1}{c|}{0.6548} & \multicolumn{1}{c|}{[0.41116, 1.0428]} & 0.074522 &  & \multicolumn{1}{l|}{C(month)[T.11]} & \multicolumn{1}{c|}{0.66835} & \multicolumn{1}{c|}{[0.41937, 1.0652]} & 0.090163 \\
\multicolumn{1}{|l|}{C(month)[T.12]} & \multicolumn{1}{c|}{0.52942} & \multicolumn{1}{c|}{[0.33328, 0.84099]} & \textbf{0.0070731} &  & \multicolumn{1}{l|}{C(month)[T.12]} & \multicolumn{1}{c|}{0.5479} & \multicolumn{1}{c|}{[0.34428, 0.87196]} & \textbf{0.011152} \\
\multicolumn{1}{|l|}{C(month)[T.1]} & \multicolumn{1}{c|}{0.55142} & \multicolumn{1}{c|}{[0.35876, 0.84754]} & \textbf{0.0066427} &  & \multicolumn{1}{l|}{C(month)[T.1]} & \multicolumn{1}{c|}{0.60382} & \multicolumn{1}{c|}{[0.3911, 0.93224]} & \textbf{0.02281} \\
\multicolumn{1}{|l|}{C(month)[T.2]} & \multicolumn{1}{c|}{0.81319} & \multicolumn{1}{c|}{[0.52885, 1.2504]} & 0.34621 &  & \multicolumn{1}{l|}{C(month)[T.2]} & \multicolumn{1}{c|}{0.82308} & \multicolumn{1}{c|}{[0.53581, 1.2644]} & 0.37401 \\
\multicolumn{1}{|l|}{C(month)[T.3]} & \multicolumn{1}{c|}{0.66739} & \multicolumn{1}{c|}{[0.43831, 1.0162]} & 0.059432 &  & \multicolumn{1}{l|}{C(month)[T.3]} & \multicolumn{1}{c|}{0.69384} & \multicolumn{1}{c|}{[0.45511, 1.0578]} & 0.089357 \\
\multicolumn{1}{|l|}{C(month)[T.4]} & \multicolumn{1}{c|}{0.76344} & \multicolumn{1}{c|}{[0.49625, 1.1745]} & 0.21939 &  & \multicolumn{1}{l|}{C(month)[T.4]} & \multicolumn{1}{c|}{0.81723} & \multicolumn{1}{c|}{[0.53023, 1.2596]} & 0.3605 \\
\multicolumn{1}{|l|}{C(month)[T.5]} & \multicolumn{1}{c|}{0.84075} & \multicolumn{1}{c|}{[0.55147, 1.2818]} & 0.42013 &  & \multicolumn{1}{l|}{C(month)[T.5]} & \multicolumn{1}{c|}{0.85311} & \multicolumn{1}{c|}{[0.55915, 1.3016]} & 0.4611 \\
\multicolumn{1}{|l|}{C(month)[T.6]} & \multicolumn{1}{c|}{0.51989} & \multicolumn{1}{c|}{[0.33735, 0.80121]} & \textbf{0.0030332} &  & \multicolumn{1}{l|}{C(month)[T.6]} & \multicolumn{1}{c|}{0.52079} & \multicolumn{1}{c|}{[0.33747, 0.80371]} & \textbf{0.0032082} \\
\multicolumn{1}{|l|}{C(month)[T.7]} & \multicolumn{1}{c|}{0.68562} & \multicolumn{1}{c|}{[0.44465, 1.0572]} & 0.087569 &  & \multicolumn{1}{l|}{C(month)[T.7]} & \multicolumn{1}{c|}{0.71611} & \multicolumn{1}{c|}{[0.4642, 1.1047]} & 0.13114 \\
\multicolumn{1}{|l|}{C(month)[T.8]} & \multicolumn{1}{c|}{0.70711} & \multicolumn{1}{c|}{[0.46152, 1.0834]} & 0.11136 &  & \multicolumn{1}{l|}{C(month)[T.8]} & \multicolumn{1}{c|}{0.71716} & \multicolumn{1}{c|}{[0.46774, 1.0996]} & 0.12735 \\
\multicolumn{1}{|l|}{C(month)[T.9]} & \multicolumn{1}{c|}{0.90026} & \multicolumn{1}{c|}{[0.58582, 1.3835]} & 0.63173 &  & \multicolumn{1}{l|}{C(month)[T.9]} & \multicolumn{1}{c|}{0.9493} & \multicolumn{1}{c|}{[0.61692, 1.4608]} & 0.81296 \\ \cline{1-4} \cline{6-9} 
\end{tabular}%
}
\caption{Results of fitting GLM for rationing, full DOW configuration}
\label{tab:rationing_discretion}
\end{table}

\newpage
\newpage
\section{Features Used in SER-DT}
\label{appen:data_features}

A comprehensive description of all the features that we used in the creation of the decision trees is shown in the table below. All categorical variables were one-hot-encoded.

\begin{table}[H]
\resizebox{\textwidth}{!}{%
\begin{tabular}{|ll|}
\hline
\multicolumn{2}{|c|}{\textbf{Binary Features}}                                                                                                                                                      \\ \hline
\multicolumn{1}{|c|}{\textbf{Feature Name}}                  & \multicolumn{1}{c|}{\textbf{Description}}                                                                                            \\ \hline
\multicolumn{1}{|l|}{HUDChronicHomeless}                     & Whether or not he client is chronically homeless.                                                                                    \\
\multicolumn{1}{|l|}{Gender}                                 & In original documentation can take on one of  8 values. But in data is only either Male or Female.                                   \\
\multicolumn{1}{|l|}{SpousePresent}                          & Varaible indicating whether or not client has a spouse.                                                                              \\ \hline
\multicolumn{2}{|c|}{\textbf{Categorical Features}}                                                                                                                                                 \\ \hline
\multicolumn{1}{|c|}{\textbf{Feature Name}}                  & \multicolumn{1}{c|}{\textbf{Description}}                                                                                            \\ \hline
\multicolumn{1}{|l|}{PrimaryRace}                            & The primary race of client. Can be one of 7 categories.                                                                              \\
\multicolumn{1}{|l|}{Ethnicity}                              & The ethnicity of client. Can be one of 5 categories.                                                                                 \\
\multicolumn{1}{|l|}{PriorResidence}                         & The type of residence client had before entering system. Can be one of 25 categories.                                                \\
\multicolumn{1}{|l|}{VeteranStatus}                          & Varaible indicating whether or not client is a veteran. Can take one of 5 values.                                                    \\
\multicolumn{1}{|l|}{DisablingCondition}                     & Variable indicating whether or not client has a disabling condition. Can be one of 3 categories.                                     \\
\multicolumn{1}{|l|}{ReceivePhysicalDisabilityServices}      & Varaible indicating whether or not client received disability services. Can be one of 5 categories.                                  \\
\multicolumn{1}{|l|}{HasDevelopmentalDisability}             & Variable indicating whether or not client has developmental disability. Can be one of 5 categories.                                  \\
\multicolumn{1}{|l|}{ReceiveDevelopmentalDisabilityServices} & Varaible indicating whether or not client received services for developmental disability. Can be one of 5 categories.                \\
\multicolumn{1}{|l|}{HasChronicHealthCondition}              & Variable indicating whether or not client has a chronic health condition. Can be one of 5 categories.                                \\
\multicolumn{1}{|l|}{ReceiveChronicHealthServices}           & Varaible indicating whether or not client received services for a chronic health condition. Can be one of 5 categories.              \\
\multicolumn{1}{|l|}{HasHIVAIDS}                             & Variable indicating whether or not client has HIV\textbackslash{}AIDS. Can be one of 5 categories.                   \\
\multicolumn{1}{|l|}{ReceiveHIVAIDSServices}                 & Varaible indicating whether or not client received services for HIV\textbackslash{}AIDS. Can be one of 5 categories. \\
\multicolumn{1}{|l|}{HasMentalHealthProblem}                 & Variable indicating whether or not client has a mental health condition. Can be one of 5 categories.                                 \\
\multicolumn{1}{|l|}{ReceiveMentalHealthServices}            & Varaible indicating whether or not client received services related to mental health. Can be one of 5 categories.                    \\
\multicolumn{1}{|l|}{HasSubstanceAbuseProblem}               & Variable indicating whether or not client has a a substance abuse problem. Can be one of 5 categories.                               \\
\multicolumn{1}{|l|}{ReceiveSubstanceAbuseServices}          & Varaible indicating whether or not client received services related to substance abuse. Can be one of 5 categories.                  \\
\multicolumn{1}{|l|}{DomesticViolenceSurvivor}               & Variable indicating whether or not client is a survivor of domestic violence. Can be one of 5 categories.                            \\ \hline
\multicolumn{2}{|c|}{\textbf{Continuous Features}}                                                                                                                                                  \\ \hline
\multicolumn{1}{|c|}{\textbf{Feature Name}}                  & \multicolumn{1}{c|}{\textbf{Description}}                                                                                            \\ \hline
\multicolumn{1}{|l|}{Age}                                    & The age of the client                                                                                                                \\
\multicolumn{1}{|l|}{Calls}                                  & The number of calls made to hotline before entry                                                                                     \\
\multicolumn{1}{|l|}{Wait}                                   & The time (in days) elapsed since first call to hotline to intervention assignment                                                    \\
\multicolumn{1}{|l|}{RatioOfNumCallstoWaitTime}              & Ratio of number of calls to wait time. Calculated variable to measure `persistence'                                                  \\
\multicolumn{1}{|l|}{MonthlyAmount}                          & The total income (in dollars) of client per month.                                                                                   \\
\multicolumn{1}{|l|}{numMembers}                             & The number of members in the client's households.                                                                                    \\
\multicolumn{1}{|l|}{Children}                               & The number of children in client's household.                                                                                        \\
\multicolumn{1}{|l|}{Children 0-2}                           & The number of children aged 0 to 2 in client's household.                                                                            \\
\multicolumn{1}{|l|}{Children 3-5}                           & The number of children aged 3 to 5 in client's household.                                                                            \\
\multicolumn{1}{|l|}{Children 6-10}                          & The number of children aged 6 to 10 in client's household.                                                                           \\
\multicolumn{1}{|l|}{Children 11-14}                         & The number of children aged 11 to 14 in client's household.                                                                          \\
\multicolumn{1}{|l|}{Children 15-17}                         & The number of children aged 15 to 17 in client's household.                                                                          \\
\multicolumn{1}{|l|}{UnrelatedChildren}                      & The number of clidren in client's household not related to them.                                                                     \\
\multicolumn{1}{|l|}{UnrelatedAdults}                        & The number of adults in client's household not realted to them.                                                                      \\ \hline
\end{tabular}%
}
\end{table}

\end{document}